\def\dtheta{\mathrm{d}\theta}
\def\Sym{\mathrm{Sym}}
\def\bbR{\mathbb{R}}
\def\bbL{\mathbb{L}}
\def\mattwotwo#1#2#3#4{\left[\begin{array}{cc}#1 & #2 \cr #3 & #4\end{array}\right]}
\def\inner#1#2{\langle #1,#2\rangle}
\def\KL{\mathrm{KL}}
\def\tr{\mathrm{tr}}
\def\dx{\mathrm{d}x}
\def\dy{\mathrm{d}y}
\def\dX{\mathrm{d}X}
\def\dY{\mathrm{d}Y}
\def\bbH{\mathbb{H}}
\def\bbC{\mathbb{C}}
\def\SL{\mathrm{SL}}
\def\SO{\mathrm{SO}}
\def\dz{\mathrm{d}z}
\def\dzbar{\mathrm{d}\bar{z}}
\def\Im{\mathrm{Im}}
\def\dt{\mathrm{d}t}
\def\SU{\mathrm{SU}}
\def\tr{\mathrm{tr}}
\def\calP{\mathcal{P}}
\def\MInner#1#2{{[#1,#2]}}
\def\TV{\mathrm{TV}}
\def\calX{\mathcal{X}}
\newtheorem{definition}{Definition}
\newtheorem{theorem}{Theorem}
\newtheorem{proposition}{Proposition}
\newtheorem{lemma}{Lemma}
\newtheorem{example}{Example}
\newtheorem{corollary}{Corollary}
\newtheorem{remark}{Remark}
\title{Information measures and geometry of the hyperbolic exponential families of Poincar\'e and hyperboloid  distributions} 
\author{
Frank Nielsen\\
Sony Computer Science Laboratories Inc.\\
E-mail: {\tt Frank.Nielsen@acm.org}\\
\and 
Kazuki Okamura\\
Department of Mathematics, Faculty of Science, Shizuoka University\\
E-mail:  {\tt okamura.kazuki@shizuoka.ac.jp}
}
\date{}
\begin{document}

\maketitle

\begin{abstract}
Hyperbolic geometry has become popular in machine learning due to its capacity to embed hierarchical graph structures with low distortions for further downstream processing. 
It has thus become important to consider statistical models and inference methods for data sets grounded in hyperbolic spaces.
In this paper, we study various information-theoretic measures and the information geometry of the Poincar\'e distributions and the related hyperboloid distributions, and prove that their statistical mixture models are universal density estimators of smooth densities in hyperbolic spaces.
The Poincar\'e and the hyperboloid distributions are two types of hyperbolic probability distributions defined using different models of hyperbolic geometry. 
Namely, the Poincar\'e distributions form a triparametric bivariate exponential family whose sample space is the hyperbolic Poincar\'e upper-half plane  and natural parameter space is the open 3D convex cone of  two-by-two positive-definite matrices.
The family of hyperboloid distributions form another exponential family which has sample space the forward sheet of the two-sheeted unit hyperboloid modeling hyperbolic geometry.

In the first part, we prove that all Ali-Silvey-Csisz\'ar's $f$-divergences between 
Poincar\'e distributions  can be expressed using three canonical terms using 
the framework of maximal group invariance. 
We also show that the $f$-divergences between any two Poincar\'e distributions are asymmetric except when those distributions belong to a same leaf of a particular foliation of the parameter space.
We report a closed-form formula for the Fisher information matrix, 
the Shannon's differential entropy and the Kullback-Leibler divergence  
between such distributions using the framework of exponential families.
In the second part, we state the corresponding results for the exponential family of hyperboloid distributions by highlighting a  parameter correspondence between the Poincar\'e and the hyperboloid distributions. 
Finally, we describe a random  generator to draw variates and present two Monte Carlo  methods to 
estimate numerically $f$-divergences between hyperbolic distributions. 
\end{abstract}

\noindent {\bf Keywords}: exponential family, group action, maximal invariant, Csisz\'ar's $f$-divergence, Poincar\'e hyperbolic upper plane, foliation; Minkowski hyperboloid sheet, information geometry, statistical mixture models, statistical inference, clustering, expectation-maximization.

\tableofcontents

\section{Introduction}

\subsection{Statistical modeling in hyperbolic models}

Previous studies have shown that hyperbolic geometry\footnote{Hyperbolic geometry has constant negative curvature and the volume of hyperbolic balls increases exponentially with respect to their radii rather than polynomially as in Euclidean space.}~\cite{HG-2006} is very well suited for embedding tree graphs with low distortions~\cite{Sarkar-2011}  as hyperbolic Delaunay subgraphs of embedded tree nodes. 
Therefore, a recent trend in machine learning and data science is to embed discrete hierarchical graphs into continuous spaces with low distortions for further downstream tasks~\cite{Lorentz-2018,HyperbolicEmbeddings-2018,Ganea-2018,PoincareEmbedding-2017,suris2021learning,shimizu2021hyperbolic,song2022preliminary,montanaro2022rethinking,grover2022public,cho2022rotated}.
There exists many models of hyperbolic geometry~\cite{HG-2006} like the 
Poincar\'e disk or upper-half plane conformal models, the Klein non-conformal disk model, the Beltrami hemisphere model, the Minkowski  or Lorentz hyperboloid model, etc. 
We can transform one model of hyperbolic geometry to another model by a bijective mapping yielding a corresponding isometric embedding~\cite{cannon1997hyperbolic}. 
The hyperbolic plane can also be realized partially as a 2D surface in the 3D Euclidean space called the Beltrami pseudosphere~\cite{stillwell1996sources}.
Those various models of hyperbolic geometry (also refered in the literature as Lobachevskii space~\cite{andreev1970convex,troshin2017generalization} or Bolyai-Lobachevsky space~\cite{ungar2012mobius}) have pros and cons depending on the considered applications.
For example, Klein disk model is well-suited to compute hyperbolic Voronoi diagrams since the hyperbolic Voronoi bisectors are affine in Klein model~\cite{nielsen2010hyperbolic}. However, since Klein model is not conformal, the hyperbolic Voronoi diagram is often visualized in the Poincar\'e conformal disk model.
In the video\footnote{Available online at \url{https://www.youtube.com/watch?v=i9IUzNxeH4o}}~\cite{HVD-2014},
 the hyperbolic Voronoi diagrams is shown in the five main models of hyperbolic distribution.

\begin{figure}
\begin{center}
\includegraphics[width=0.8\columnwidth]{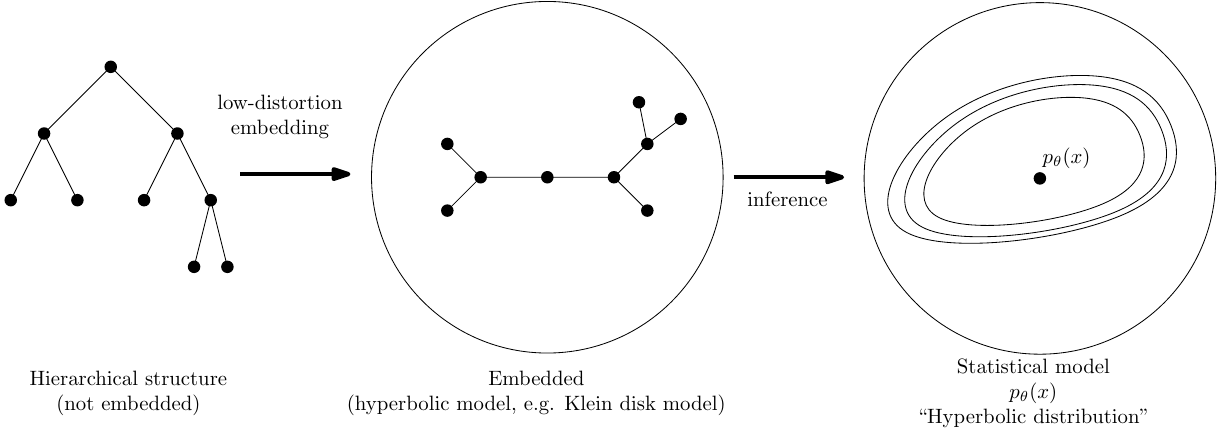}
\end{center}
\caption{Statistical modeling in a hyperbolic model: A hierarchical structure (left) is embedded in a hyperbolic model with low-distortion (middle). The point data set is then modeled by a probability distribution in the hyperbolic model (right).}
\label{fig:pipeline}
\end{figure}

As a byproduct of the low-distortion hyperbolic embeddings of hierarchical graphs, many embedded data sets are available in hyperbolic  model spaces, and those data sets need to be further processed.
Thus it is important to build statistical models and inference methods for these hyperbolic data sets using probability distributions with support hyperbolic  model spaces and to consider statistical mixtures in those spaces.
Figure~\ref{fig:pipeline} displays the pipeline to get probability distributions in hyperbolic geometry from hierarchical structures.
We quickly review the various families of probability distributions defined in hyperbolic models as follows:
\begin{itemize}
\item One of the very first proposed family of such ``hyperbolic distributions'' was proposed in 1981~\cite{HyperboloidDistribution-1981} and are nowadays commonly called the {\em hyperboloid distributions}.
The hyperboloid distributions are defined on the Minkowski upper sheet hyperboloid by analogy to the von-Mises Fisher distributions~\cite{vMFclustering-2005} which are defined on the sphere.

\item Barbaresco~\cite{GaussianPoincare-2019} defined the so-called Souriau-Gibbs distributions (2019) in the Poincar\'e disk (Eq. 57 of~\cite{GaussianPoincare-2019}, a natural exponential family) with its Fisher information metric coinciding with the Poincar\'e hyperbolic  Riemannian metric (the Poincar\'e unit disk is a homogeneous space with symmetry group $\SU(1,1)$). 

\item Nagano et al.~\cite{nagano2019wrapped} presented a pseudo-hyperbolic Gaussian (a hyperbolic wrapped normal distribution) whose density can be calculated analytically and differentiated with simple random variate generation algorithm (see also~\cite{cho2022rotated}).

\item Recently, Tojo and Yoshino~\cite{tojo-2020} (2020)  described a generic method~\cite{tojo2021harmonic} to build exponential families of distributions on homogeneous spaces which are invariant under the action of a Lie group, and illustrate their method with an example of an exponential family distribution supported on the Poincar\'e upper-half plane with natural parameter the cone of symmetric positive-definite matrices.
They also report its conjugate prior distributions, which are a piece of convenient mathematical machinery used in Bayesian statistics and Bayesian learning~\cite{ConjugatePrior-1979,agarwal2010geometric}. 
\end{itemize}

\subsection{Contributions and paper outline}

In information sciences, it is important to consider the dissimilarity between two different distributions in a family of distributions.
As an example, let us mention that dissimilarities between empirical distributions and statistical models allow to define
divergence-based estimators, which are robust to different types of noise compared to the maximum likelihood estimator (MLE) 
~\cite{eguchi2014duality,eguchi2022minimum}. 
In particular, bounded $f$-divergences such as the Hellinger divergence provide numerical stability~\cite{beran1977minimum} compared to MLE, which relies on the Kullback-Leibler divergence.  
The Kullback-Leibler divergence is unbounded and numerically unstable in practice.
$f$-divergences are also related to surrogate loss functions in machine learning and Bayes' risk in statistical decision theory~ \cite{nguyen2009surrogate}.

The notion of $f$-divergences between two distributions is one of the most principled methods to measure dissimilarity: $f$-divergences are invariant under sufficiency and can be easily symmetrized.
There are many possibilities for the generators $f$ beyond the specific one that yields the Kullback-Leiber divergence.
However, $f$-divergences are defined by integrals which can be  difficult to calculate in closed form or to evaluate numerically robustly. 
 
The main purpose of this paper is to investigate the $f$-divergences between Poincar\'e distributions and the $f$-divergences between hyperboloid distributions by using geometric approaches.
Our main result is that every $f$-divergence is a function of certain three terms of the parameters of each of the models by finding a maximal invariant of a group action.   
This structural result shows that all $f$-divergences can be defined by formula syntax trees with leaves labeled by these terms.
We corroborate our result by giving explicit formulae for several $f$-divergences.
Finally, we report the Riemannian metric tensors and the Amari-Chentsov cubic tensors which allow to derive the $\pm\alpha$-geometry in information geometry~\cite{IG-2016}, which includes the Riemannian Fisher-Rao geometry ($\alpha=0$).

We first consider in the first part various information-theoretic measures and geometry~\cite{IG-2016} of the family of Poincar\'e distributions in \S\ref{sec:Poincare}. 
Since the family of Poincar\'e distributions form an exponential family~\cite{EF-2014}, the underlying information-geometric structure of the family viewed as a smooth manifold is dually flat\footnote{Dually flat manifolds are also called Bregman manifods~\cite{nielsen2021geodesic,IG-2022} since they admit canonical Bregman divergences~\cite{Bregman-1967,IG-2016}.} ~\cite{IG-2016,shima2007geometry}.
We  prove using the 
method of group action maximal invariants~\cite{Eaton-1989} that all Ali-Silvey-Csisz\'ar's $f$-divergences~\cite{csiszar1964informationstheoretische,ali1966general} (including the Kullback-Leibler divergence) between Poincar\'e distributions can be expressed canonically as functions of three terms (Proposition~\ref{prop:maxinv} and Theorem~\ref{thm:fdivPoincare}).
We further show that on a particular foliation of the natural parameter space of the Poincar\'e distributions, the $f$-divergences are symmetric on the leaves of constant determinant\footnote{See also the foliation of the symmetric positive-definite space equipped with the affine-invariant trace metric~\cite[Section~4]{nielsen2012hyperbolic}. See also \cite{amari1997information,moakher2005differential} for general information for the foliation.} of the foliation. 
Then we report the Fisher information matrix of the Poincar\'e family (Eq.~\eqref{eq:fim-Poincare-2D}), 
the Amari-Chentsov cubic tensor for $\alpha$-geometry (with $0$-geometry corresponding to the Fisher-Rao geometry~\cite{IG-2016}) and various information-theoretic quantities like the differential entropy of a Poincar\'e distribution (Proposition~\ref{prop:h-Poincare} and Eq.~\eqref{eq:hPoincare}), or the Kullback-Leibler divergence (Proposition~\ref{prop:kld-Poincare} and Eq.~\eqref{eq:kldPoincare})  
between two Poincar\'e distributions.

In \S\ref{sec:hyperboloid}, we first define the hyperboloid distributions in arbitrary dimension and give their exponential family canonical decomposition. 
The Fisher information matrix of the Poincar\'e family (Eq.~\eqref{eq:fim-hyp-2D}) is reported. 
We also prove that mixtures of hyperboloid distributions are universal density approximators of smooth densities on the hyperboloid in \S\ref{sec:mixuda}.
A correspondence in \S\ref{sec:correspondence} between the upper-half plane and the Minkowski hyperboloid 2D sheet is exhibited. 
The $f$-divergences between the hyperboloid distributions are very geometric because we have a beautiful and clear maximal invariant, which has connections with the side-angle-side congruence criteria for triangles in hyperbolic geometry. 

In \S\ref{sec:numerics}, 
we provide two Monte Carlo approximation algorithms for calculating numerically the $f$-divergences.
Finally, we conclude in \S\ref{sec:concl}.

\section{The family of Poincar\'e distributions: An exponential family with Poincar\'e upper-half sample space}\label{sec:Poincare}
Tojo and Yoshino~\cite{tojo2021harmonic,tojo-2020} described a versatile method to build ``interesting'' exponential families of distributions on homogeneous spaces which are invariant under the action of a Lie group $G$ generalizing the construction in~\cite{cohen2015harmonic}.
They exemplify their method on the upper-half plane 
$$
\bbH:=\{(x,y)\in\bbR^2 :\ y>0\}
$$
by constructing  an exponential family with probability density functions invariant under the action of  Lie group $G=\SL(2,\bbR)$, the set of invertible matrices with unit determinant.
We shall call these distributions the Poincar\'e distributions since their sample space $\calX=\bbH$, and study this set of distributions as an exponential family~\cite{EF-2014}.

\subsection{An exponential family parameterized either by vectors or matrices}\label{sec:PoincareVM}

In this subsection, we first give the definition of the Poincar\'e distribution, and then, give two canonical forms of the density function of the Poincar\'e distribution as an  exponential family by using vectors or matrices as the parameters.

\renewcommand{\arraystretch}{2.5}
\begin{figure}
\centering
\begin{tabular}{|l|c|c|c|}\hline
$\bbH$ & \includegraphics[width=0.3\textwidth]{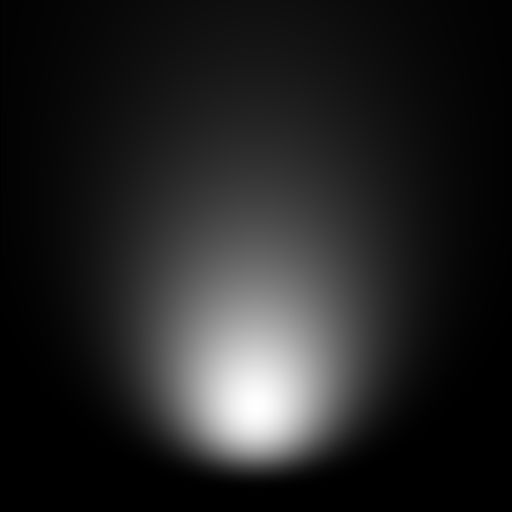}&
\includegraphics[width=0.3\textwidth]{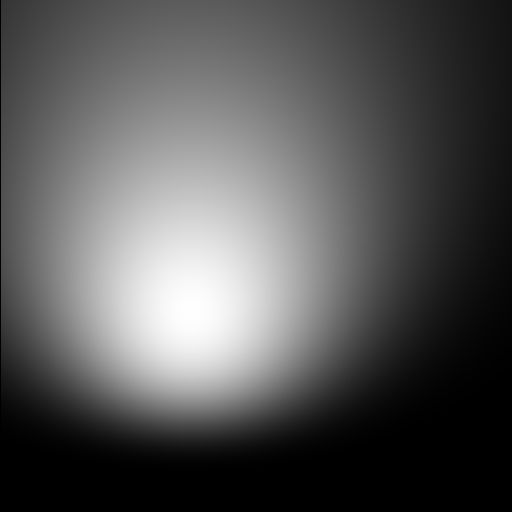}&
\includegraphics[width=0.3\textwidth]{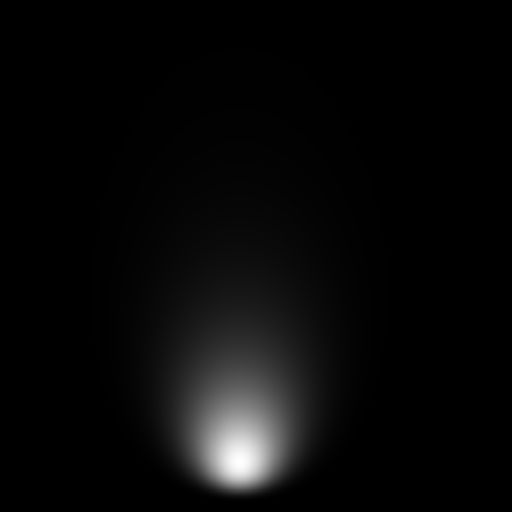}\\
$\theta\in\Theta$ & $\theta=\mattwotwo{1}{0}{0}{1}$ &
$\theta=\mattwotwo{\frac{1}{2}}{\frac{1}{4}}{\frac{1}{4}}{2}$ &
$\theta=\mattwotwo{2}{\frac{1}{4}}{\frac{1}{4}}{\frac{1}{2}}$\\
$\theta_v$ & $\theta_v=(1,0,0)$ & $\theta_v=\left(\frac{1}{2},\frac{1}{4},2\right)$
&  $\theta_v=\left(2,\frac{1}{4},\frac{1}{2}\right)$\\ \hline
\end{tabular}

\caption{Plotting the probability density function $p_\theta(x)$  of the Poincar\'e distribution on the Poincar\'e upper-half plane indexed by a $2\times 2$ positive-definite matrix $\theta$ (plotted for $x\in [-2,2]\times (0,2]\subset\bbH$). \label{fig:density}}
\end{figure}
\renewcommand{\arraystretch}{1.5}

The probability density function (pdf) of a Poincar\'e distribution~\cite{tojo-2020} expressed using a 3D vector parameter $\theta_v=(a,b,c)\in\bbR^3$ is given by
\begin{equation*}
p_{\theta_v}(x,y) :=  \frac{\sqrt{ac-b^2}\exp(2\sqrt{ac-b^2})}{\pi}\, \exp\left(- \frac{a(x^2+y^2)+2bx+c}{y}\right)\,\frac{1}{y^2},
\end{equation*}
where  $\theta_v$ belongs to the parameter space 
\begin{equation*}
\Theta_v:=\{(a,b,c)\in\bbR^3\ : a>0, c>0, \ ac-b^2>0\}.
\end{equation*}
The set $\theta_v$ forms an open 3D convex cone.
Thus the Poincar\'e distribution family has a 3D parameter cone space and the sample space is the hyperbolic upper plane. 
Figure~\ref{fig:density} displays three examples of density profiles of Poincar\'e distributions.

In general, an exponential family~\cite{EF-2014,nielsen2009statistical} $\calP=\{p_\theta(x)\ :\ \theta\in\Theta, \Theta \subset \bbR^n \}$ 
has its probability density functions which can be written canonically as
\begin{equation}\label{eq:canonical-exp-fam}
p_\theta(x)=\exp\left(\theta^\top t(x)-F(\theta)+k(x)\right),
\end{equation}
where $t(x)$ denotes the sufficient statistic vector, $\theta$ the natural parameter, $\theta^\top$ the transpose of $\theta$, $k(x)$ an auxiliary carrier term and $F(\theta)$ the log-normalizer (also called free energy or log-partition in statistical physics and cumulant function in statistics). 
 Let $h(x)=e^{k(x)}$ so that $p_\theta(x)=\exp\left(\theta^\top t(x)-F(\theta)\right)\, h(x)$.

Let $z=x+iy\in\bbC$ and consider the mapping 
\begin{equation*}
\alpha(z):=\mattwotwo{\sqrt{y}}{\frac{x}{\sqrt{y}}}{0}{\frac{1}{\sqrt{y}}}.
\end{equation*}
Then the Poincar\'e pdf rewrites as
\begin{equation*}
p_{\theta}(z) = \frac{\sqrt{|\theta|}e^{2\sqrt{|\theta|}}}{\pi} 
\exp\left( -\tr\left(\theta \alpha(z)^\top \alpha(z)\right) \right)\, \frac{1}{\Im(z)^2},  
\end{equation*}
where $\theta=\mattwotwo{a}{b}{b}{c}$ belongs to the space of symmetric positive-definite matrices $\Sym^+(2,\bbR)$, $|\theta|=ac-b^2$ denotes the determinant of $\theta$,  and $\tr(\cdot)$ denotes the matrix trace (e.g., $\tr(\theta)=a+c$). 

The family $\mathcal{P}=\{p_\theta\ :\ \theta\in \Sym^+(2,\bbR)\}$  is a bivariate exponential family of order $3$ with
log-normalizer which can be expressed either using the vector parameter $\theta_v$ or the positive-definite matrix $\theta$:
\begin{eqnarray*}
F(\theta_v) &=& \log\left(\frac{\pi}{\sqrt{ac-b^2}\exp(2\sqrt{ac-b^2})}\right),\\
F(\theta ) &=& \log\left(\frac{\pi}{\sqrt{|\theta|}\exp(2\sqrt{|\theta|})}\right),
\end{eqnarray*}

It will be useful in the remainder to define the square root of the determinant of $\theta$:
\begin{equation*}
D(\theta_v) =\sqrt{|\theta|}=\sqrt{ac-b^2}>0, \ \theta_v = (a,b,c). 
\end{equation*}
We have $F(\theta_v)=\left(\log \frac{\pi}{D}\right)-2D$.

A Bregman generator $F: \theta_v \in \bbR^{3} \rightarrow \bbR$ is a strictly convex $C^3$ real-valued function. 
A Bregman generator allows to build a dually flat space (i.e., a Hessian manifold with a single global chart) equipped with a canonical Bregman divergence:  
$$
B_F(\theta_v:\theta_{v}^{\prime})=F(\theta_v)-F(\theta_{v}^{\prime})-(\theta_v-\theta_{v}^{\prime})^\top \nabla F(\theta_{v}^{\prime}).
$$

Let $G(\theta_v) := F(\theta_v)+ w^{\top}\theta_v + \lambda $ for $w \in \bbR^3$  and $\lambda\in\bbR$.  
Then we have $B_F(\theta_v:\theta_{v}^{\prime})=B_G(\theta_v:\theta_{v}^{\prime})$.
That is, the Bregman generators $F$ and $G$ are equivalent modulo affine terms $w^{\top}\theta_v +\lambda$ ~\cite{banerjee2005clustering}.

Thus, by deleting $\pi$, we may consider 
$$
F(\theta_v) \equiv -\log D - 2D = -\frac{1}{2}\log |\theta| - \sqrt{|\theta|}. 
$$
It is well-known that $-\log |\theta|$ is a strictly convex function~\cite{dhillon2008matrix}.

The sufficient statistic vector $t_v(x,y)$ is 
$$
t_v(x,y)=-\left(\frac{x^2+y^2}{y},\frac{x}{y},\frac{1}{y}\right),  
$$
or in equivalent matrix form:
$$
t(z)=-\alpha(z)\alpha(z)^\top = -\frac{1}{y} \mattwotwo{x^2+y^2}{x}{x}{1}.
$$

The auxiliary carrier term is $h(x,y)=e^{k(x,y)}=\frac{1}{y^2}$, and $h(x,y)\dx\dy$ is a $\SL(2,\bbR)$-invariant measure, or equivalently 
$h(z)=e^{k(z)}=\frac{1}{y^2}$ and $h(z)\dz\dzbar$ is a $\SL(2,\bbR)$-invariant measure.

Thus we can write these Poincar\'e densities in the following vector/matrix canonical forms of exponential families~\cite{nielsen2009statistical}:
\begin{eqnarray}
p_{\theta_v}(x,y) &=&\exp\left(\inner{\theta_v}{t(x,y)}-F(\theta_v)+k(x,y)\right),\\
p_{\theta}(z) &=& \exp\left(\inner{\theta}{t(z)}-F(\theta)+k(z)\right),
\end{eqnarray}
where $\inner{v_1}{v_2}=v_1^\top v_2$ denotes the vector inner product (dot product) and $\inner{M_1}{M_2}$ denotes the matrix inner product $\inner{M_1}{M_2}=\tr(M_1 M_2^\top)$.
Table~\ref{tab:Pparam} summarizes the dual vector/matrix parameterizations of the Poincar\'e distributions.
The Poincar\'e distributions are related to the hyperboloid distributions~\cite{HyperboloidDistribution-1981} as mentioned in~\cite{tojo-2020,tojo-2022}.

\begin{table}
\centering
\scalebox{0.8}[0.8]{
\begin{tabular}{|l|l|l|}\hline
 & Vector form & Matrix form\\ \hline
Inner product & $\inner{v_1}{v_2}=v_1^\top v_2$&  $\inner{M_1}{M_2}=\tr(M_1 M_2^\top)$\\
Probability density & $\exp\left(\inner{\theta_v}{t(x,y)}-F(\theta_v)\right)h(x,y)$ & $\exp\left(\inner{\theta}{t(z)}-F(\theta)\right)h(z)$\\
Cumulant function $F(\theta)$ & $\log\left(\frac{\pi}{\sqrt{ac-b^2}\exp(2\sqrt{ac-b^2})}\right)$ & $\log\left(\frac{\pi}{\sqrt{|\theta|}\exp(2\sqrt{|\theta|})}\right)$\\
Natural parameter space & $\{(a,b,c)\in\bbR^3\ : a>0, c>0, \ ab-c^2>0\}$ & $\Sym^+(2,\bbR)$\\
Sufficient statistic $t(x,y)$ &  $-\left(\frac{x^2+y^2}{y},\frac{x}{y},\frac{1}{y}\right)$ & $-\frac{1}{y} \mattwotwo{x^2+y^2}{x}{x}{1}$\\ \hline
Carrier measure $h(x,y)$ & \multicolumn{2}{|c|}{$\frac{1}{y^2}\dx\dy=\frac{1}{y^2}\dz\dzbar$}\\  \hline  
Moment parameter  & $-\frac{1}{ac-b^2}\left(\frac{1}{2}+\sqrt{ac - b^2}\right) (c,-2b,a)^\top$ & $-\left(\frac{1}{2}+\sqrt{|\theta|}\right) \theta^{-\top}$\\ 
$\eta=E[t(x,y)]=\nabla F(\theta)$ & & \\ \hline
\end{tabular}
}
\caption{Dual vector/matrix parameterizations of the Poincar\'e distributions defined on the upper plane sample space $\bbH=\{z=x+iy\in\bbC\ :\ y>0\}$.\label{tab:Pparam}}
\end{table}

\subsection{Statistical $f$-divergences between Poincar\'e distributions}\label{sec:fdiv}

In this subsection, we show that every $f$-divergence between Poincar\'e distributions is a function of a triplet consisting of certain functions defined by the parameters and give explicit formulae for the squared Hellinger divergence and the Neyman chi-squared divergence.  

The $f$-divergence~\cite{csiszar1964informationstheoretische,ali1966general} induced by a convex generator $f:\bbR_{++}\rightarrow \bbR$ between two pdfs $p(x,y)$ and $q(x,y)$ defined on the  support $\bbH$ is defined by
$$
D_f[p:q] := \int_{(x,y) \in \bbH} p(x,y)\, f\left(\frac{q(x,y)}{p(x,y)}\right)\, \dx\,\dy. 
$$ 
Since $D_f[p:q]\geq f(1)$, we consider convex generators $f(u)$ such that $f(1)=0$. 
The class of $f$-divergences include the total variation distance ($f(u)=|u-1|$), 
the Kullback-Leibler divergence ($f(u)=-\log(u)$, and its two common symmetrizations, namely, the Jeffreys divergence and the Jensen-Shannon divergence), the squared Hellinger divergence, the Pearson and Neyman sided $\chi^2$-divergences, etc.

We state the notion of maximal invariant by following~\cite{Eaton-1989}:
Let $G$ be a group acting on a set $X$. 
We denote it by $(g,x) \mapsto gx$.

\begin{definition}[Maximal invariant of a group action~\cite{Eaton-1989}]
We say that a map $\varphi$ from $X$ to a set $Y$ is {\it maximal invariant} if it is invariant, namely,  
$\varphi(gx) = \varphi(x)$ for every $g \in G$ and $x \in X$, and furthermore, whenever $\varphi(x_1) = \varphi(x_2)$ there exists $g \in G$ such that $x_2 = gx_1$. 
Every invariant map is a function of a maximal invariant. 
Specifically, 
if a map $\psi$ from $X$ to a set $Z$ is invariant, 
then, there exists a unique map $\Phi$ from $\varphi(X)$ to $Z$ such that $\Phi \circ \varphi = \psi$. The spaces and functions are summarized in this commutative diagram: 

\begin{center}
\begin{tikzcd}
X \arrow[rd,"\psi"] \arrow[r, "\varphi"] & Y \arrow[d,"\Phi"] \\
& Z
\end{tikzcd}
\end{center}

\end{definition}

For each $x \in X$, 
we may consider its orbit $O_x:=\{gx \in X \ :\ g\in G\}$. 
A map is invariant when it is constant on orbits and maximal invariant when orbits have distinct map values. 
We denote by $A^{\top}$  
the transpose of a square matrix $A$ and $A^{-\top}$  the transpose of the inverse matrix $A^{-1}$ of a regular matrix $A$. 
It holds that $A^{-\top} = (A^{\top})^{-1}$. 
 
The action of $g=\mattwotwo{a}{b}{c}{d}\in\SL(2,\bbR)$ with $ad-bc=1$ (for $a,b,c,d\in\bbR$) on the sample space $z\in\bbH$ is a linear fractional transformation:
\begin{equation*}
z\mapsto \frac{az+b}{cz+d}.
\end{equation*}

Furthermore, the action $g.z=\frac{az+b}{cz+d}$ corresponds to the action of $g$ on $\theta \in \Sym^{+}(2, \mathbb R)$: 
\begin{equation*}
g.\theta:=g^{-\top}\times \theta\times g^{-1}.
\end{equation*}

For $g \in \SL(2,\bbR)$, 
we denote the pushforward measure of a measure $\nu$ on $\mathbb{H}$ by the map $z \mapsto g.z$ on $\mathbb{H}$ by $\nu \circ g^{-1}$. 
Then, by the proof of \cite[Proposition 1]{tojo-2020}, $p_{\theta} \circ g^{-1} = p_{g.\theta}$.  

Since $f$-divergence is invariant under the pushforward by a diffeomorphism, we obtain that 
\begin{proposition}\label{prop:inv-Poincare}
$$D_{f} \left[p_{\theta}:p_{\theta^{\prime}} \right] = D_{f} \left[p_{\theta} \circ g^{-1} :p_{\theta^{\prime}}  \circ g^{-1}\right] = D_f \left[p_{g^{-\top} \theta g^{-1}}:p_{g^{-\top} \theta^{\prime} g^{-1}} \right].$$
\end{proposition}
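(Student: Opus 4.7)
My approach is to reduce the claim to a single change of variables under the M\"obius action of $\SL(2,\bbR)$ on $\bbH$, namely $z\mapsto g\cdot z := \frac{az+b}{cz+d}$ for $g=\mattwotwo{a}{b}{c}{d}$, by exploiting three compatibilities of the Poincar\'e exponential family with this action. Set $\mathrm{d}\mu(z):=\dx\dy/y^2$, a $\SL(2,\bbR)$-invariant measure (classical fact), and write $\tilde p_\theta(z):=\exp(\inner{\theta}{t(z)}-F(\theta))$ for the density of $p_\theta\,\dx\dy$ with respect to $\mathrm{d}\mu$.

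The key step is the covariance of the sufficient statistic: using the identification $\bbH\simeq \SL(2,\bbR)/\SO(2)$ afforded by $\alpha$ (one checks that $\alpha(z)\cdot i=z$ under the M\"obius action, and that the stabilizer of $i$ is $\SO(2)$), any $g\in \SL(2,\bbR)$ gives $\alpha(g\cdot z) = g\,\alpha(z)\,k$ for some $k=k(g,z)\in \SO(2)$; since $kk^\top=I_2$,
\[ t(g\cdot z) = -\alpha(g\cdot z)\alpha(g\cdot z)^\top = -g\,\alpha(z)\alpha(z)^\top g^\top = g\,t(z)\,g^\top. \]
Combined with $|g^{-\top}\theta g^{-1}|=|\theta|$ (hence $F(g^{-\top}\theta g^{-1})=F(\theta)$, as $F$ depends on $\theta$ only through $|\theta|$) and the cyclic trace identity
\[ \inner{g^{-\top}\theta g^{-1}}{t(g\cdot z)} = \tr\!\bigl(g^{-\top}\theta g^{-1}\cdot g\,t(z)\,g^\top\bigr) = \tr(\theta\,t(z)) = \inner{\theta}{t(z)}, \]
this yields the pointwise pull-back identity $\tilde p_{g^{-\top}\theta g^{-1}}(g\cdot z) = \tilde p_\theta(z)$.

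To conclude, note that the ratio $p_{\theta^{\prime}}/p_\theta$ coincides with $\tilde p_{\theta^{\prime}}/\tilde p_\theta$ (the carrier factor $1/y^2$ cancels), so
\[ D_f[p_\theta:p_{\theta^{\prime}}] = \int_\bbH \tilde p_\theta(z)\,f\!\left(\frac{\tilde p_{\theta^{\prime}}(z)}{\tilde p_\theta(z)}\right)\mathrm{d}\mu(z). \]
Applying this to the right-hand side of the proposition and substituting $z=g\cdot w$, invariance of $\mathrm{d}\mu$ together with the pull-back identity above transforms the integrand into $\tilde p_\theta(w)\,f(\tilde p_{\theta^{\prime}}(w)/\tilde p_\theta(w))$, and the two divergences coincide.

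The main obstacle is the clean derivation of the covariance $t(g\cdot z)=g\,t(z)\,g^\top$; one could verify it by brute-force expansion of $(az+b)/(cz+d)$ into real and imaginary parts, but the $\SL(2,\bbR)/\SO(2)$ viewpoint is both shorter and makes transparent why the induced action on $\Sym^+(2,\bbR)$ must be $\theta\mapsto g^{-\top}\theta g^{-1}$. Note that although Proposition~\ref{prop:maxinv} already gives invariance of the maximal invariant $S$, this does not formally imply invariance of $D_f$, so the integral-change-of-variables argument above is genuinely needed.
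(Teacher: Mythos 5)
Your proof is correct and follows essentially the same route as the paper's: both arguments reduce the claim to the $\SL(2,\bbR)$-invariance of $\mu=\dx\dy/y^2$ together with the equivariance of the density under the joint action on $(\theta,z)$, the paper phrasing this as a pushforward identity ($p_\theta\circ g^{-1}=p_{g.\theta}$) and you as a change of variables in the integral. The only substantive difference is that you actually derive the key equivariance $\inner{g^{-\top}\theta g^{-1}}{t(g\cdot z)}=\inner{\theta}{t(z)}$ from the homogeneous-space factorization $\alpha(g\cdot z)=g\,\alpha(z)\,k$ with $k\in\SO(2)$, whereas the paper simply asserts the equivalent identity $\varphi(\theta,z)=\varphi(g.\theta,g.z)$, so your write-up is, if anything, more complete.
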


Therefore, it is natural to consider the following action of $\SL(2,\mathbb R)$ on the product of the parameter space $\Sym^{+}(2, \mathbb R)^2$. 
\begin{proposition}[Maximal invariant]\label{prop:maxinv}
Define a group action of $\SL(2, \mathbb R)$ on $\Sym^{+}(2, \mathbb R)^2$ by 
\begin{equation*}
\left(g, (\theta, \theta^{\prime})\right) \mapsto (g^{-\top} \theta g^{-1}, g^{-\top} \theta^{\prime} g^{-1}).  
\end{equation*}
Define a map $S: \Sym^+(2,\bbR)^2 \to (\bbR_{>0})^2 \times \mathbb{R}$ by 
\begin{equation*}
S(\theta, \theta^{\prime}) := \left(|\theta|, |\theta^{\prime}|, \textup{tr}(\theta^{\prime} \theta^{-1})\right).
\end{equation*}
Then, the map $S$ is maximal invariant of the group action. 
\end{proposition}

\begin{proof}
Observe that $S$ is invariant with respect to the group action:
$$
S\left(\theta, \theta^{\prime}\right) = S\left(g.\theta, g.\theta^{\prime}\right).
$$

Assume that $S\left(\theta^{(1)}, \theta^{(2)}\right) = S\left(\widetilde{\theta^{(1)}}, \widetilde{\theta^{(2)}}\right)$. 
We see that there exists $g_{\theta^{(1)}} \in \SL(2, \mathbb R)$ such that $g_{\theta^{(1)}}. \theta^{(1)} = g_{\theta^{(1)}}^{-\top} \theta^{(1)} g_{\theta^{(1)}}^{-1} = \sqrt{|\theta^{(1)}|} I_2$, where $I_2$ denotes the $2\times 2$ identity matrix.  
Then, $\theta^{(1)} =  \sqrt{|\theta^{(1)}|} g_{\theta^{(1)}}^{\top} g_{\theta^{(1)}}$. 
Let $\theta^{(3)} := g_{\theta^{(1)}}. \theta^{(2)}  = g_{\theta^{(1)}}^{-\top} \theta^{(2)} g_{\theta^{(1)}}^{-1}$.  
Then, 
\[ \textup{tr}\left(\theta^{(3)}\right) = \textup{tr}\left(\theta^{(2)} g_{\theta^{(1)}}^{-1} g_{\theta^{(1)}}^{-\top}\right) = \sqrt{|\theta^{(1)}|}\, \textup{tr}\left(\theta^{(2)} (\theta^{(1)})^{-1}\right). \]

We define $g_{\widetilde{\theta^{(1)}}}$ and $\widetilde{\theta^{(3)}}$ in the same manner. 
Then, $\textup{tr}\left(\theta^{(3)}\right) = \textup{tr}\left(\widetilde{\theta^{(3)}}\right)$ and $|\theta^{(3)}| = \left|\widetilde{\theta^{(3)}}\right|$. 
Hence the set of eigenvalues of $\theta^{(3)}$ and $\widetilde{\theta^{(3)}}$ are identical with each other. 
By this and $\theta^{(3)}, \widetilde{\theta^{(3)}} \in \textup{Sym}(2, \mathbb R)$, 
there exists $h \in \SO(2)$ such that $h.\theta^{(3)} = \widetilde{\theta^{(3)}}$.  
Hence $(hg_{\theta^{(1)}}). \theta^{(2)} = g_{\widetilde{\theta^{(1)}}} \widetilde{\theta^{(2)}}$. 
We also see that $$(hg_{\theta^{(1)}}). \theta^{(1)} = g_{\theta^{(1)}}. \theta^{(1)} = \sqrt{\left|\theta^{(1)}\right|}\, I_2 = \sqrt{\left|\widetilde{\theta^{(1)}}\right|}\, I_2 = g_{\widetilde{\theta^{(1)}}}. \widetilde{\theta^{(1)}}. $$
Thus we have 
$$
\left(\widetilde{\theta^{(1)}}, \widetilde{\theta^{(2)}}\right) = (g_{\theta^{(1)}}^{-1}hg_{\theta^{(1)}}).(\theta^{(1)}, \theta^{(2)}).
$$
\end{proof}

See \cite[Remark 1]{nielsen2023i} for an alternative proof of Proposition \ref{prop:maxinv}.  

By Propositions \ref{prop:inv-Poincare} and \ref{prop:maxinv}, 
\begin{theorem}[Canonical terms of the $f$-divergences between Poincar\'e distributions]\label{thm:fdivPoincare}
Every $f$-divergence between two Poincar\'e distributions $p_{\theta}$ and $p_{\theta^{\prime}}$ is a function of $ \left(|\theta|, |\theta^{\prime}|, \textup{tr}\left(\theta^{\prime} \theta^{-1}\right)\right)$ 
and invariant with respect to the $\SL(2,\bbR)$-action. 
\end{theorem}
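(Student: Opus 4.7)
The plan is to derive the theorem as a direct factorization consequence of the two immediately preceding propositions, using the universal property recorded in the definition of maximal invariant. Concretely, consider the $\SL(2,\bbR)$-action on $\Sym^+(2,\bbR)^2$ defined by $g.(\theta,\theta') := (g^{-\top}\theta g^{-1}, g^{-\top}\theta' g^{-1})$. Proposition~\ref{prop:maxinv} shows that $S(\theta,\theta') = (|\theta|, |\theta'|, \tr(\theta'\theta^{-1}))$ is a maximal invariant of this action, and Proposition~\ref{prop:inv-Poincare} shows that the map $\psi : \Sym^+(2,\bbR)^2 \to \bbR$ given by $\psi(\theta,\theta') := D_f[p_\theta : p_{\theta'}]$ is invariant under the same action.

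First I would invoke the general principle stated in the definition of maximal invariant: since $\psi$ is an invariant map and $S$ is a maximal invariant, $\psi$ must be constant on the orbits of the $G$-action, and because distinct orbits are separated by $S$ (this is precisely the ``maximality'' half of Proposition~\ref{prop:maxinv}), there exists a unique map $\Phi_f : S(\Sym^+(2,\bbR)^2) \to \bbR$ such that $\Phi_f \circ S = \psi$. Unpacking this identity gives
\begin{equation*}
D_f[p_\theta : p_{\theta'}] = \Phi_f\bigl(|\theta|, |\theta'|, \tr(\theta'\theta^{-1})\bigr),
\end{equation*}
which is exactly the ``function of three canonical terms'' claim. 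The invariance assertion of the theorem is then simply a restatement of Proposition~\ref{prop:inv-Poincare}.

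There is essentially no obstacle here beyond bookkeeping: the conceptual work has already been done in establishing the two prior propositions, and the theorem is a formal consequence of combining them through the commutative diagram $\Phi_f \circ \varphi = \psi$ displayed in the definition of maximal invariant. The only minor care needed is to specify the target space of $\Phi_f$ and to observe that the domain is the image $S(\Sym^+(2,\bbR)^2) \subset (\bbR_{>0})^2 \times \bbR$ rather than the full codomain, but this does not affect the conclusion that $D_f$ depends on $(\theta,\theta')$ only through $S(\theta,\theta')$.
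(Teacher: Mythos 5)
Your proposal is correct and is precisely the paper's argument: the theorem is stated as an immediate consequence of Propositions~\ref{prop:maxinv} and \ref{prop:inv-Poincare}, combined through the universal factorization property $\Phi \circ \varphi = \psi$ recorded in the definition of maximal invariant. No further comment is needed.
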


Since $|\theta+\theta^{\prime}| = |\theta| + |\theta^{\prime}| + |\theta| \textup{tr}\left(\theta^{\prime} \theta^{-1}\right)$, 
every $f$-divergence between two Poincar\'e distributions $p_{\theta}$ and $p_{\theta^{\prime}}$ is also a function of $ \left(|\theta|, |\theta^{\prime}|, |\theta+\theta^{\prime}|\right)$.

Recently, Tojo and Yoshino \cite{tojo2023} introduced a notion of deformed exponential family associated with their $G/H$ method in representation theory. 
As an example of it,  they considered a family of {\it deformed Poincar\'e distributions}. 
The statement of Theorem \ref{thm:fdivPoincare} above also holds for these distributions. 
See \cite[Theorem 2]{nielsen2023i} for details. 

\begin{remark}[Foliation]\label{rmk:symfdiv}
For $t > 0$, let $\Theta(t) := \{\theta \in \Theta : |\theta| = t\}$.
The sets $\Theta(t)$ for $t>0$ yields a foliation of the natural parameter space $\Theta$: $\Theta=\cup_{t>0} \Theta(t)$.
Then, 
every $f$-divergence between distributions on $\Theta(t)$ is symmetric:
$D_f(p_{\theta_1}:p_{\theta_2})=D_f(p_{\theta_2}:p_{\theta_1})$ for all $\theta_1,\theta_2\in\Theta(t)$. 
\end{remark}

We have exact formulae for the squared Hellinger divergence and the Neyman chi-squared divergence.  
\begin{proposition}\label{prop:SHNC-Poincare}
Let $\theta, \theta^{\prime} \in \Theta$. Then,\\
(i) (squared Hellinger divergence) 
Let $f(u) = (\sqrt{u} - 1)^2 / 2$. 
Then, 
\begin{equation*}
D_f [p_{\theta}: p_{\theta^{\prime}}] 
= 1 - \frac{2|\theta|^{1/4}|\theta^{\prime}|^{1/4} 
\exp\left( |\theta|^{1/2} + |\theta^{\prime}|^{1/2} \right)}{ |\theta+\theta^{\prime}|^{1/2}  \exp\left(|\theta+\theta^{\prime}|^{1/2}\right)}.
\end{equation*}
(ii) (Neyman chi-squared divergence) 
Let $f(u) := (u-1)^2$. 
Then, 
\[ D_f [p_{\theta}: p_{\theta^{\prime}}] 
= \begin{cases} \frac{|\theta^{\prime}| \exp(4|\theta^{\prime}|^{1/2})}{|\theta|^{1/2} |2\theta^{\prime}-\theta|^{1/2} \exp\left(2 (|\theta|^{1/2} + |2\theta^{\prime}-\theta|^{1/2}) \right)} - 1 & 2\theta^{\prime} - \theta \in \Theta \\ +\infty & \textup{ otherwise} \end{cases} \]
\end{proposition}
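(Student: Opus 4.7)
The plan is to prove both parts by the same standard exponential-family trick: for each choice of $f$, the integrand of the $f$-divergence is, up to a constant factor, an unnormalized Poincaré density with a modified natural parameter, so the integral reduces to the known normalizing constant $\pi/\bigl(\sqrt{|\phi|}\exp(2\sqrt{|\phi|})\bigr)$.

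For part (i), I start from the identity
\[
D_f[p_\theta:p_{\theta'}] = 1 - \int_{\bbH} \sqrt{p_\theta(z)\,p_{\theta'}(z)}\, \dz\dzbar,
\]
valid for the Hellinger generator $f(u)=(\sqrt{u}-1)^2/2$. Using the canonical matrix form $p_\theta(z) = \frac{\sqrt{|\theta|}\exp(2\sqrt{|\theta|})}{\pi}\exp\!\bigl(-\tr(\theta\,\alpha(z)^\top\alpha(z))\bigr)\frac{1}{y^2}$, the integrand factorizes as
\[
\sqrt{p_\theta p_{\theta'}}
= \frac{|\theta|^{1/4}|\theta'|^{1/4}\exp\!\bigl(\sqrt{|\theta|}+\sqrt{|\theta'|}\bigr)}{\pi}\,
\exp\!\left(-\tr\!\left(\tfrac{\theta+\theta'}{2}\alpha(z)^\top\alpha(z)\right)\right)\frac{1}{y^2}.
\]
Since $\frac{\theta+\theta'}{2}\in\Sym^+(2,\bbR)$, the remaining integral is the inverse of the normalizer of $p_{(\theta+\theta')/2}$. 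Here I will use the key determinant identity $|(\theta+\theta')/2|=|\theta+\theta'|/4$ (because $\theta+\theta'$ is $2\times 2$), which gives $2\sqrt{|(\theta+\theta')/2|}=\sqrt{|\theta+\theta'|}$, so that the integral equals $2\pi/\bigl(\sqrt{|\theta+\theta'|}\exp(\sqrt{|\theta+\theta'|})\bigr)$. Substituting and subtracting from $1$ yields the stated formula.

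For part (ii), I use $D_f[p_\theta:p_{\theta'}] = \int \frac{p_{\theta'}^2}{p_\theta}\,\dx\dy - 1$. The ratio computes explicitly to
\[
\frac{p_{\theta'}(z)^2}{p_\theta(z)} = \frac{|\theta'|\exp(4\sqrt{|\theta'|})}{\pi\,|\theta|^{1/2}\exp(2\sqrt{|\theta|})}\,\exp\!\bigl(-\tr((2\theta'-\theta)\alpha(z)^\top\alpha(z))\bigr)\,\frac{1}{y^2},
\]
where the extra $1/y^2$ coming from squaring the carrier and dividing is consumed by passing to the $\SL(2,\bbR)$-invariant measure $\dz\dzbar/y^2$. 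The hypothesis $2\theta'-\theta\in\Theta_v$ is exactly what ensures that $2\theta'-\theta\in\Sym^+(2,\bbR)$, so the remaining integral is $\pi/\bigl(\sqrt{|2\theta'-\theta|}\exp(2\sqrt{|2\theta'-\theta|})\bigr)$. Plugging this in and simplifying produces the stated expression.

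The calculations are essentially routine; the only subtle point is the correct handling of the determinant scaling in part (i), i.e., that halving a $2\times 2$ matrix divides its determinant by $4$ and hence its square-root determinant by $2$, which is precisely what converts the Poincaré normalizer of $p_{(\theta+\theta')/2}$ into a denominator featuring $\sqrt{|\theta+\theta'|}\exp(\sqrt{|\theta+\theta'|})$ rather than $\sqrt{|\theta+\theta'|}\exp(2\sqrt{|\theta+\theta'|})$. I expect the main risk of error to lie in bookkeeping of these scalar factors, not in any conceptual difficulty.
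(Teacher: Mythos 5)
Your proof is correct and follows essentially the same route as the paper's: recognize $\sqrt{p_\theta p_{\theta'}}$ and $p_{\theta'}^2/p_\theta$ as unnormalized Poincar\'e densities with parameters $(\theta+\theta')/2$ and $2\theta'-\theta$ respectively, and integrate using the known normalizer; the paper just writes everything in explicit $(a,b,c,x,y)$ coordinates rather than the trace form. Your bookkeeping of the determinant scaling $|(\theta+\theta')/2|=|\theta+\theta'|/4$, hence $2\sqrt{|(\theta+\theta')/2|}=\sqrt{|\theta+\theta'|}$, is exactly the right subtle point and is handled correctly.
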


We remark that $|\theta+\theta^{\prime}|$ and $|2\theta^{\prime}-\theta|$ can be expressed by using $|\theta|, |\theta^{\prime}|$, and $\textup{tr}(\theta^{\prime}\theta^{-1})$. 
Indeed, we have
\begin{eqnarray*}
|\theta+\theta^{\prime}| &=& |\theta| + |\theta^{\prime}| + |\theta|\, \textup{tr}(\theta^{\prime}\theta^{-1}),\\
|2\theta^{\prime}-\theta| &=& 4|\theta^{\prime}| + |\theta| -2|\theta|\,\textup{tr}(\theta^{\prime}\theta^{-1}).
\end{eqnarray*}

We can show Proposition \ref{prop:SHNC-Poincare} by straightforward calculations. 
The Kullback-Leibler divergence is considered in the following subsection. 

\begin{proof}
Let $\theta=\mattwotwo{a}{b}{b}{c} \in \Theta$ and $\theta^{\prime}=\mattwotwo{a^{\prime}}{b^{\prime}}{b^{\prime}}{c^{\prime}} \in \Theta$.  

(i) We remark that $\theta + \theta^{\prime} \in \Theta$.  
Then, 
\[ \sqrt{p_{\theta}(x,y) p_{\theta^{\prime}}(x,y)} \]
\[= \frac{|\theta|^{1/4} |\theta^{\prime}|^{1/4}  \exp\left(\left(|\theta|^{1/2} + |\theta^{\prime}|^{1/2}\right)/2\right)}{\pi y^2} \exp\left(-\frac{(a+a^{\prime})(x^2+y^2)/2 + (b+b^{\prime})x + (c+c^{\prime})/2}{y}\right). \]
It holds that 
\[ \int_{\bbH} \exp\left(-\frac{(a+a^{\prime})(x^2+y^2)/2 + (b+b^{\prime})x + (c+c^{\prime})/2}{y}\right) \frac{\dx\dy}{y^2} \]
\[= \frac{2\pi}{\sqrt{(a+a^{\prime})(c+c^{\prime}) - (b+b^{\prime})^2}\exp(\sqrt{(a+a^{\prime})(c+c^{\prime}) - (b+b^{\prime})^2}/2)}.\]
Now the assertion holds.

(ii) 
Assume $2\theta^{\prime} - \theta \in \Theta$. 
It holds that 
\[ \frac{p_{\theta^{\prime}}(x,y)^2}{p_{\theta}(x,y)} = \frac{|\theta^{\prime}|\exp(4|\theta^{\prime}|^{1/2})}{\pi |\theta|^{1/2}\exp(2|\theta|^{1/2})} 
\exp\left(-\frac{(2a^{\prime}-a)(x^2+y^2) + 2(2b^{\prime}-b)x + 2c^{\prime}-c}{y}\right).\]
By the assumption, 
it holds that 
\[ \int_{\bbH} \exp\left(-\frac{(2a^{\prime}-a)(x^2+y^2) + 2(2b^{\prime}-b)x + 2c^{\prime}-c}{y}\right) \frac{\dx\dy}{y^2} \]
\[= \frac{\pi}{\sqrt{(2a^{\prime}-a)(2c^{\prime}-c) - (2b^{\prime}-b)^2}\exp(\sqrt{(2a^{\prime}-a)(2c^{\prime}-c) - (2b^{\prime}-b)^2})}.\]

Assume $2\theta^{\prime} - \theta \notin \Theta$. 
Then, the assertion follows from Corollary \ref{cor:exfdiv-2d} (iii) and the correspondence principle in Section \ref{sec:correspondence} below. 
\end{proof}

\begin{example}
Let us report a numerical example. 
Let $\theta_1=\mattwotwo{1}{0}{0}{1}$ and $\theta_2=\mattwotwo{\frac{1}{2}}{0}{0}{2}$.
Both $p_{\theta_1}$ and $p_{\theta_2}$ belongs to the left $\Theta(1)$ since $|\theta_1|=|\theta_2|=1$.
Then $D_f(p_{\theta_1}:p_{\theta_2})=D_f(p_{\theta_2}:p_{\theta_1})=\frac{3}{4}$.
\end{example}

A family of conjugate priors of a density 
$p_\theta(x)=\exp(\inner{\theta}{t(x)}-F(\theta))h(x)$ of a $m$-order exponential family is an exponential family of order $m+1$ with sample space $\calX_c=\Theta$
 and canonical density
 $$
q_\vartheta(\theta)=\exp(\inner{\vartheta}{(\theta,-F(\theta))}-F_c(\vartheta))h_c(\theta). 
 $$ 
 Conjugate priors are used in Bayesian inference, and the conjugate prior of the Poincar\'e distribution is an exponential family of order $4$ which has been reported in~\cite{tojo-2020}. 
 See also \cite{ConjugatePrior-1979} for some background on conjugate priors.

\subsection{Kullback-Leibler divergences from reverse Bregman divergences}\label{sec:kld}

In this subsection, we give explicit formulae for the Kullback-Leibler divergence between Poincar\'e distributions by using the Bregman generator. 

The {\it Kullback-Leibler divergence} $D_{\KL}$ is the $f$-divergence obtained for the generator $f(u)=- \log u$. 
Let $D(\theta_1:\theta_2)$ be a divergence.
The reverse, backward or dual divergence $D^*$ of $D$ is defined by just swapping the parameter arguments as follows:  
$$
D^*(\theta:\theta^{\prime}) := D(\theta^{\prime}:\theta).
$$
In general, $D^* \ne D$ (\cite{azoury2001relative}). 
Since the Kullback-Leibler divergence between two densities of an exponential family amounts to a reverse Bregman divergence,  
we have
$$
D_{\KL}[p_{\theta}:p_{\theta^{\prime}}]=\int_{x=-\infty}^\infty\int_{y=0}^\infty p_{\theta}(x,y)\log \frac{p_{\theta}(x,y)}{p_{\theta^{\prime}}(x,y)} \dx\dy=B_F(\theta^{\prime}:\theta),
$$
where 
$$
F(\theta)\equiv -\frac{1}{2}\log |\theta| -2\sqrt{|\theta|}
$$ 
since Bregman generators are equivalent modulo affine terms.  
The log-normalizer can be expressed as a function of $D$: $F(\theta)=-\log D-2D + \log \pi$ where $D=D(\theta)=\sqrt{|\theta|} =D(a,b,c)=\sqrt{ac-b^2}$. 

\begin{figure}
\centering
\includegraphics[width=0.65\textwidth]{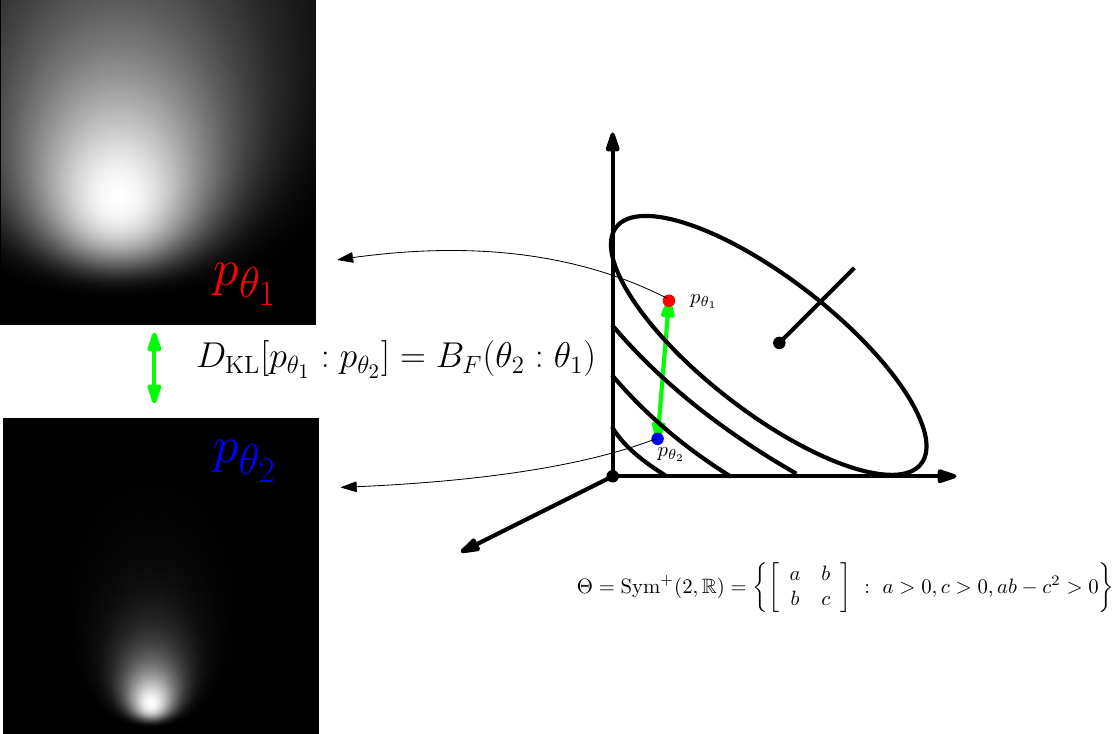}

\caption{Correspondence between calculating the Kullback-Leibler divergence between parametric densities   and the corresponding  parameter divergence on the cone parameter space}\label{fig:spddist}
\end{figure}

Figure~\ref{fig:spddist} schematically illustrates the correspondence between calculating a statistical divergence between parametric densities (e.g., the KLD) and the corresponding  parameter divergence on the parameter space (e.g., the natural parameter space $\Theta$ is the open convex symmetric positive-definite cone, SPD cone for short).

The matrix gradient of $F(\theta)$ is
\begin{equation*}
\nabla F(\theta)=-\left(\frac{1}{2}+\sqrt{|\theta|}\right) \theta^{-\top} =-\frac{1}{ac-b^2}\left(\frac{1}{2}+\sqrt{ac - b^2}\right) \, (c,-2b,a)^\top, \  \ \theta = (a,b,c)^\top,  
\end{equation*}
where $\theta^{-\top}=(\theta^{-1})^\top$ denotes the inverse transpose operator.

Thus we have 
\begin{eqnarray*}\label{eq:kldPoincare}
D_{\KL}[p_{\theta}:p_{\theta^{\prime}}]&=& B_F(\theta^{\prime}:\theta) := F(\theta^{\prime})-F(\theta)-\inner{\theta^{\prime}-\theta}{\nabla F(\theta)},\\
&=& \frac{1}{2}\log\frac{|\theta|}{|\theta^{\prime}|}+2\left(\sqrt{|\theta|}-\sqrt{|\theta^{\prime}|}\right)
+\left(\frac{1}{2}+\sqrt{|\theta|}\right)(\tr(\theta^{\prime}\theta^{-1})-2). 
\end{eqnarray*}

\begin{proposition}\label{prop:kld-Poincare}
The Kullback-Leibler divergence between two Poincar\'e distributions $p_{\theta}$ and $p_{\theta^{\prime}}$ is
\begin{equation*}
D_\KL[p_{\theta}:p_{\theta^{\prime}}] = \frac{1}{2}\log\frac{|\theta|}{|\theta^{\prime}|}+2\left(\sqrt{|\theta|}-\sqrt{|\theta^{\prime}|}\right)
+\left(\frac{1}{2}+\sqrt{|\theta|}\right)(\tr(\theta^{\prime}\theta^{-1})-2).
\end{equation*}
\end{proposition}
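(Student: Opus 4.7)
The plan is to exploit the fact, already laid out in the paragraph preceding the proposition, that the Kullback-Leibler divergence between two densities of an exponential family coincides with the reverse Bregman divergence associated with the log-normalizer. Concretely, I would start from the identity
\[
D_{\KL}[p_{\theta}:p_{\theta^{\prime}}]=B_F(\theta^{\prime}:\theta)=F(\theta^{\prime})-F(\theta)-\inner{\theta^{\prime}-\theta}{\nabla F(\theta)},
\]
where, modulo affine terms that cancel inside any Bregman divergence, $F(\theta)=-\tfrac{1}{2}\log|\theta|-2\sqrt{|\theta|}$ and $\nabla F(\theta)=-\bigl(\tfrac{1}{2}+\sqrt{|\theta|}\bigr)\theta^{-\top}$, both expressions being recorded earlier in this subsection. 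The remainder of the argument is purely algebraic: I substitute these two expressions into the Bregman formula and simplify.

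The difference $F(\theta^{\prime})-F(\theta)$ immediately produces $\tfrac{1}{2}\log\frac{|\theta|}{|\theta^{\prime}|}+2\bigl(\sqrt{|\theta|}-\sqrt{|\theta^{\prime}|}\bigr)$, which matches the first two summands in the target formula. For the inner-product term, I would use that $\theta\in\Sym^+(2,\bbR)$ implies $\theta^{-\top}=\theta^{-1}$; then, with the matrix inner product $\inner{M_1}{M_2}=\tr(M_1 M_2^{\top})$, a short calculation gives
\[
-\inner{\theta^{\prime}-\theta}{\nabla F(\theta)}=\Bigl(\tfrac{1}{2}+\sqrt{|\theta|}\Bigr)\tr\bigl((\theta^{\prime}-\theta)\theta^{-1}\bigr)=\Bigl(\tfrac{1}{2}+\sqrt{|\theta|}\Bigr)\bigl(\tr(\theta^{\prime}\theta^{-1})-2\bigr),
\]
where the second equality uses $\tr(\theta\theta^{-1})=\tr(I_2)=2$. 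Summing the three contributions reproduces the closed form stated in the proposition.

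No serious obstacle arises, since the structural identification of the KL divergence with a reverse Bregman divergence has already absorbed the nontrivial content (the integration against $p_\theta$, the mean-value computation $E_\theta[t(x,y)]=\nabla F(\theta)$, and the cancellation of the carrier term). The only point that requires attention is the orientation convention: because $D_{\KL}[p_{\theta}:p_{\theta^{\prime}}]$ corresponds to $B_F(\theta^{\prime}:\theta)$ rather than to $B_F(\theta:\theta^{\prime})$, the coefficient $\tfrac{1}{2}+\sqrt{|\theta|}$ must depend on the first argument $\theta$ while the product inside the trace is $\theta^{\prime}\theta^{-1}$; I would therefore keep explicit track of which argument plays the role of the expansion point throughout the substitution to avoid sign or swap errors.
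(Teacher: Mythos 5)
Your proposal is correct and follows exactly the paper's own derivation: the paper establishes $D_{\KL}[p_{\theta}:p_{\theta^{\prime}}]=B_F(\theta^{\prime}:\theta)$ with the affine-reduced generator $F(\theta)\equiv-\tfrac{1}{2}\log|\theta|-2\sqrt{|\theta|}$ and gradient $\nabla F(\theta)=-\bigl(\tfrac{1}{2}+\sqrt{|\theta|}\bigr)\theta^{-\top}$, then substitutes and simplifies just as you do. The algebra, including the use of $\tr(\theta\theta^{-1})=2$ and the symmetry of $\theta$, is handled identically.
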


Observe that the KLD is indeed a function of $D=|\theta|$, $D^{\prime}=|\theta^{\prime}|$ and $\tr(\theta^{\prime}\theta^{-1})$ as claimed in Corollary~\ref{thm:fdivPoincare}.

By Proposition \ref{prop:inv-Poincare}, it holds that for any $g\in\SL(2,\bbR)$, 
\begin{equation*}
D_\KL[p_{\theta}:p_{\theta^{\prime}}]=D_\KL[g.p_{\theta}:g.p_{\theta^{\prime}}]=D_\KL[p_{g.\theta}:p_{g.\theta^{\prime}}].
\end{equation*}

The Mahalanobis divergence between two vectors $v$ and $v^{\prime}$ induced by a symmetric positive-definite matrix $Q$ is
$$
\Delta_Q^2(v:v^{\prime})=(v-v^{\prime})^\top\, Q\, (v-v^{\prime}).
$$ 

The Mahalanobis distance \cite{mahalanobis1936generalised} $\Delta^2(N_1,N_2)$ between two normal distributions $N_1=N(\mu_1,\Sigma)$ and $N_2=N(\mu_2,\Sigma)$ is defined by 
$$
\Delta^2(N_1,N_2)=\Delta^2_{\Sigma^{-1}}(\mu_1,\mu_2).
$$

Notice that the only symmetric Bregman divergences are squared Mahalanobis divergences~\cite{BVD-2010}.

Thus the KLD between two Poincar\'e distributions is asymmetric in general. 
The situation is completely different from the Cauchy distribution whose $f$-divergences are always symmetric~\cite{nielsen2022f, Verdu2023}.

\subsection{Dual moment parameterization and Fenchel-Young divergences}\label{sec:fyd}

In this subsection, we discuss the information-geometric duality of Poincar\'e distributions, namely the canonical dual natural and moment parameterizations.    
The dual moment parameterization~\cite{EF-2014} $\eta$ of an exponential family distribution $p_\theta$ with log-normalizer $F(\theta)$  is given by $\eta=\nabla F(\theta)$.
The dual parameter $\eta$ is also called the moment parameter because the following property holds: $E_{p_\theta}[t(x)]=\nabla F(\theta)=\eta$.

Observe that $\eta$ is a symmetric negative-definite matrix, and therefore the moment parameter space is
$$
H= \left\{ \eta(\theta)=\nabla F(\theta)\ :\ \theta\in\Theta=\Sym^+(2,\bbR) \right\} = \Sym^{-}(2,\bbR).
$$

In general, the convex conjugate function $F^*(\eta)$ of $F(\theta)$ is defined by the Legendre-Fenchel transform:
$$
F^*(\eta) = \sup_{\theta\in\Theta} \inner{\theta}{\eta}-F(\theta).
$$
The supremum is unique because $\nabla_\theta^2 \left(\inner{\theta}{\eta}-F(\theta)\right)=-\nabla^2_\theta F(\theta)$ is negative-definite since $F(\theta)$ is strictly convex,
 and can be obtained by solving the equation $\nabla_\theta \left(\inner{\theta}{\eta}-F(\theta)\right)=0$, that is, $\nabla F(\theta)=\eta$.
Since $F$ is a smooth strictly convex function of Legendre type, there exists a unique global inverse function $(\nabla F)^{-1}$, and 
we have $\theta(\eta)=(\nabla F)^{-1}(\eta)$. 
The convex conjugate function is then given by $F^*(\eta)=\inner{\eta}{\theta(\eta)}-F(\theta(\eta))$, and we have the reciprocal gradients of convex conjugate functions: $\theta(\eta)=\nabla F^*(\eta)$ and $\eta(\theta)=\nabla F(\theta)$.

Consider the vector natural parameterization $\theta_v$ of Poincar\'e distributions.
The problem is to calculate $\theta_v(\eta_v)$ which amounts to compute the reciprocal gradient function $(\nabla F)^{-1}$
since we can then express $D(\theta_v)$ as $D(\eta_v)=D(\theta_v(\eta_v))$, and we shall get the convex conjugate function as
$$
F^*(\eta_v)=\log D(\theta_v (\eta_v))-1-\log\pi. 
$$

 The gradient with respect to $\theta_v=(a,b,c)$ is: 
\begin{eqnarray*}
\eta_v &=&
\left[
\begin{array}{l}
A\\
B\\
C\\
\end{array}
\right]
=\nabla F(\theta_v)=
\left[
\begin{array}{c}
-E_{p_{\theta}}\left[\frac{x^2+y^2}{y}\right]\\
-E_{p_{\theta}}\left[\frac{x}{y}\right]\\
-E_{p_{\theta}}\left[\frac{1}{y}\right]
\end{array}
\right],\\
&=&-\frac{1}{D^2(\theta_v)}\left(\frac{1}{2}+D(\theta_v)\right) \,
\left[
\begin{array}{l}
c\\
-2b\\
a
\end{array}
\right]
.
\end{eqnarray*}
Thus we need to express the natural parameter $\theta_v=(a,b,c)$ as a function of the moment parameter $\eta_v=(A,B,C)$.

\begin{proposition}[Dual moment parameterization]
Assume that $\eta_v = \nabla F (\theta_v)$ and let 
$\eta_v = \left[\begin{array}{c}
A\\
B\\
C\\
\end{array}
\right]$. 
Then, we have
$$
 \theta_v(\eta_v) =
\frac{1}{2(E(\eta_v)-E^2(\eta_v))}
\left[\begin{array}{c}
C\\
-B/2\\
A\\
\end{array}\right]= \left[\begin{array}{c}
a\\
b\\
c\\
\end{array}
\right]=\nabla F^*(\eta_v),
$$
where $E(\eta_v)=\sqrt{AC-B^2/4}$.
\end{proposition}

\begin{proof}
Let $D := D(\theta_v)$. 
Let $\alpha := \frac{1}{D^2} \left(\frac{1}{2} + D\right)$. 
Since $D > 0$, $\alpha > 0$. 
We see that 
\begin{equation}\label{eq:simple-expression-theta} 
\theta_v = -\frac{1}{\alpha} \left[\begin{array}{c}
C\\
-B/2\\
A\\
\end{array}
\right]. 
\end{equation}
Hence, 
$$
 D = \frac{\sqrt{4AC-B^2}}{2\alpha}.
$$
Therefore we have: 
$$
\alpha = \frac{1}{D^2} \left(\frac{1}{2} + D\right) = \frac{2\alpha (\alpha + \sqrt{4AC-B^2})}{4AC-B^2}.
$$
By this and $\alpha > 0$, 
$$
\alpha = \frac{4AC-B^2}{2} - \sqrt{4AC-B^2}.
$$
By this and Eq.~\eqref{eq:simple-expression-theta}, we obtain the assertion. 
\end{proof}

We have $\inner{\eta_v}{\theta_v(\eta_v)}=\frac{E(\eta_v)}{1-E(\eta_v)}$
and $\inner{\theta_v}{\eta_v(\theta_v)}=-1-2 D(\theta_v)$.

It follows that the convex conjugate function is given in closed-form by 
$F^*(\eta_v)=\log D(\theta(\eta_v))-1-\log\pi$.
(This convex conjugate function shall be used to ease the calculation of the differential entropy of Poincar\'e distributions as reported in~\S\ref{sec:h}.)

The dual natural and moment parameterizations can be used to express the Kullback-Leibler divergence between two Poincar\'e distributions  $p_{\theta}$ and $p_{\theta^{\prime}}$  into four different equivalent forms as follows:

\begin{eqnarray*}
D_\KL[p_{\theta}:p_{\theta^{\prime}}] &=& B_F(\theta^{\prime}:\theta)=B_{F^*}(\eta:\eta^{\prime}),\\
&=& A_{F,F^*}(\theta^{\prime}:\eta)= A_{F^*,F}(\eta,\theta^\prime),
\end{eqnarray*}
where $A_{F,F^*}(\theta,\eta')$ is the Fenchel-Young divergence~\cite{nielsen2021geodesic}, which expresses the Bregman divergence $B_F(\theta:\theta')$ using the mixed  natural and moment parameterizations: 
\begin{equation*}
A_{F,F^*}(\theta:\eta^{\prime})=F(\theta)+F^*(\eta^{\prime})-\inner{\theta}{\eta^{\prime}}
=B_F(\theta:\theta^\prime).
\end{equation*}
This is a consequence of the Fenchel-Moreau biconjugation theorem in \cite{rockafellar1970convex}, which holds for proper lower semicontinuous convex functions $F$. 

In information geometry, the natural coordinate system $\theta$ and the dual moment coordinate system $\eta$ of an exponential family are mutually orthogonal~\cite{IG-2016}. 
This can be attested by the Crouzeix identity~\cite{crouzeix1977relationship} of the Hessians of the conjugate functions: $\nabla^2 F(\theta)\nabla^2 F^*(\eta)=I$, where $I$ denotes the identity matrix.

Exponential families enjoy many nice properties~\cite{EF-2014}:
For example, the maximum likelihood estimator (MLE) of
 a set of $n$ identically and independently distributed observations $z_1, \ldots, z_n$ on $\bbH$ is given in the moment parameterization as follows: 
$$
\hat\eta=\frac{1}{n}\sum_{i=1}^n t(z_i).
$$
Distribution $p_{\theta(\hat\eta)}$ is called the observed point in information geometry~\cite{IG-2016}. 
The MLE is consistent, asymptotically normally distributed, and efficient~\cite{EF-2014} (i.e., matching the Cram\'er-Rao lower bound).
The corresponding natural parameter is $\hat{\theta}=\theta(\hat\eta)=\nabla F^*(\hat\eta)$ by the equivariance property of the MLE.

 Since both $\eta_v(\theta_v)=\nabla F(\theta_v)$ and $\theta_v(\eta_v)=\nabla F^*(\eta_v)$ are available in closed-form, we can express the exponential geodesics and mixture geodesics in closed-form~\cite{IG-2016} ($e$-geodesics and $m$-geodesics, respectively) for the Poincar\'e family.
The $e$-geodesic $\gamma(p_{\theta_1},p_{\theta_2};\lambda)$ passing through $p_{\theta_1}$ for $\lambda=0$ and $p_{\theta_2}$ for $\lambda=1$ is expressed as a linear interpolation in the natural parameterization:
$$
\theta(\gamma(p_{\theta_1},p_{\theta_2};\lambda))=(1-\lambda)\theta_1+\lambda\theta_2,\lambda\in [0,1].
$$
Similarly, the dual $m$-geodesic 
$\gamma^*(p_{\eta_1},p_{\eta_2};\lambda)$ passing through $p_{\eta_1}$ for $\lambda=0$ and $p_{\eta_2}$ for $\lambda=1$ is expressed as a linear interpolation in the moment parameterization:
$$
\eta(\gamma^*(p_{\eta_1},p_{\eta_2};\lambda))=(1-\lambda)\eta_1+\lambda\eta_2,\lambda\in [0,1].
$$
The dual $m$-geodesic can be expressed in the natural parameterization as follows:
$$
\theta(\gamma^*(p_{\theta_1},p_{\theta_2};\lambda))=
\theta(\eta(\gamma^*(p_{\theta(\eta_1)},p_{\theta(\eta_2)};\lambda)))=
\nabla F^{-1}\left((1-\lambda)\nabla F(\theta_1)+\lambda\nabla F(\theta_2)\right),\lambda\in [0,1].
$$
These dual $e$-geodesics and $m$-geodesics are often used by algorithms implemented on exponential family manifolds.
For example, we can approximate the smallest enclosing Kullback-Leibler ball of a set of Poincar\'e distributions using the algorithm in~\cite{nock2005fitting}.

\subsection{Fisher metric and $\alpha$-connection on Poincar\'e distributions}\label{sec:FIM} 

In this subsection, we discuss the Riemannian metric defined by the Fisher information matrix. 
Our purpose is to compute the Fisher information matrix (FIM) and the Amari-Chentsov cubic tensor explicitly.  

The Fisher information matrix for a regular exponential family~\cite{EF-2014} is defined as
\begin{equation*}
I_\theta(\theta) = -E_{p_\theta}\left[\nabla^2_\theta \log p_\theta(x)\right].
\end{equation*}

Thus for an exponential family of order $m=\dim(\Theta)$, we have $I_\theta(\theta)=\nabla^2 F(\theta)$.
Using the dual parameterization, we also have $I_\eta(\eta)=\nabla^2 F^*(\eta)$ and $I_\theta(\theta)I_\eta(\eta)=I_m$, where $I_m$ denotes the $m\times m$ identity matrix.

By using symbolic computing, 
we obtain the Fisher information matrix of the Poincar\'e distributions. 
In information geometry, the Fisher information matrix is used to define the Fisher information metric $g_F$ (expressed as the Fisher information matrix (FIM) $I_{\theta_v}$ in the local coordinate system $\theta_v$) on the Riemannian manifold of the Poincar\'e distributions $\calP$. The length element $\mathrm{d}s^2=\mathrm{d}\theta_v^\top \, I_{\theta_v}(\theta_v)
\, =\mathrm{d}\theta_v$ is independent of the parameterization.
The Rao distance~\cite{atkinson1981rao} is the geodesic distance of the Riemannian manifold $(\calP,g_F)$.

More generally, information geometry~\cite{IG-2016} considers the dual $\pm\alpha$-structures that consist of a  pair of torsion-free affine connections $\nabla^{\pm\alpha}$ coupled to the Fisher metric $g_F$ so that $\frac{\nabla^{\alpha}+\nabla^{-\alpha}}{2}=\nabla^g$, where $\nabla^g$ denotes the Levi-Civita metric connection induced by $g_F$. 
The Fisher-Rao geometry corresponds to the $0$-geometry.

The Fisher information matrix is given by 
\renewcommand\arraystretch{2}
\begin{equation}\label{eq:fim-Poincare-2D}
I(a,b,c)=
\begin{bmatrix}
{{c^2\,\sqrt{a\,c-b^2}+a\,c^3-b^2\,c^2}\over{\sqrt{a\,c-b^2}\,
 \left(2\,a^2\,c^2-4\,a\,b^2\,c+2\,b^4\right)}}&
 -{{b\,c\,\sqrt{a\,c-b^2}+a\,b\,c^2-b^3\,c}\over{\sqrt{a\,c-b^2}\,
 \left(a^2\,c^2-2\,a\,b^2\,c+b^4\right)}} &
-{{\left(a\,c-2\,b^2\right)\,\sqrt{a\,c-b^2}-b^2}\over{2\,a^2\,c^2-
 4\,a\,b^2\,c+2\,b^4}}\\
 -{{b\,c\,\sqrt{a\,c-b^2}+a\,b\,c^2-b^3\,c}\over{\sqrt{a\,c-b^2}\,
 \left(a^2\,c^2-2\,a\,b^2\,c+b^4\right)}} &
 {{2\,a\,c\,\sqrt{a\,c-b^2}+a\,c+b^2}\over{a^2\,c^2-2\,a\,b^2\,c+b^4
 }}&
 -{{a\,b\,\sqrt{a\,c-b^2}+a^2\,b\,c-a\,b^3}\over{\sqrt{a\,c-b^2}\,
 \left(a^2\,c^2-2\,a\,b^2\,c+b^4\right)}}\\
 -{{\left(a\,c-2\,b^2\right)\,\sqrt{a\,c-b^2}-b^2}\over{2\,a^2\,c^2-
 4\,a\,b^2\,c+2\,b^4}}&
-{{a\,b\,\sqrt{a\,c-b^2}+a^2\,b\,c-a\,b^3}\over{\sqrt{a\,c-b^2}\,
 \left(a^2\,c^2-2\,a\,b^2\,c+b^4\right)}}&
{{a^2\,\sqrt{a\,c-b^2}+a^3\,c-a^2\,b^2}\over{\sqrt{a\,c-b^2}\,
 \left(2\,a^2\,c^2-4\,a\,b^2\,c+2\,b^4\right)}}
\end{bmatrix}.
\end{equation}

We also have
$$
D_f[p_{a,b,c}:p_{a+\mathrm{d}a,b+\mathrm{d}b,c+\mathrm{d}c}] = \frac{f^{\prime\prime}(1)}{2}
 [\mathrm{d}a\ \mathrm{d}b\ \mathrm{d}c]\, I(a,b,c)\, \left[\begin{array}{l}\mathrm{d}a\\ \mathrm{d}b\\ \mathrm{d}c\end{array}\right].
$$

Notice that the Hessians of the conjugate functions  $\nabla^2 F(\theta)$ and $\nabla^2 F^*(\eta)$  correspond to the Fisher information matrices expressed in the natural $\theta$- and moment 
 $\eta$-parameterizations, respectively.

The $\alpha$-connections are defined by their Christoffel symbols $\Gamma^\alpha$ which can be found by the Christoffel symbols of the Levi-Civita connection and the Amari-Chentsov cubic tensor $T$ whose components are 
$T_{ijk}(\theta)=\frac{\partial}{\partial\theta_i}\frac{\partial}{\partial\theta_j}\frac{\partial}{\partial\theta_k} F(\theta)$.  
Using symbolic computing, 
we can calculate the $3^3=27$ components $T_{ijk}$ of $T$.
Since $T$ is totally symmetric, we have $T_{ijk}=T_{\sigma(i)\sigma(j)\sigma(k)}$ for any permutation $\sigma$ of $\{1,2,3\}$.
For example, we find that  
\begin{eqnarray*}
T_{111} &=& {{3\,c^3\,\sqrt{a\,c-b^2}-4\,c^3}\over{4\,a^3\,c^3-12\,a^2\,b^2\,c^
 2+12\,a\,b^4\,c-4\,b^6}},\\
 T_{222} &=&
{{6\,a^2\,b\,c^2+\sqrt{a\,c-b^2}\,\left(6\,a\,b\,c+2\,b^3\right)-6
 \,a\,b^3\,c}\over{\sqrt{a\,c-b^2}\,\left(a^3\,c^3-3\,a^2\,b^2\,c^2+3
 \,a\,b^4\,c-b^6\right)}}\\
 T_{333} &=&
-{{3\,a^3\,\sqrt{a\,c-b^2}+4\,a^3}\over{4\,a^3\,c^3-12\,a^2\,b^2\,c
 ^2+12\,a\,b^4\,c-4\,b^6}}\\
 T_{112} &=&
{{4\,b\,c^2\,\sqrt{a\,c-b^2}+3\,a\,b\,c^3-3\,b^3\,c^2}\over{\sqrt{a
 \,c-b^2}\,\left(2\,a^3\,c^3-6\,a^2\,b^2\,c^2+6\,a\,b^4\,c-2\,b^6
 \right)}}\\
T_{221} &=&
-{{a^2\,c^3+\sqrt{a\,c-b^2}\,\left(a\,c^2+3\,b^2\,c\right)+a\,b^2\,
 c^2-2\,b^4\,c}\over{\sqrt{a\,c-b^2}\,\left(a^3\,c^3-3\,a^2\,b^2\,c^2
 +3\,a\,b^4\,c-b^6\right)}}\\
T_{113} &=&
{{\sqrt{a\,c-b^2}\,\left(a\,c^2-4\,b^2\,c\right)-4\,b^2\,c}\over{4
 \,a^3\,c^3-12\,a^2\,b^2\,c^2+12\,a\,b^4\,c-4\,b^6}}\\
T_{331} &=&
{{\sqrt{a\,c-b^2}\,\left(a^2\,c-4\,a\,b^2\right)-4\,a\,b^2}\over{4
 \,a^3\,c^3-12\,a^2\,b^2\,c^2+12\,a\,b^4\,c-4\,b^6}}\\
T_{223} &=&
-{{a^3\,c^2+\sqrt{a\,c-b^2}\,\left(a^2\,c+3\,a\,b^2\right)+a^2\,b^2
 \,c-2\,a\,b^4}\over{\sqrt{a\,c-b^2}\,\left(a^3\,c^3-3\,a^2\,b^2\,c^2
 +3\,a\,b^4\,c-b^6\right)}}\\
T_{332} &=&
{{4\,a^2\,b\,\sqrt{a\,c-b^2}+3\,a^3\,b\,c-3\,a^2\,b^3}\over{\sqrt{a
 \,c-b^2}\,\left(2\,a^3\,c^3-6\,a^2\,b^2\,c^2+6\,a\,b^4\,c-2\,b^6
 \right)}}\\
T_{123} &=& {{-a^2\,b\,c^2+\sqrt{a\,c-b^2}\,\left(2\,a\,b\,c+2\,b^3\right)-a\,b
 ^3\,c+2\,b^5}\over{\sqrt{a\,c-b^2}\,\left(2\,a^3\,c^3-6\,a^2\,b^2\,c
 ^2+6\,a\,b^4\,c-2\,b^6\right)}}.
\end{eqnarray*}
See Appendix~A of arXiv 2205.13984 for the {\tt Maxima} source code of $T_{123}$. 
 
Note that this Fisher metric is invariant under the linear fractional transformation action of $\SL(2,\bbR)$.
Moreover, any $f$-divergence yields a scaled metric $f^{\prime\prime}(1) \, I_\theta(\theta)$ since
\begin{equation*}
I_g[p_{\theta}:p_{\theta+\dtheta}]=\frac{1}{2} f^{\prime\prime}(1) \dtheta^\top I_\theta(\theta) \dtheta + o\left(\|\dtheta\|^2\right),
\end{equation*}
when $\dtheta\rightarrow 0$, see~\cite{IG-2016}. 
In general, the Christoffel symbols $\bar{\Gamma}_{jk}^i$ of the metric Levi-Civita connection can be derived from the Fisher metric.  

Many Riemannian metrics and related Riemannian distances have been investigated and classified by invariance and other properties on the open cone of positive-definite matrices. 
We refer the reader to the PhD thesis~\cite{Thanwerdas-2022} for a panorama of such Riemannian metrics.

\subsection{Differential entropy}\label{sec:h}

In this subsection, we give an explicit formula for the differential entropy of a single Poincar\'e distribution and give an example.  

Recall that $k(x,y)$ is the auxiliary carrier term in \eqref{eq:canonical-exp-fam} and $e^{k(x,y)}=\frac{1}{y^2}$. 
The differential entropy $h[p_\theta]:=-\int p_\theta(x,y)\log p_\theta(x,y)\dx\dy $ can be calculated as follows~\cite{EntropyEF-2010}: 
\begin{equation*}
h[p_{\theta}]=-F^*(\eta)-E_{p_{\theta}}[k(x,y)]=-F^{*}(\eta)+2\,E_{p_{\theta}}[\log y].
\end{equation*}
Thus we need to calculate the term $E_{p_{\theta}}[\log y]$. 
Let $D := \sqrt{ac-b^2}$. 
We first integrate $\exp\left(- \frac{a(x^2+y^2)+2bx+c}{y}\right)$ with respect to $x$ and eliminate the variable $x$. 
Then, by the change of variable $y=\frac{D}{a} e^z$, 
\begin{equation*} 
E_{p_{\theta}}[\log y] = \sqrt{\frac{D}{\pi}} e^{2D} \int_{\mathbb R} \left(\log \frac{D}{a} + z\right) \exp\left( -2D \cosh(z) - \frac{z}{2} \right)  \dz. 
\end{equation*}
Let 
$K_{\nu}(z)$ be the modified Bessel function of second kind of order $\nu$ (~\cite{Gradshteyn2015}).  
Then, it has the following integral expression \cite[Eq.~(8.432.1)]{Gradshteyn2015}: 
\begin{equation*}
K_{\nu}(z) = \int_{0}^{\infty} \exp(-z \cosh(t)) \cosh(\nu t) dt, \ \ \textup{Re}(z) > 0. 
\end{equation*}
Then, 
\begin{eqnarray*}
\int_{\mathbb R} z \exp\left( -2D \cosh(z) - \frac{z}{2} \right) \dz &=& -2 \int_0^{\infty} z \exp(-2D \cosh(z)) \sinh\left(\frac{z}{2}\right) \dz,\\
 &=& -2 \frac{\partial}{\partial \nu}\biggm\vert_{\nu = 1/2} K_{\nu}(2D). 
\end{eqnarray*}
Therefore,
\begin{equation*}
E_{p_{\theta}}[\log y] = 2 \sqrt{\frac{D}{\pi}} e^{2D} \left(\log \left(\frac{D}{a}\right) K_{1/2}(2D) - \frac{\partial}{\partial \nu}\biggm\vert_{\nu = 1/2} K_{\nu}(2D) \right).
\end{equation*}

Since the equalities $K_{\frac{1}{2}}(x) = \sqrt{\frac{\pi}{2x}} e^{-x}$ and $\left.\frac{\partial}{\partial \nu}\right|_{\nu=\frac{1}{2}} K_{\nu}(x) = \sqrt{\frac{\pi}{2x}} e^{x} \Gamma(0,2x)$ holds for $x \in \bbR_{>0}$ (see \cite[Eq.~(8.469.3) and (8.486(1).21)]{Gradshteyn2015}, respectively, for example),
we obtain 
\begin{equation*}
E_{p_{\theta}}[\log y] = \log\left(\frac{D}{a}\right)-e^{4D}\,\Gamma(0,4D),
\end{equation*}
where
$\Gamma(a,x)=\int_x^\infty t^{a-1}e^{-t}\dt$ 
is the upper incomplete Gamma function.

Thus we have that 
\begin{proposition}\label{prop:h-Poincare}
The differential entropy $h[p_\theta]$ of a  Poincar\'e distribution $p_\theta$ is
\begin{equation}\label{eq:hPoincare}
h[p_\theta] = 1+\log(\pi D)  - 2\log a -2e^{4D}\Gamma(0,4D),
\end{equation}
where $D=\sqrt{|\theta|}=\sqrt{ac-b^2}$.
\end{proposition}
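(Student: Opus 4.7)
The plan is to combine two ingredients, both of which are essentially assembled in the paragraphs preceding the proposition: the closed form for the dual potential $F^*(\eta)$, and the closed form for $E_{p_\theta}[\log y]$. The bridge is the standard identity for the differential entropy of an exponential family
$$
h[p_\theta] \;=\; -F^*(\eta) - E_{p_\theta}[k(x,y)],
$$
with carrier log-density $k(x,y) = -2\log y$, so that $h[p_\theta] = -F^*(\eta) + 2\,E_{p_\theta}[\log y]$.

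For the first ingredient, I would use the dual parametrization from Section 2.4. Because $\theta(\eta) = D^*(\eta)\,\eta^{-1}$ with $D^*(\eta) = -(1/2 + D)$, one has $\inner{\theta(\eta)}{\eta} = 2D^*(\eta) = -1 - 2D$, and then Legendre--Fenchel duality gives $F^*(\eta) = \inner{\theta(\eta)}{\eta} - F(\theta)$. Substituting the full form $F(\theta) = \log\pi - \log D - 2D$ (the log-normalizer inclusive of the $\log\pi$ constant, which matters for entropy calculations) yields $-F^*(\eta) = 1 + \log\pi - \log D$ after cancellation of the $2D$ terms.

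For the second ingredient, the plan is to integrate out $x$ first by completing the square in the exponent — this reduces the density to a Gaussian in $x$ with known normalization — leaving a one-dimensional integral in $y$. The substitution $y = (D/a)e^z$ converts the remaining integrand into a symmetric form $\exp(-2D\cosh z - z/2)$, so $E_{p_\theta}[\log y]$ splits into a constant piece $\log(D/a)$ (proportional to $K_{1/2}(2D)$, which equals $(\pi/(4D))^{1/2}e^{-2D}$ and cancels the prefactor) and a piece proportional to $\partial_\nu K_\nu(2D)|_{\nu = 1/2}$. The identities (8.469.3) and (8.486.21) of Gradshteyn--Ryzhik collapse the latter into the exponential integral representation, producing exactly $-e^{4D}\Gamma(0,4D)$.

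Adding the two pieces gives
$$
h[p_\theta] = (1 + \log\pi - \log D) + 2\log(D/a) - 2e^{4D}\Gamma(0,4D) = 1 + \log(\pi D) - 2\log a - 2e^{4D}\Gamma(0,4D),
$$
as claimed. The main obstacle is the Bessel-function step: justifying the passage from $\partial_\nu K_\nu(2D)|_{\nu=1/2}$ to an incomplete Gamma, i.e., handling the order-derivative of a half-integer modified Bessel function. Once that identification is granted (via the cited Gradshteyn--Ryzhik formulas), every other step is bookkeeping — the conjugate computation is already done in Section 2.4, and the algebraic assembly of the two terms is immediate.
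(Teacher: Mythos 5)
Your proposal is correct and follows essentially the same route as the paper: the exponential-family identity $h[p_\theta]=-F^*(\eta)+2E_{p_\theta}[\log y]$, the Legendre computation giving $-F^*(\eta)=1+\log\pi-\log D$, and the evaluation of $E_{p_\theta}[\log y]$ by integrating out $x$, substituting $y=(D/a)e^z$, and reducing the order-derivative $\partial_\nu K_\nu(2D)\vert_{\nu=1/2}$ to $e^{4D}\Gamma(0,4D)$ via the cited Gradshteyn--Ryzhik identities. The one step you flag as the main obstacle is exactly the step the paper also delegates to those formulas, so nothing is missing.
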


Notice that $h[p_\theta]$ is not a function of $D$. 
It depends also on $a$.  
See Remark \ref{rmk:mde} below for the reason. 

\begin{example}\label{ex:poincare}
Let us report a numerical example.
We consider two Poincar\'e distributions defined by the following parameters:
$$
\theta=\mattwotwo{4}{\frac{1}{4}}{\frac{1}{4}}{\frac{1}{2}},\quad
\theta^{\prime}=\mattwotwo{4}{\frac{1}{2}}{\frac{1}{4}}{\frac{1}{2}}.
$$

We have the dual parameters:
$$
\eta\simeq\mattwotwo{-0.488}{0.244}{0.244}{-3.906},\quad
\eta^{\prime}\simeq\mattwotwo{-3.132}{0.391}{0.391}{-0.783}.
$$

The dual potential functions are evaluated as
$$
F(\theta)\simeq -3.114, \quad
F(\theta^{\prime})\simeq -1.904,
$$
and
$$
F^*(\eta)\simeq -0.669, \quad
F^*(\eta^{\prime})\simeq -1.032,
$$

We find that the forward and reverse Kullback-Leibler divergences are 
$$
D_\KL[p_{\theta}:p_{\theta^{\prime}}]\simeq 5.360,\quad
D_\KL[p_{\theta^{\prime}}:p_{\theta}]\simeq 8.573 
$$

The differential entropies of $p_{\theta}$ and $p_{\theta^{\prime}}$ are
$$
h[p_{\theta}]\simeq -0.608,\quad
h[p_{\theta^{\prime}}]\simeq 3.074
$$

Now choose the following transformation matrix $g$ of $\SL(2,\bbR)$:
$$
g=\mattwotwo{1}{1}{1}{2}.
$$
Then we have
$$
g.\theta=\mattwotwo{15.5}{-7.75}{-7.75}{4},\quad
g.\theta^{\prime}=\mattwotwo{3}{-2.25}{-2.25}{2}.
$$

We see that the invariance of the KLD by the action of $g$:
$$
D_\KL[g.p_{\theta}:g.p_{\theta^{\prime}}]=D_\KL[p_{g.\theta}:p_{g.\theta^{\prime}}]=D_\KL[p_{\theta}:p_{\theta^{\prime}}]\simeq 5.360.
$$

Figure~\ref{fig:exinvariance} displays the distributions $p_{\theta}$, $p_{\theta^{\prime}}$,
 $p_{g.\theta}$, and $p_{g.\theta^{\prime}}$.
\end{example}

\begin{figure}\label{fig:exinvariance}
\centering
\begin{tabular}{cc}
$p_{\theta}$ & $p_{\theta^{\prime}}$\\
\includegraphics[width=0.24\textwidth]{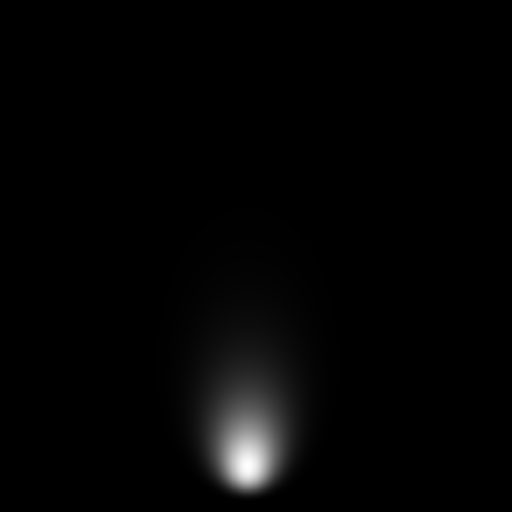} &
\includegraphics[width=0.24\textwidth]{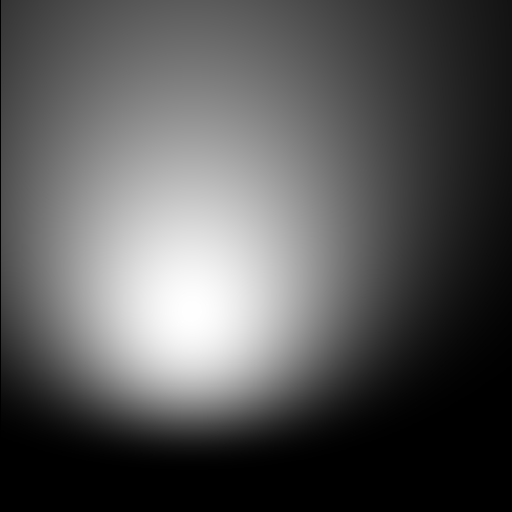}\\
$p_{g.\theta}$ & $p_{g.\theta^{\prime}}$\\
\includegraphics[width=0.24\textwidth]{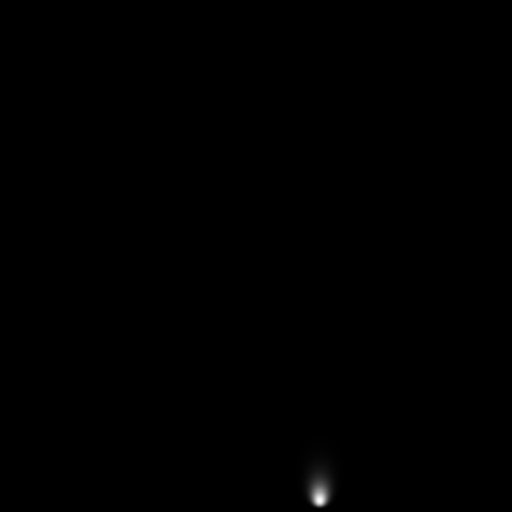} &
\includegraphics[width=0.24\textwidth]{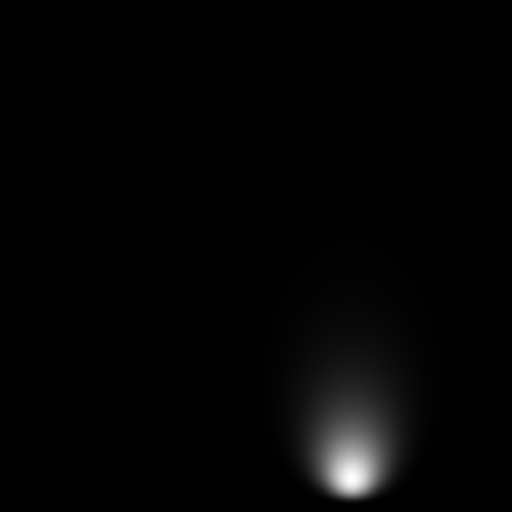}
\end{tabular}
\caption{The Poincar\'e distributions $p_{\theta}$ and $p_{\theta^{\prime}}$ with their corresponding distributions $p_{g.\theta}$and $p_{g.\theta^{\prime}}$ obtained by the action of $g\in\SL(2,\bbR)$. 
The Kullback-Leibler divergence is preserved: $D_\KL[p_{g.\theta}:p_{g.\theta^{\prime}}]=D_\KL[p_{\theta}:p_{\theta^{\prime}}]$}
\end{figure}

\subsection{The skew Bhattacharyya distances and Chernoff information}\label{sec:skewBhat}

The {\it $\alpha$-Bhattacharyya divergence}~\cite{chernoff1952measure,BR-2011} between two probability distributions with densities $p(x,y)$ and $q(x,y)$ on the support $\bbH$ is defined by
\begin{equation}
D_\alpha[p:q] := -\log \int_{\bbH} p^\alpha(x,y) q^{1-\alpha}(x,y) \dx\dy.
\end{equation}

When the densities belong the same exponential family with cumulant function $F(\theta)$, i.e.,
$p=p_{\theta}$ and $q=p_{\theta^{\prime}}$ of an exponential family,
the $\alpha$-Bhattacharyya divergence amounts to a skew Jensen divergence:
\begin{equation}
D_\alpha[p_{\theta}:p_{\theta^{\prime}}]=J_{F,\alpha}(\theta:\theta^{\prime}),
\end{equation}
where
\begin{equation}
J_{F,\alpha}(\theta:\theta^{\prime})=\alpha F(\theta)+(1-\alpha)F(\theta^{\prime})-F(\alpha\theta+(1-\alpha)\theta^{\prime}).
\end{equation}

Moreover, the KLD between densities of an exponential family  tends asymptotically to a scaled  skewed Jensen divergence~\cite{BR-2011}:
\begin{equation}
D_\KL[p_{\theta}:p_{\theta^{\prime}}] =\lim_{\alpha\rightarrow 0} \frac{1}{\alpha(1-\alpha)}  J_{F,\alpha}(\theta:\theta^{\prime})
\end{equation}
with the skewed Jensen divergence for the Poincar\'e family is
\begin{eqnarray}\label{eq:BhatPoincare}
J_{F,\alpha}(\theta:\theta^{\prime})
&=& \frac{1}{2}\log\left( \frac{|(1-\alpha)\theta+\alpha\theta^{\prime}|}{|\theta|^{1-\alpha}\, |\theta^{\prime}|^\alpha}\right)\nonumber\\
&&
+2\left(\sqrt{|(1-\alpha)\theta+\alpha\theta^{\prime}|}-((1-\alpha)\sqrt{|\theta|}+\alpha\sqrt{|\theta^{\prime}|})\right).\label{eq:skewjd}
\end{eqnarray}

By choosing $\alpha$ small (say, $\alpha=\epsilon=0.01$), we can approximate the KLD by a scaled $\alpha$-skewed Jensen divergence which does not require to calculate the gradient term $\nabla F(\theta)$:
\begin{equation}
D_\KL[p_{\theta}:p_{\theta^{\prime}}] \approx_{\epsilon} J_{F,\epsilon}(\theta:\theta^{\prime}).
\end{equation}

The Chernoff information between two Poincar\'e distributions can be approximated efficiently using~\cite{nielsen2013information}.

\section{The hyperboloid distributions}\label{sec:hyperboloid}  

We first give the definition of the Lobachevskii space (in reference to Minkowski hyperboloid model of hyperbolic geometry also called the Lorentz model) and the parameter space of the hyperboloid distribution. 
Let $d \ge 2$. 
Let 
$$
\mathbb{L}^d := \left\{(x_0,x_1,\cdots,x_d) \in \mathbb{R}^{d+1}  : x_0 = \sqrt{1 + x_1^2 + \cdots + x_d^2} \right\} 
$$
and 
$$\Theta_{\mathbb{L}^d} := \left\{(\theta_0,\theta_1,\cdots,\theta_d) \in \mathbb{R}^{d+1} : \theta_0 > \sqrt{\theta_1^2 + \cdots +\theta_d^2} \right\}.
$$

\begin{figure}
\centering
\includegraphics[width=0.8\columnwidth]{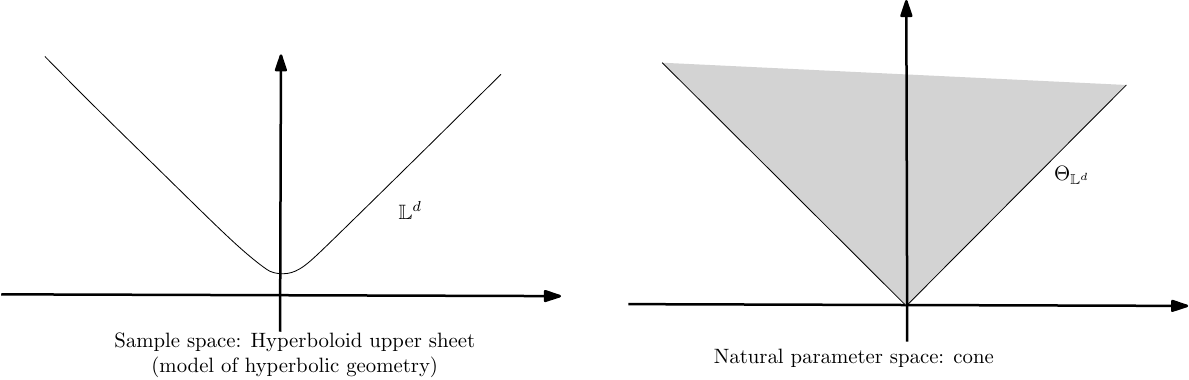}
\caption{The family of hyperboloid distributions has sample space the hyperboloid upper sheet and natural conic parameter space.}
\label{fig:hyperboloid}
\end{figure}

Let the Minkowski inner product~\cite{cho2019large} be 
\[ [(x_0,x_1, \cdots, x_d), (y_0,y_1, \cdots, y_d)] := x_0 y_0 - x_1 y_1 - \cdots - x_d y_d. \]
We have $\mathbb{L}^d =\{x\in\bbR^{d+1}\ :\ [x,x]=1, x_0 > 0\}$ and $\mathbb{L}^d \subset \Theta_{\mathbb{L}^d}$.

Define a map $\varphi : \mathbb{R}^d \to \mathbb{L}^d$ by $\varphi(x_1, \cdots, x_d) := \left(\sqrt{1+\sum_{i=1}^{d} x_i^2}, x_1, \cdots, x_d\right)$. 
This is a homeomorphism between $\mathbb{L}^d$ and $\mathbb{R}^d$. 
The inverse map of $\varphi$ is the projection map $(x_0,x_1, \cdots, x_d) \mapsto (x_1, \cdots, x_d)$. 
We identify $\mathbb{L}^d$ with $\mathbb{R}^d$ by this map.

Now we define the {\em hyperboloid distribution} by following Section~7 in~\cite{Barndorff-Nielsen1978hyperbolic,barndorff1982hyperbolic,barndorff1981hyperbolic}. 
Hereafter, for ease of notation, we let $|\theta| := [\theta, \theta]^{1/2}, \ \theta \in \Theta_{\mathbb{L}^d}$ and $\widetilde{x} := \varphi(x)$. 
For $\theta \in \Theta_{\mathbb{L}^d}$, 
we define a probability measure $P_{\theta}$ on $\mathbb{R}^d$ by 
\begin{equation}\label{eq:hm}
P_{\theta}(\mathrm{d}x_1 \cdots \mathrm{d}x_d) := c_d (|\theta|)\exp(-[\theta,\widetilde{x}]) \mu(\mathrm{d}x_1 \cdots \mathrm{d}x_d), 
\end{equation}
where we let 
\begin{equation}\label{eq:def-cd} 
c_d (t) := \frac{t^{(d-1)/2}}{2(2\pi)^{(d-1)/2} K_{(d-1)/2}(t)}, \ \ \ t > 0, 
\end{equation}
\[ \widetilde{x} := \left(\sqrt{1+\sum_i x_i^2}, x_1, \cdots, x_d\right), \]
and 
\[ \mu(\mathrm{d}x_1 \cdots \mathrm{d}x_d) := \frac{1}{\sqrt{1+ \sum_i x_i^2}} \mathrm{d}x_1 \cdots \mathrm{d}x_d, \]
where $K_{(d-1)/2}$ is the modified Bessel function of the second kind with index $(d-1)/2$ (~\cite{Gradshteyn2015}).  
We can regard this as a Borel probability measure on $\mathbb{L}^d$. 
The 1D hyperboloid distribution was first introduced in statistics in 1977~\cite{barndorff1977exponentially} to model the log-size distributions of particles from aeolian sand deposits, but the 3D hyperboloid distribution was 
 later found already studied in statistical physics in 1911~\cite{juttner1911maxwellsche}.
 The 2D hyperboloid distribution was investigated in 1981~\cite{blaesild1981two}.

We can rewrite the probability density function of the hyperboloid distribution of Eq.~\eqref{eq:hm} in the canonical form of an exponential family of order~\cite{EF-2014} $m=d+1$ as follows:
\begin{eqnarray}
p_{\theta}(x_1, \ldots, x_d) &=& c_d (|\theta|)\exp(-\MInner{\theta}{\widetilde{x}})\frac{1}{\sqrt{1+\sum_i x_i^2}},\\
&=&
\exp\left(\sum_{i=1}^d\theta_i\widetilde{x}_i-\theta_0\widetilde{x}_0+\log c_d(|\theta|)-\frac{1}{2}\log\left(1+\sum_{i=1}^d x_i^2\right)\right),\\
&=& \exp({t(x)}^\top{\theta}-F(\theta)+k(x)),
\end{eqnarray}
where $t(x)=\left(-\sqrt{1+\sum_i x_i^2},x_1,\ldots,x_d\right)$ is the vector of sufficient statistics (linearly independent) and $F(\theta)=-\log c_d\left(\sqrt{\theta_0^2-\sum_{i=1}^d\theta_i^2}\right)$ is the log-normalizer 
 (cumulant function) and $k(x)=-\frac{1}{2}\log\left(1+\sum_{i=1}^d x_i^2\right)$ is the auxiliary carrier term.
 The natural parameter space is  $\Theta_{\mathbb{L}^d}$.
When $d=2$, the order is $m=d+1=3$, which coincides with the order of the exponential family of Poincar\'e distributions.  
 
 It can be shown that
\begin{equation}\label{eq:Fnormalize}
F(\theta) = \log K_{(d-1)/2} (|\theta|) - \frac{d-1}{2} \log |\theta| + \log(2(2\pi)^{(d-1)/2}).
\end{equation}

Table~\ref{tab:hyperboloidef} summarizes the canonical decomposition of the exponential family of hyperboloid distributions.

\begin{table}
\centering
\scalebox{0.8}[0.8]{
\begin{tabular}{|l|l|}\hline
Hyperboloid pdf & $p_{\theta}(x) = c_d (|\theta|)\exp(-\MInner{\theta}{\widetilde{x}})\frac{1}{\sqrt{1+\sum_{i=1}^{d} x_i^2}}$\\ \hline
Exponential family & $p_{\theta}(x) = \exp\left(\MInner{t(x)}{\theta}-F(\theta)+k(x)\right)$\\
Sufficient statistic & $t(x)=\left(-\sqrt{1+\sum_i x_i^2},x_1,\ldots,x_d\right)$\\
Auxiliary carrier term & $k(x)=-\frac{1}{2}\log\left(1+\sum_{i=1}^d x_i^2\right)$\\
Cumulant function & $F(\theta) = \log K_{(d-1)/2} (|\theta|) - \frac{d-1}{2} \log |\theta| + \log(2(2\pi)^{(d-1)/2})$\\ \hline
\end{tabular}
}
\caption{Canonical decomposition of the exponential family of hyperboloid distributions.
Here, $|\theta| := [\theta, \theta]^{1/2}$ for $\theta \in \Theta_{\mathbb{L}^d}$}\label{tab:hyperboloidef}
\end{table}
 
 Since the Fisher information matrix (FIM) of an exponential family with log-normalizer $F(\theta)$ is $\nabla^2 F(\theta)$, we get the FIM for the hyperboloid distributions as 
\begin{equation*}
I(\theta) =-\nabla^2 \log c_d\left( |\theta| \right) =-\nabla^2 \log c_d\left(\sqrt{\theta_0^2-\sum_{i=1}^d\theta_i^2}\right).
\end{equation*}

Let $d=2$. 
We remark that 
\begin{equation}\label{eq:explicitK} 
K_{1/2}(z) = \sqrt{\frac{\pi}{2z}} e^{-z}, \ z > 0. 
\end{equation}
Then, for $\theta = (\theta_0, \theta_1, \theta_2) \in \Theta_{\mathbb L^2}$, 
\[ P_{\theta}(\mathrm{d}x_1 \mathrm{d}x_2) = \frac{\sqrt{\theta_0^2 - \theta_1^2 - \theta_2^2}\exp(\sqrt{\theta_0^2 - \theta_1^2 - \theta_2^2})}{2\pi} \frac{\exp(-(\theta_0\sqrt{1+x_1^2+x_2^2} - \theta_1 x_1 - \theta_2 x_2))}{\sqrt{1+x_1^2+x_2^2}} \ \mathrm{d}x_1 \mathrm{d}x_2. \]

The log-normalizer of this family of 2D hyperboloid distributions (order $m=3$) is
\begin{equation*}
F(\theta)=-\frac{1}{2}\log\frac{\theta_0^2 - \theta_1^2 - \theta_2^2}{\sqrt{2\pi}} - \sqrt{\theta_0^2 - \theta_1^2 - \theta_2^2}.
\end{equation*}
The gradient is
$$ 
\eta=\nabla F(\theta)= \left(\frac{1}{|\theta|} + \frac{1}{|\theta|^2}\right)\left[
\begin{array}{l}
-\theta_0\cr
\ \, \theta_1\cr
\ \, \theta_2\cr
\end{array}
\right].
$$

The Kullback-Leibler divergence between two densities $p_{\theta_1}$ and $p_{\theta_2}$ of an exponential family expressed using the standard inner product $\inner{\cdot}{\cdot}$ is:
$$
D_\KL[p_{\theta_1}:p_{\theta_2}]=F(\theta_2)-F(\theta_1)-\inner{\theta_2-\theta_1}{\nabla F(\theta_1)}.
$$

Thus, for $d=2$, the FIM for the local coordinate $\theta = (\theta_0, \theta_1, \theta_2)$ can be calculated in closed-form using symbolic computing: 
\begin{align}\label{eq:fim-hyp-2D} 
I(\theta) &= \left(\frac{\partial^2 F}{\partial \theta_i \partial \theta_j} (\theta) \right)_{i,j=0,1,2}\\
&= \frac{1}{|\theta|^4}
\begin{bmatrix}
(2+|\theta|)\theta_0^2 - |\theta|^2 (1+|\theta|) & -(2+|\theta|)\theta_0 \theta_1  & -(2+|\theta|)\theta_0 \theta_2 \\
-(2+|\theta|)\theta_0 \theta_1 & (2+|\theta|)\theta_1^2 + |\theta|^2 (1+|\theta|) & (2+|\theta|)\theta_1 \theta_2 \\
-(2+|\theta|)\theta_0 \theta_2 & (2+|\theta|)\theta_1 \theta_2 & (2+|\theta|)\theta_2^2 + |\theta|^2 (1+|\theta|)
\end{bmatrix}.
\end{align}
This is simpler than \eqref{eq:fim-Poincare-2D} for the Poincar\'e distribution. 
\cite[p.126]{barndorff1989decomposition} gives the FIM for the local coordinate $(|\theta|, \theta_1/|\theta|, \theta_2/|\theta|)$ different\footnote{Eq.~\eqref{eq:fim-hyp-2D} does not hold for this coordinate.} from ours, but its simple expression of \cite[p.126]{barndorff1989decomposition} is useful to show that the statistical manifold is Hadamard but not Einstein.

The main purpose of this section is to give explicit formulae for the $f$-divergences between  hyperboloid distributions in \S\ref{sec:fdiv-hyp} and then deal with mixtures of the hyperboloid distributions in \S\ref{sec:mixuda}. Finally, in \S\ref{sec:correspondence}, we deal with relationships between the Poincar\'e distribution and the hyperboloid distribution. 

\subsection{Statistical $f$-divergences between hyperboloid distributions}\label{sec:fdiv-hyp}

In this subsection, we show that every $f$-divergence between hyperboloid distributions is a function of a triplet consisting of certain functions defined by the parameters, and then give explicit formulae for the Kullback-Leibler divergence, the squared Hellinger divergence and the Neyman chi-squared divergence.  

Now we consider group actions on the space of parameters $\Theta_{\mathbb{L}^d}$. 
Let the indefinite special orthogonal group be 
\[ \SO (1,d) := \left\{A \in \SL(d+1, \mathbb{R}) : [Ax, Ay] = [x,y] \ \forall x, y \in \mathbb{R}^{d+1}  \right\}, \]
and 
\[ \SO_0 (1,d) := \left\{A \in \SO(1,d) :  A(\mathbb{L}^d) = \mathbb{L}^d  \right\}. \]
 
An action of $\SO_0 (1,d)$ on $\Theta_{\mathbb{L}^d}$ is defined by 
$A\theta$, which is just the product of matrices. 
By the map $\varphi(x) := \widetilde{x}$, $x \in \mathbb{R}^d$, 
we can regard $\mu$ as an infinite measure on $\mathbb{L}^d$ and denote it by $\widetilde{\mu}$. 
This is the push-forward measure of $\mu$ by $\varphi$. 
Then, by \cite[Section C]{HyperboloidDistribution-1981}, 
$\widetilde{\mu}$ is an $\SO(1,d)$-invariant Borel measure on $\mathbb{L}^d$, that is, $\widetilde{\mu} = \widetilde{\mu} \circ A^{-1}$.  
Let $\widetilde A : \mathbb{R}^d \to \mathbb{R}^d$ be a conjugate map of $A$ by $\varphi$, 
namely, a map defined by $\widetilde A := \varphi^{-1} \circ A \circ \varphi$. 
Then, $\mu = \mu \circ \widetilde{A}^{-1}$. 

By using this invariance, $|A\theta| = [A\theta, A\theta]^{1/2} = [\theta, \theta]^{1/2} = |\theta|$, and 
the property that $[A\theta, A\widetilde{x}] = [\theta, \widetilde{x}], A \in SO_0 (1,d)$, 
we see that 
\[ P_{\theta} \circ \widetilde{A}^{-1} = P_{A\theta}, \ \ \ A \in \SO_0 (1,d), \theta \in \Theta_{\mathbb{L}^d}, \]
where $P_{\theta} \circ \widetilde{A}^{-1}$ and $P_{A\theta}$ are probability measures on $\mathbb{R}^d$. 
Hereafter we identify $\widetilde{\mu}$ with $\mu$.  

Thus we see that the family of hyperboloid distributions is $\SO_0 (1,d)$-invariant by the pushforward induced by the action. 

The hyperboloid distributions for $d=3$ were considered in~\cite{massam1989exact}.

An action of $\SO_0 (1,d)$ on $\left(\Theta_{\mathbb{L}^d}\right)^2$ is defined by 
$$
\SO_0 (1,d) \times \left(\Theta_{\mathbb{L}^d}\right)^2 \ni \left(A,(\theta,\theta^{\prime})\right) \mapsto (A\theta, A\theta^{\prime}) \in \left(\Theta_{\mathbb{L}^d}\right)^2. 
$$

\begin{proposition}[Maximal invariant]\label{prop:max-inv-s-a-s}
The mapping $\left(\Theta_{\mathbb{L}^d}\right)^2 \ni (\theta,\theta^{\prime}) \mapsto \left([\theta, \theta],[\theta^{\prime},\theta^{\prime}],[\theta,\theta^{\prime}]\right) \in (\bbR_{>0})^2 \times \mathbb{R}$ is a maximal invariant for the action of $\SO_0 (1,d)$ on $\left(\Theta_{\mathbb{L}^d}\right)^2$ defined above. 
\end{proposition}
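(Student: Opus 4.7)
The plan is to mirror the strategy used in the proof of Proposition~\ref{prop:maxinv}: first establish invariance, then reduce one of the two parameters to a canonical vector and exploit the stabilizer to match the second parameter.

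First, I would verify invariance. By the very definition of $\SO_0(1,d)$, every $A$ in this group satisfies $[Ax,Ay]=[x,y]$ for all $x,y\in\bbR^{d+1}$. Hence the three scalars $[\theta,\theta]$, $[\theta',\theta']$ and $[\theta,\theta']$ are all preserved under the diagonal action, so the candidate map is invariant.

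For the maximal invariance direction, suppose $[\theta,\theta]=[\tilde\theta,\tilde\theta]$, $[\theta',\theta']=[\tilde\theta',\tilde\theta']$ and $[\theta,\theta']=[\tilde\theta,\tilde\theta']$. The key ingredient is that $\SO_0(1,d)$ acts transitively on each forward Minkowski sphere $\{x\in\bbR^{d+1}:[x,x]=r^2,\,x_0>0\}$; indeed this is a homogeneous space of $\SO_0(1,d)$ with stabilizer (of the basepoint $(r,0,\dots,0)$) equal to the natural $\SO(d)$ subgroup rotating the last $d$ coordinates while fixing the first. Since $\theta\in\Theta_{\bbL^d}$ forces $[\theta,\theta]>0$ and $\theta_0>0$, there exists $A_{\theta}\in\SO_0(1,d)$ with $A_{\theta}\theta=(|\theta|,0,\dots,0)$, and similarly $A_{\tilde\theta}$ for $\tilde\theta$; because $|\theta|=|\tilde\theta|$, both are sent to the \emph{same} canonical vector $e:=(|\theta|,0,\dots,0)$.

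Next I would set $\theta'':=A_{\theta}\theta'$ and $\tilde\theta'':=A_{\tilde\theta}\tilde\theta'$ and compare them. Invariance of the Minkowski form gives
\begin{equation*}
\theta''_0=\frac{[e,\theta'']}{|\theta|}=\frac{[\theta,\theta']}{|\theta|}=\frac{[\tilde\theta,\tilde\theta']}{|\tilde\theta|}=\tilde\theta''_0,
\end{equation*}
and $[\theta'',\theta'']=[\theta',\theta']=[\tilde\theta',\tilde\theta']=[\tilde\theta'',\tilde\theta'']$, whence the Euclidean norms of the spatial parts $(\theta''_1,\dots,\theta''_d)$ and $(\tilde\theta''_1,\dots,\tilde\theta''_d)$ coincide. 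Pick $R\in\SO(d)$ sending the first spatial vector to the second, and embed it into $\SO_0(1,d)$ as $B=\mathrm{diag}(1,R)$; this $B$ fixes $e$ and maps $\theta''$ to $\tilde\theta''$. Then
\begin{equation*}
A:=A_{\tilde\theta}^{-1}\,B\,A_{\theta}\in\SO_0(1,d)
\end{equation*}
satisfies $A\theta=\tilde\theta$ and $A\theta'=\tilde\theta'$, which is the required maximal-invariance property.

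The main obstacle is the structural fact that I invoke in the middle step, namely the transitivity of $\SO_0(1,d)$ on each forward Minkowski sphere together with the identification of its isotropy group with $\SO(d)$; I would either cite this as a classical fact about the Lorentz group or, to keep the proof self-contained, sketch it by a Gram--Schmidt style construction: extend $\theta/|\theta|$ to a Minkowski-orthonormal basis $(\theta/|\theta|,v_1,\dots,v_d)$ with the $v_i$ spacelike, and let $A_{\theta}$ be the matrix carrying this basis to the standard basis of $\bbR^{d+1}$. Everything else in the argument is formal manipulation with the Minkowski form.
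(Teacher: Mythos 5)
Your proposal is correct and follows essentially the same route as the paper's proof: reduce the first parameter to a canonical timelike vector using transitivity of $\SO_0(1,d)$ on the forward hyperboloid (the paper normalizes to $\mathbb{L}^d$ first, you work on the Minkowski sphere of radius $|\theta|$, a cosmetic difference), then use the equality of $[\theta,\theta']$ to pin down the time component of the transported second parameter and an $\SO(d)$ rotation in the stabilizer to align its spatial part. The only point worth making explicit is that choosing $R\in\SO(d)$ (rather than merely $O(d)$) mapping one spatial vector to another of the same norm requires $d\ge 2$, which is exactly the standing hypothesis of the paper.
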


In the following proof, all vectors are column vectors. 

\begin{proof}
It is clear that the map is invariant with respect to the group action. 
Assume that 
$$\left([\theta^{(1)}, \theta^{(1)}],[\theta^{(2)},\theta^{(2)}],[\theta^{(1)},\theta^{(2)}]\right) 
= \left(\left[\widetilde{\theta^{(1)}}, \widetilde{\theta^{(1)}}\right],\left[\widetilde{\theta^{(2)}},\widetilde{\theta^{(2)}}\right], \left[\widetilde{\theta^{(1)}},\widetilde{\theta^{(2)}}\right]\right).$$ 
Let 
$$\psi_i := \frac{\theta^{(i)}}{\left|\theta^{(i)}\right|}, \  \widetilde{\psi_i} := \frac{\widetilde{\theta^{(i)}}}{\left|\widetilde{\theta^{(i)}}\right|},\  \ i = 1,2.$$ 
Then, 
$[\psi_1,\psi_2] = \left[\widetilde{\psi_1},\widetilde{\psi_2}\right]$. 

We first consider the case that $\psi_1 = \widetilde{\psi_1} = (1,0,\cdots,0)^{\top}$. 
Let $\psi_i = (x_{i0}, \cdots, x_{id})^{\top}, \widetilde{\psi_i} = \left(\widetilde{x_{i0}}, \cdots, \widetilde{x_{id}}\right)^{\top}, \ i = 1,2$. 
Then, $x_{20} = \widetilde{x_{20}} > 0$, 
$x_{21}^2 + \cdots + x_{2D}^2 = \widetilde{x_{21}}^2 + \cdots + \widetilde{x_{2D}}^2$ and hence there exists a special orthogonal matrix $P$ such that $P(x_{21}, \cdots, x_{2D})^{\top} = \left(\widetilde{x_{21}}, \cdots, \widetilde{x_{2D}}\right)^{\top}$. 
Let $A := \begin{bmatrix} 1 & 0 \\ 0 & P \end{bmatrix}$. 
Then, $A \in \SO_0 (1,d)$, $A\psi_1 = (1,0,\cdots,0)^{\top} = \widetilde{\psi_1}$ and $A \psi_2 = \widetilde{\psi_2}$. 

We second consider the general case. 
Since the action of $SO_0 (1,d)$ on $\mathbb{L}^d$ defined by $(A, \psi) \mapsto A\psi$ is transitive, 
there exist $A, B \in SO_0 (1,d)$ such that $A\psi_1 = B\widetilde{\psi_1} = (1,0,\cdots,0)^{\top}$. 
Thus this case is attributed to the first case. 
\end{proof}

Now we have that 
\begin{theorem}[Canonical terms of the $f$-divergences between the hyperboloid distributions]\label{thm:fdivhyperboloid}
Every $f$-divergence between $p_{\theta}$ and $p_{\theta^{\prime}}$ is invariant with respect to the action of $\SO_0(1,d)$, 
and is a function of the triplet 
$\left([\theta, \theta],[\theta^{\prime},\theta^{\prime}],[\theta,\theta^{\prime}]\right)$. 
\end{theorem}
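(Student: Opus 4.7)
The plan is to mirror the two-step argument used for the Poincar\'e family in Propositions \ref{prop:maxinv}, \ref{prop:inv-Poincare} and Theorem \ref{thm:fdivPoincare}. First I would establish an $\SO_0(1,d)$-equivariance of the hyperboloid laws: for every $A \in \SO_0(1,d)$, the pushforward should satisfy $P_\theta \circ A^{-1} = P_{A\theta}$. Second, I would combine this with the standard fact that $f$-divergences are preserved under pushing forward through a common measurable bijection, to deduce $D_f[p_\theta : p_{\theta'}] = D_f[p_{A\theta} : p_{A\theta'}]$. Finally, I would invoke Proposition \ref{prop:max-inv-s-a-s}, which identifies $(\theta,\theta') \mapsto ([\theta,\theta], [\theta',\theta'], [\theta,\theta'])$ as a maximal invariant of the same group action, and apply the defining property of maximal invariants (every invariant map factors uniquely through a maximal invariant, as recalled in \S\ref{sec:fdiv}) to conclude that $D_f[p_\theta : p_{\theta'}]$ depends on $(\theta,\theta')$ only through this triplet.

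The technical content sits entirely in the pushforward identity $P_\theta \circ A^{-1} = P_{A\theta}$. Three ingredients combine cleanly: (i) Minkowski-norm invariance $|A\theta|=|\theta|$, so the normalizing constant $c_d(|\theta|)$ transports correctly; (ii) bilinear invariance $[A\theta, A\widetilde{x}] = [\theta, \widetilde{x}]$, which transports the exponential factor; and (iii) the $\SO(1,d)$-invariance of the base measure $\mu$ on $\bbL^d$, already cited in the excerpt. Applying the change of variables $\widetilde{y} = A\widetilde{x}$ then yields, for every Borel $B \subset \bbL^d$,
$$
(P_\theta \circ A^{-1})(B) = \int_{A^{-1}B} c_d(|\theta|)\,\exp(-[\theta, \widetilde{x}])\,\mu(\mathrm{d}\widetilde{x}) = \int_B c_d(|A\theta|)\,\exp(-[A\theta, \widetilde{y}])\,\mu(\mathrm{d}\widetilde{y}) = P_{A\theta}(B).
$$
From here the invariance of $D_f$ is immediate from its change-of-variable property, and the factorization through the triplet $([\theta,\theta], [\theta',\theta'], [\theta,\theta'])$ is a direct application of Proposition \ref{prop:max-inv-s-a-s}.

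I do not expect serious obstacles: the only nontrivial fact beyond elementary manipulation of the exponential family form is the $\SO(1,d)$-invariance of $\mu$, but the paper already records this as a known property of the hyperboloid model (absorbing exactly the Jacobian that would otherwise arise when expressing $A$ in the coordinates $(x_1,\ldots,x_d)$ via $\bbL^d \simeq \bbR^d$). In contrast to the Poincar\'e case, no congruence analysis on $\Sym^+(2,\bbR)$ is needed here because the Minkowski inner product $[\theta,\theta']$ is already linear in each argument, which makes the structure of the maximal invariant especially transparent and keeps the proof short.
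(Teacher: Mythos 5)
Your proposal is correct and follows essentially the same route as the paper: the paper derives the theorem by combining Proposition~\ref{prop:max-inv-s-a-s} with the two facts recorded just before the theorem statement, namely $[A\theta,A\wtx]=[\theta,\wtx]$ and the $\SO(1,d)$-invariance of $\mu$, exactly the ingredients of your pushforward identity $P_\theta\circ A^{-1}=P_{A\theta}$. Your write-up merely makes explicit the equivariance step that the paper leaves implicit (the analogue of Proposition~\ref{prop:inv-Poincare} in the Poincar\'e case), so there is nothing to correct.
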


There is a clear geometric interpretation of this. 
The side-angle-side theorem for triangles in Euclidean geometry states that if two sides and the included angle of one triangle are equal to two sides and the included angle of another triangle, then, the triangles are congruent. 
This is also true for the hyperbolic geometry and it corresponds to Proposition \ref{prop:max-inv-s-a-s} above. 
Every $f$-divergence is determined by the triangle formed by a pair of the parameters $(\theta, \theta^{\prime})$ when $f$ is fixed. 

The statement of Theorem \ref{thm:fdivhyperboloid} also holds for a deformed family of the hyperboloid distribution. 
See \cite[Theorem 4]{nielsen2023i} for details.

\begin{proposition}[Examples of divergences between hyperboloid distributions]\label{prop:exfdivhyperboloid}
We have that  \\
(i) (Kullback-Leibler divergence) 
\[ D_{\KL} [p_{\theta}:p_{\theta^{\prime}}] \]
\[=  \log \left(\frac{K_{(d-1)/2}(|\theta^{\prime}|)}{K_{(d-1)/2}(|\theta|)}\right) + \frac{d-1}{2}\left(\log \left(\frac{|\theta|}{|\theta^{\prime}|}\right) +\frac{[\theta,\theta^{\prime}]}{[\theta,\theta]} -1 \right) + \frac{[\theta,\theta-\theta^{\prime}] K_{(d-1)/2}^{\prime}(|\theta|)}{|\theta| K_{(d-1)/2}(|\theta|)}.   \]
(ii) (squared Hellinger divergence) 
Let $f(u) = (\sqrt{u} - 1)^2 / 2$. 
Then, 
\[ D_f [p_{\theta}:p_{\theta^{\prime}}] = 1 - 2^{(d-1)/2}\frac{|\theta|^{(d-1)/4}|\theta^{\prime}|^{(d-1)/4} K_{(d-1)/2}(|\theta+\theta^{\prime}|/2)}{|\theta+\theta^{\prime}|^{(d-1)/2}\sqrt{K_{(d-1)/2}(|\theta|)K_{(d-1)/2}(|\theta^{\prime}|)}}.\]
(iii) (Neyman chi-squared divergence)
Let $f(u) := (u-1)^2$. 
Then, 
\[ D_f [p_{\theta}: p_{\theta^{\prime}}] 
= \begin{cases} \frac{|\theta^{\prime}|^{d-1} K_{(d-1)/2}(|2\theta^{\prime}-\theta|) K_{(d-1)/2}(|\theta|)}{|2\theta^{\prime} -\theta|^{(d-1)/2} |\theta|^{(d-1)/2} K_{(d-1)/2}(|\theta^{\prime}|)^2} - 1 & 2\theta^{\prime} - \theta \in \Theta_{\mathbb{L}^d} \\ +\infty & \textup{ otherwise} \end{cases}. \] 
\end{proposition}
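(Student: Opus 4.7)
The unifying strategy is to exploit the exponential-family structure summarized in Table~\ref{tab:hyperboloidef}: the cumulant function is $F(\theta)=\log K_{(d-1)/2}(|\theta|)-\frac{d-1}{2}\log|\theta|+\log\bigl(2(2\pi)^{(d-1)/2}\bigr)$, and the density writes as $p_{\theta}(x)=c_d(|\theta|)\exp(-[\theta,\widetilde x])/\sqrt{1+\sum x_i^2}$. The three $f$-divergences then reduce either to an application of the Bregman identity or to recognizing the integrand of $\int \sqrt{p_\theta p_{\theta'}}$ and $\int p_{\theta'}^2/p_\theta$ as an unnormalized element of the same exponential family.

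For (i), I would use $D_{\KL}[p_\theta:p_{\theta'}]=B_F(\theta':\theta)=F(\theta')-F(\theta)-\langle \theta'-\theta,\nabla F(\theta)\rangle$. The difference $F(\theta')-F(\theta)$ immediately produces the $\log(K_{(d-1)/2}(|\theta'|)/K_{(d-1)/2}(|\theta|))$ and $\log(|\theta|/|\theta'|)$ terms. For $\nabla F(\theta)$, I would apply the chain rule through $|\theta|=\sqrt{\theta_0^2-\theta_1^2-\cdots-\theta_d^2}$, whose gradient is $(\theta_0,-\theta_1,\ldots,-\theta_d)/|\theta|$. The key observation is that the Euclidean inner product of this gradient vector with any $v\in\mathbb R^{d+1}$ equals $[\theta,v]/|\theta|$. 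Consequently $\langle\nabla F(\theta),\theta'-\theta\rangle$ collapses to a multiple of $[\theta,\theta'-\theta]=[\theta,\theta']-[\theta,\theta]$, and splitting the factor $K'_{(d-1)/2}/K_{(d-1)/2}-(d-1)/(2|\theta|)$ produces exactly the two remaining terms $\frac{d-1}{2}\bigl([\theta,\theta']/[\theta,\theta]-1\bigr)$ and $[\theta,\theta-\theta']K'_{(d-1)/2}(|\theta|)/(|\theta|K_{(d-1)/2}(|\theta|))$ in the claimed formula.

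For (ii) and (iii), the trick is that $\sqrt{p_\theta(x)p_{\theta'}(x)}$ and $p_{\theta'}(x)^2/p_\theta(x)$ both have the form $(\text{constant in }x)\cdot\exp(-[\theta'',\widetilde x])/\sqrt{1+\sum x_i^2}$, with $\theta''=(\theta+\theta')/2$ and $\theta''=2\theta'-\theta$ respectively. Since the parameter space $\Theta_{\mathbb L^d}$ is a convex cone, $(\theta+\theta')/2$ always lies in $\Theta_{\mathbb L^d}$; for $2\theta'-\theta$ this is the assumption. Because $\int p_{\theta''}(x)\,dx=1$, the remaining integral equals $1/c_d(|\theta''|)$, and the two identities
\[ \int \sqrt{p_\theta p_{\theta'}}\,dx = \frac{\sqrt{c_d(|\theta|)c_d(|\theta'|)}}{c_d(|\theta+\theta'|/2)},\qquad \int \frac{p_{\theta'}^2}{p_\theta}\,dx = \frac{c_d(|\theta'|)^2}{c_d(|\theta|)\,c_d(|2\theta'-\theta|)} \]
yield the closed forms after substituting the definition of $c_d$ and, for (iii), specializing via the identity $K_{1/2}(z)=\sqrt{\pi/(2z)}\,e^{-z}$ of Eq.~\eqref{eq:explicitK} to collapse the Bessel factors into exponentials as displayed.

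The only conceptually delicate step is the gradient calculation in (i): one must track carefully that the Euclidean inner product against $\nabla|\theta|$ equals the Minkowski inner product $[\theta,\cdot]/|\theta|$, so that the Bregman divergence expressed in Euclidean form comes out entirely in terms of the maximal invariants $|\theta|,|\theta'|,[\theta,\theta']$ predicted by Theorem~\ref{thm:fdivhyperboloid}. Parts (ii) and (iii) are otherwise routine once the normalization-identity trick is in place; the main bookkeeping is combining powers of $|\theta|,|\theta'|,|\theta+\theta'|$ and Bessel evaluations in the right way.
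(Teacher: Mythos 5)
Your proposal is correct and follows essentially the same route as the paper: part (i) via the Bregman identity $D_{\KL}=F(\theta')-F(\theta)-\langle\nabla F(\theta),\theta'-\theta\rangle$ combined with the directional-derivative relation $\langle\nabla g(|\theta|),v\rangle = g'(|\theta|)\,[\theta,v]/|\theta|$, and parts (ii)--(iii) by recognizing $\sqrt{p_\theta p_{\theta'}}$ and $p_{\theta'}^2/p_\theta$ as unnormalized members of the same exponential family and dividing by the appropriate normalizing constant. Your identity $\int\sqrt{p_\theta p_{\theta'}}\,\dx=\sqrt{c_d(|\theta|)c_d(|\theta'|)}/c_d(|\theta+\theta'|/2)$ carries the correct argument $|\theta+\theta'|/2$ (consistent with the $d=2$ corollary), and your remark that the exponential form displayed in (iii) arises from the $K_{1/2}$ identity is likewise on point.
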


\begin{proof}
(i) 
We see that 
\[ D_{\KL} [p_{\theta}:p_{\theta^{\prime}}] = F(\theta^{\prime}) - F(\theta) - \inner{\nabla F(\theta)}{\theta^{\prime} - \theta}. \] 
We also see that for every differentiable function $g$, 
\[ \inner{\nabla g(|\theta|)}{\theta^{\prime}-\theta}  = \frac{g^{\prime}(|\theta|)}{|\theta|} [\theta,\theta^{\prime}-\theta]. \]

By this and Eq.~\eqref{eq:Fnormalize}, 
we have assertion (i). 

(ii) 
We remark that if $\theta, \theta^{\prime} \in \Theta_{\mathbb{L}^d}$, then, $(\theta+\theta^{\prime})/2 \in \Theta_{\mathbb{L}^d}$.
We see that 
\[ \int \sqrt{p_{\theta}(x)p_{\theta^{\prime}}(x)} \dx = \frac{c_d (|\theta|)^{1/2}c_d (|\theta^{\prime}|)^{1/2}}{c_d (|\theta+\theta^{\prime}|/2)}. \] 
Now recall the definition of $c_d$ in Eq.~\eqref{eq:def-cd}. 

(iii) 
Assume that $2\theta^{\prime} - \theta \in \Theta_{\mathbb{L}^d}$. 
Then, 
\[ \int \left(\frac{p_{\theta^{\prime}}(x)}{p_{\theta}(x)}-1\right)^2 p_{\theta}(x) \dx = \int \frac{p_{\theta^{\prime}}(x)^2}{p_{\theta}(x)} \dx -1 = \frac{c_d(|\theta^{\prime}|)^2}{c_d(|\theta|) c_d(|2\theta^{\prime} - \theta|)}-1. \]

Assume that $2\theta^{\prime} - \theta \notin \Theta_{\mathbb{L}^d}$. 
Let $\zeta = (\zeta_0, \cdots, \zeta_d) := 2\theta^{\prime} - \theta$. 
We first consider the case that $\zeta_0 \ne 0$. 
Let 
$\zeta(t) := (t|\zeta_0|, \zeta_1, \cdots, \zeta_d)$ for $t > 0$.  
Let 
\[ G(t) := \int_{\mathbb{R}^d} \exp(-[\zeta(t),\widetilde{x}]) \mu(dx), \ t > 0. \] 
We remark that $G(t)$ can be infinite.  
If $t > \sqrt{\sum_{i=1}^{d} \zeta_i^2}/|\zeta_0|$, then, $\zeta(t) \in \Theta_{\mathbb{L}^d}$, and hence, $G(t) = 1/c_d(|\zeta(t)|) < +\infty$. 

By \cite[Eq.~(8.432.5)]{Gradshteyn2015}, 
\[ K_{(d-1)/2}(t) = \frac{\Gamma(d/2) (2t)^{(d-1)/2}}{\sqrt{\pi}} \int_{0}^{\infty} \frac{\cos x}{(x^2 + t^2)^{d/2}} dx.  \]
By this and \eqref{eq:def-cd},
\[ c_d (t) = \frac{1}{2^d \pi^{d/2 - 1} \Gamma(d/2)} \frac{1}{\int_{0}^{\infty} \frac{\cos x}{(x^2 + t^2)^{d/2}} dx}. \]

By the monotone convergence theorem, 
\[ \int_{0}^{\pi/4} \frac{\cos x}{(x^2 + t^2)^{d/2}} dx \ge \frac{1}{\sqrt{2}} \int_{0}^{\pi/4} \frac{1}{(x^2 + t^2)^{d/2}} dx \to \frac{1}{\sqrt{2}} \int_{0}^{\pi/4} \frac{1}{x^d} dx = +\infty, \ t \to +0. \]
We also see that 
\[ \left| \int_{\pi/4}^{\infty} \frac{\cos x}{(x^2 + t^2)^{d/2}} dx \right| \le \int_{\pi/4}^{\infty} \frac{1}{x^d} dx < +\infty. \]

Therefore, $\displaystyle \lim_{t \to +0} c_d (t) = 0$. 
By this and $|\zeta(t)| = \sqrt{t^2 \zeta_0^2 - \sum_{i=1}^{d} \zeta_i^2}$, 
\[ \lim_{t \to \sqrt{\sum_{i=1}^{d} \zeta_i^2}/|\zeta_0| +0} G(t) =  \lim_{t \to \sqrt{\sum_{i=1}^{d} \zeta_i^2}/|\zeta_0| +0} \frac{1}{c_d (|\zeta(t)|)} =  +\infty.\]
Since $2\theta^{\prime} - \theta \notin \Theta_{\mathbb{L}^d}$, $\sqrt{\sum_{i=1}^{d} \zeta_i^2}/|\zeta_0| \ge 1$. 
Since $[\zeta(t),\widetilde{x}] = t|\zeta_0|\sqrt{1+\sum_{i=1}^{d} x_i^2} - \sum_{i=1}^{d} x_i \zeta_i$ and $|\zeta_0| > 0$, 
$[\zeta(t),\widetilde{x}]$ is increasing with respect to $t$, and hence, 
$G$ is a decreasing function. 
Therefore, $G(1) = +\infty$.
Since $\left[\zeta(1), \widetilde{x}\right] \ge \left[2\theta^{\prime} - \theta, \widetilde{x}\right]$, 
\[ \int_{\bbR^d} \exp\left(-[2\theta^{\prime} - \theta, \widetilde{x}]\right) \mu(dx) \ge G(1) = +\infty. \]
Thus, 
\[ \int_{\bbR^d} \frac{p_{\theta^{\prime}}(x)^2}{p_{\theta}(x)} \dx = \frac{c_d(|\theta^{\prime}|)^2}{c_d (|\theta|)} \int_{\bbR^d} \exp\left(-[2\theta^{\prime} - \theta, \widetilde{x}]\right) \mu(dx) =  +\infty. \]

Second, we consider the case that $\zeta_0 = 0$. 
If $(\zeta_1, \cdots, \zeta_d) = (0, \cdots, 0)$, then, $2\theta^{\prime} - \theta = 0$, and then, 
$\int_{\bbR^d} \exp\left(-[2\theta^{\prime} - \theta, \widetilde{x}]\right) \mu(dx) = \mu(\bbR^d) =  +\infty.$
Assume that $(\zeta_1, \cdots, \zeta_d) \ne (0, \cdots, 0)$. 
For $0 < \epsilon < \sqrt{\sum_{i=1}^{d} \zeta_i^2}$, let $\theta^{\prime}_{\epsilon} := \theta^{\prime} + (\epsilon/2, 0, \dots, 0)$. 
Then, $\theta^{\prime}_{\epsilon} \in \Theta_{\mathbb{L}^d}$ and 
$2\theta_{\epsilon}^{\prime} - \theta = (\epsilon, \zeta_1, \cdots, \zeta_d) \notin \Theta_{\mathbb{L}^d}$. 
Since $\epsilon > 0$, 
$\int_{\bbR^d} \exp\left(-[2\theta_{\epsilon}^{\prime} - \theta, \widetilde{x}]\right) \mu(dx) = +\infty$. 
$\exp\left(-[2\theta_{\epsilon}^{\prime} - \theta, \widetilde{x}]\right)$ increases as $\epsilon$ decreases, and furthermore, $\exp\left(-[2\theta_{\epsilon}^{\prime} - \theta, \widetilde{x}]\right) \to \exp\left(-[2\theta^{\prime} - \theta, \widetilde{x}]\right)$ as $\epsilon \to +0$. 
Therefore, by the monotone convergence theorem, 
\[ \int_{\bbR^d} \exp\left(-[2\theta^{\prime} - \theta, \widetilde{x}]\right) \mu(dx) = \lim_{\epsilon \to +0} \int_{\bbR^d} \exp\left(-[2\theta_{\epsilon}^{\prime} - \theta, \widetilde{x}]\right) \mu(dx) = +\infty. \]
This completes the proof. 
\end{proof}

By this assertion and Eq.~\eqref{eq:explicitK}, 
we have that 
\begin{corollary}\label{cor:exfdiv-2d}
Let $d=2$. Then, \\ 
(i) (Kullback-Leibler divergence) 
Let $f(u) = -\log u$.
Then, 
\[ D_f [p_{\theta}:p_{\theta^{\prime}}] 
= \log \left(\frac{|\theta|}{|\theta^{\prime}|}\right) - |\theta^{\prime}| + \frac{[\theta,\theta^{\prime}]}{[\theta,\theta]} + \frac{[\theta,\theta^{\prime}]}{|\theta|} -1.   \]
(ii) (squared Hellinger divergence) 
Let $f(u) = (\sqrt{u} - 1)^2 / 2$. 
Then, 
\[ D_f [p_{\theta}: p_{\theta^{\prime}}] 
= 1 - \frac{2|\theta|^{1/2}|\theta^{\prime}|^{1/2} 
\exp\left(|\theta|/2 + |\theta^{\prime}|/2\right)}{ |\theta+\theta^{\prime}|  \exp\left(|\theta+\theta^{\prime}|/2\right)}.\]
(iii) (Neyman chi-squared divergence)
Let $f(u) := (u-1)^2$. 
Then, 
\[ 
D_f [p_{\theta}: p_{\theta^{\prime}}] 
= \begin{cases} \frac{|\theta^{\prime}|^2 \exp(2|\theta^{\prime}|)}{|\theta||2\theta^{\prime}-\theta|\exp(|\theta| + |2\theta^{\prime}-\theta|)} - 1 & 2\theta^{\prime} - \theta \in \Theta_{\mathbb{L}^2} \\ +\infty & \textup{ otherwise}\end{cases}.  \] 
\end{corollary}

\begin{remark}\label{rmk:NeyChi}
(Foliation) For $t > 0$, let 
\[ \Theta(t) := \{\theta \in \Theta_{\mathbb L^d} : |\theta| = t\}. \]
Then, 
by noting that $[\theta, \theta^{\prime}] = [\theta^{\prime},\theta]$, 
every $f$-divergence is symmetric on $\Theta(t)$. 
For example, if $t=1$ and $d=2$, then, for $f(u) = -\log u$, 
\[  D_f [p_{\theta}: p_{\theta^{\prime}}] = 2([\theta,\theta^{\prime}]-1)\]
on $\Theta(1)$. 
\end{remark}

\subsection{Hyperboloid mixtures as universal density approximator}\label{sec:mixuda}

In this subsection, we show that the set of mixtures of the hyperboloid distributions is dense in the space of the probability distribution on $\mathbb{L}^d$ with the weak-star topology. 

The set of the $d$-dimensional {\it hyperboloid mixture model} is given by 
\[ \mathcal{M}_{\mathbb L^d} := \left\{\sum_{i=1}^{n} w_i P_{\theta_i} : \theta_1, \dots, \theta_n \in \Theta_{\mathbb L^d}, w_i \ge 0, \sum_{i=1}^{n} w_i = 1 \right\}. \]

The following shows that the hyperboloid mixture model has a universal property as the location-scale model.  
\begin{theorem}\label{thm:densemixtures}
The set of statistical mixtures $\mathcal{M}_{\mathbb L^d}$ is dense in the space of the probability distribution on $\mathbb{L}^d$ for the topology of the weak convergence.  
\end{theorem}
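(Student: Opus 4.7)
The plan is to reduce the density of $\mathcal{M}_{\mathbb{L}^d}$ in $\calP(\bbR^d)$ to two ingredients: (i) each hyperboloid distribution can be tuned to concentrate arbitrarily well at any prescribed point $y\in\bbR^d$, and (ii) finitely supported probability measures are weakly dense in $\calP(\bbR^d)$. Ingredient (ii) is a classical fact (for instance, via convergence of empirical measures, or by a direct cell-partitioning argument), so the actual new content sits in (i).

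For (i), given $y\in\bbR^d$ let $\widetilde y:=(\sqrt{1+|y|^2},y_1,\dots,y_d)\in\mathbb{L}^d$ and consider the one-parameter family $\{P_{t\widetilde y}\}_{t>0}\subset\mathcal{M}_{\mathbb{L}^d}$. The claim is $P_{t\widetilde y}\Rightarrow\delta_y$ weakly as $t\to\infty$, and I would argue it in two substeps. By transitivity of the $\SO_0(1,d)$-action on $\mathbb{L}^d$ a Lorentz transformation $g_y$ sends $\widetilde 0=(1,0,\dots,0)$ to $\widetilde y$, and since this action pushes $P_{t\widetilde 0}$ forward to $P_{t\widetilde y}$, it suffices to treat the centered case $y=0$. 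For $y=0$ the density on $\bbR^d$ reads
$$p_{t\widetilde 0}(x)\ =\ \frac{c_d(t)}{\sqrt{1+|x|^2}}\,\exp\!\bigl(-t\sqrt{1+|x|^2}\bigr).$$
The standard asymptotic $K_{\nu}(t)\sim\sqrt{\pi/(2t)}\,e^{-t}$ gives $c_d(t)\sim t^{d/2}e^{t}/(2\pi)^{d/2}$. Changing variables $x=u/\sqrt{t}$ and Taylor-expanding $\sqrt{1+|u|^2/t}=1+|u|^2/(2t)+O(t^{-2})$, the density of $\sqrt{t}\,x$ under $P_{t\widetilde 0}$ converges pointwise to the standard Gaussian $(2\pi)^{-d/2}\exp(-|u|^2/2)$. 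Together with an integrable dominating majorant (furnished by the Taylor estimate for bounded $|u|$ and direct exponential decay for large $|u|$), Scheff\'e's lemma yields $\sqrt{t}\,x\Rightarrow N(0,I_d)$, hence $P_{t\widetilde 0}\Rightarrow\delta_0$. Transporting by $g_y$ gives $P_{t\widetilde y}\Rightarrow\delta_y$ for every $y\in\bbR^d$.

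With (i) in hand, the density proof is routine. Fix $\nu\in\calP(\bbR^d)$, $f\in C_b(\bbR^d)$ and $\varepsilon>0$. By (ii) choose a finitely supported $\nu_N=\sum_{i=1}^N w_i\,\delta_{y_i}$ with $|\int f\,d\nu-\int f\,d\nu_N|<\varepsilon/2$. By (i) pick $t$ large enough that $|\int f\,dP_{t\widetilde{y_i}}-f(y_i)|<\varepsilon/2$ simultaneously for $i=1,\dots,N$. Then $\mu:=\sum_{i=1}^N w_i\,P_{t\widetilde{y_i}}\in\mathcal{M}_{\mathbb{L}^d}$ satisfies $|\int f\,d\mu-\int f\,d\nu|<\varepsilon$, proving weak density.

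The main obstacle is upgrading the local Laplace-type concentration around $y$ to genuine weak convergence, i.e., controlling the tail mass of $P_{t\widetilde y}$ outside any fixed ball about $y$ and preventing the polynomially growing prefactor $c_d(t)$ from swamping the exponential decay. The reverse Cauchy--Schwarz inequality $\MInner{\widetilde y}{\widetilde x}\geq 1$ with equality iff $x=y$, the same structural input that powered the maximal-invariance argument of Proposition~\ref{prop:max-inv-s-a-s}, makes this clean: outside any neighborhood of $y$ one has $\MInner{\widetilde y}{\widetilde x}-1\geq\epsilon>0$, so the factor $\exp(-t(\MInner{\widetilde y}{\widetilde x}-1))$ decays faster than any polynomial in $t$ and smothers $c_d(t)$. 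The quadratic behavior of $\MInner{\widetilde y}{\widetilde x}$ near $x=y$ then supplies the Gaussian scaling that completes the convergence.
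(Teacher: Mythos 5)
Your proof is correct, and it follows the same two-step skeleton as the paper: reduce to approximating finitely supported measures, then show the concentration $P_{t\theta}\Rightarrow\delta_\theta$ as $t\to+\infty$. The only genuine difference is how that concentration step is established. The paper's argument is soft and works directly at an arbitrary $\theta\in\mathbb{L}^d$: it splits the integral over the sublevel set $B(\theta,\delta)=\{x:[\theta,\widetilde{x}]\le 1+\delta\}$ and its complement, bounds the mass of the complement by a ratio of integrals, and kills the numerator with dominated convergence; it never needs the Bessel asymptotics or the explicit normalizing constant $c_d$. Your route instead uses the $\SO_0(1,d)$-equivariance $g_*P_\theta=P_{g\theta}$ to reduce to the centered case and then runs a Laplace-type computation with $K_\nu(t)\sim\sqrt{\pi/(2t)}\,e^{-t}$ and Scheff\'e's lemma (for which, incidentally, no dominating majorant is needed once the pointwise limit is itself a probability density). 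This costs you the Bessel asymptotics but buys strictly more: you obtain the local limit $\sqrt{t}\,x\Rightarrow N(0,I_d)$, i.e.\ the Gaussian fluctuation scale $t^{-1/2}$, rather than bare weak convergence to the Dirac mass. Your closing observation that $[\widetilde{y},\widetilde{x}]\ge 1$ with equality iff $x=y$ is exactly the structural fact the paper's tail estimate rests on as well (it is what makes $[\theta,x]-1-\delta>0$ off the neighborhood), so the two arguments are powered by the same input. Both are complete; yours is quantitative, the paper's is shorter and dimension- and normalization-free.
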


\begin{proof}
Step 1. We first remark that the set of the Dirac masses $\{\delta_x\}_{x \in \mathbb{R}^d}$ is dense in the space of the probability distribution on $\mathbb{R}^d$ for the topology of the weak convergence, by applying the Krein-Milman theorem to the space of the probability measures on a compact subset of $\mathbb{R}^d$ and using the fact that $\mathbb{R}^d$ is $\sigma$-compact.  

We now give some details. 
Let $X$ be a metric space. 
We denote the ball with center $x \in X$ and radius $r > 0$ by $B(x,r)$. 
Let $C(X)$ be the Banach space of bounded continuous functions on $X$ with the supremum norm. 
Let $C(X)^*$ be the dual Banach space of $C(X)$ with the operator norm. 
Let $B_{C(X)^*}$ be the unit closed ball of $C(X)^*$. 
Let $\mathcal{P}(X)$ be the set of Borel probability measures on $X$. 
We define a map $\Phi : \mathcal{P}(X) \to B_{C(X)^*}$ by 
\[\Phi(\mu)(f) := \int_X f d\mu, \ \ f \in C(X). \]
By this map we identify $\mathcal{P}(X)$ with a subset of $B_{C(X)^*}$. 
Assume that $X$ is compact. 
Then, $\mathcal{P}(X)$ is compact with respect to the weak-star topology. 

Let $\textup{ex}(\mathcal{P}(X))$ be the extreme points of $\mathcal{P}(X)$ in $C(X)^*$. 
For the sake of completeness, we will show the well-known fact that $\textup{ex}(\mathcal{P}(X)) = \{\delta_x : x \in X\}$, where $\delta_x$ denotes the Delta mass at $x$. 
Let $x_0 \in X$. 
Assume that there exist $P_1, P_2 \in \mathcal{P}(X)$ and $a \in (0,1)$ such that  $\delta_{x_0} = (1-a)P_1 + aP_2$. 
For each $n \ge 1$, there exists $f_n \in C(X)$ such that $f_n(x_0) = 1$, $f_n(x) = 0$ for $x \notin B(x_0, 1/n)$, and $0 \le f_n \le 1$. 
Then, by the Lebesgue convergence theorem, 
$\int_X f_n dP_i  \to P_i (\{x_0\})$, $n \to \infty$, $i=1,2$. 
Hence, $1  = (1-a) P_1 (\{x_0\}) + a P_2 (\{x_0\})$, and hence, $P_1 = P_2 = \delta_{x_0}$. 
Thus we see that $\delta_{x_0} \in \textup{ex}(\mathcal{P}(X))$. 

Let $\mu \in \mathcal{P}(X)$ and assume that $\mu \ne \delta_x$ for every $x \in X$. 
Then, $\textup{supp}(\mu)$ contains at least two points and we denote them by $x_1 \ne x_2$. 
Since $X$ is compact Hausdorff, 
there exists two disjoint closed balls $B_1$ and $B_2$ such that $x_i \in B_i$, $i=1,2$. 
Then, $\mu(B_1) > 0$ and $\mu(B_1^c) \ge \mu(B_2) > 0$. 
We see that for every Borel subset $A$ of $X$, 
\[ \mu(A) = \mu(B_1) \frac{\mu(A  \cap B_1)}{\mu(B_1)} + (1-\mu(B_1)) \frac{\mu(A \cap B_1^c)}{\mu(B_1^c)}. \]
Let $a := \mu(B_1^c)$ and $P_1 (A) := \frac{\mu(A  \cap B_1)}{\mu(B_1)} \in \mathcal{P}(X)$ and $P_2 (A) := \frac{\mu(A \cap B_1^c)}{\mu(B_1^c)}  \in \mathcal{P}(X)$. 
Then, $\mu = (1-a)P_1 + aP_2$, and $P_1 \ne \mu$. 
Hence, $\mu \notin  \textup{ex}(\mathcal{P}(X))$. 

Now, by the Krein-Milman theorem \cite[Section 14.6]{Royden2010}, 
the convex hull of $\textup{ex}(\mathcal{P}(X)) = \{\delta_x : x \in X\}$ 
is dense in $\mathcal{P}(X)$ with respect  to the weak-star topology. 

Finally, we apply this result to Euclidean space. 
Let $\Delta^{k} := \left\{x = (x_i)_i \in \bbR^{k+1} \ \middle| \ x_i \ge 0, \sum_{i=1}^{k+1} x_i = 1 \right\}$ for $k \ge 1$ for $k \ge 1$. 
Let 
$$\mathcal{F}_N := \left\{\sum_{i=1}^{m} a_i \delta_{x_i} : m \ge 1, (a_1, \cdots, a_m) \in \Delta^{m-1}, x_1, \cdots, x_m \in \overline{B(0,N)} \right\}, N \ge 1.$$
Then, by the Krein-Milman theorem, the convex hull of $\mathcal{F}_N = \textup{ex}\left( \mathcal{P}(\overline{B(0,N)}) \right)$ is dense in $\mathcal{P}(\overline{B(0,N)})$. 
For $\mu \in \mathcal{P}(\mathbb{R}^d)$ and for sufficiently large $N$, 
let $\mu_N (A) := \dfrac{\mu\left(A \cap \overline{B(0,N)}\right)}{\mu\left(\overline{B(0,N)}\right)}$. 
Then, $\mu_N \in \mathcal{P}(\overline{B(0,N)})$ and $\mu_N \to \mu, N \to \infty$, weakly. 
Let $\mathcal{F} := \cup_{N \ge 1} \mathcal{F}_N$. 
Then, 
\[ \mathcal{F} = \left\{\sum_{i=1}^{m} a_i \delta_{x_i} : m \ge 1, (a_1, \cdots, a_m) \in \Delta^{m-1}, x_1, \cdots, x_m \in \mathbb{R}^d \right\}, \]
and $\mathcal{F}$ is dense in $\mathcal{P}(\mathbb{R}^d)$. 
Hence, 
\[ \mathcal{G} := \left\{\sum_{i=1}^{m} a_i \delta_{x_i} : m \ge 1, (a_1, \cdots, a_m) \in \Delta^{m-1},  x_1, \cdots, x_m \in \mathbb{L}^d \right\} \] 
is dense in $\mathcal{P}(\mathbb{L}^d)$.

Step 2. By Step 1, 
it suffices to show that for every $\theta \in \mathbb{L}^d$, 
$P_{t \theta}$ converges weakly to $\delta_{\theta}$ as $t \to +\infty$. 
That is, it suffices to show that for every bounded continuous function $f$ on $\mathbb{L}^d$, 
\begin{equation}\label{eq:1st-reduction} 
\int_{\mathbb L^d} f(x) \mathrm{d} P_{t \theta} (x) \to f(\theta) = \int_{\mathbb{L}^d} f(x) \delta_{\theta}(dx), \ \ t \to +\infty. 
\end{equation}

It is not difficult to show that if $\{x_n\}_n$ is a sequence of $\mathbb{L}^d$ and $\lim_{n \to \infty} [\theta, x_n] = 1$, then $\lim_{n \to \infty} x_n = \theta$.  
Hence, for every $\epsilon > 0$, there exists $\delta > 0$ such that for every $x \in \mathbb{L}^d$ with $[\theta, x] \le 1+\delta$, $|f(x) - f(\theta)| < \epsilon$. 
Let $B(\theta,\delta) := \{x \in \mathbb{L}^d : [\theta, x] \le 1 + \delta\}$. 
Then, for every $t > 0$, 
\[ \int_{B(\theta,\delta)} |f(x) - f(\theta)| \mathrm{d} P_{t \theta} (x) \le \epsilon. \]
Hence, by noting that $f$ is bounded,  it suffices to show that 
\[ P_{t \theta} (B(\theta,\delta)^c) \to 0, \ t \to +\infty. \]
If this holds, then, 
\[ \limsup_{t \to +\infty} \int_{\mathbb L^d} |f(x) - f(\theta)| \mathrm{d} P_{t \theta} (x) \le \epsilon. \]
We have Eq.~\eqref{eq:1st-reduction} if we let $\epsilon \to +0$.

We recall that 
$$ \mathrm{d}P_{t \theta}(x) = \dfrac{\exp(-t [\theta,x])}{\int_{\mathbb{L}^d} \exp(-t [\theta,x]) \mu(\dx)} \mu(\dx), \ x \in \mathbb{L}^d. $$
Now it suffices to show that 
\begin{equation}\label{eq:ratio-wts-final} 
\frac{P_{t \theta} (B(\theta,\delta)^c)}{P_{t \theta} (B(\theta,\delta))} = \frac{\int_{B(\theta,\delta)^c} \exp(-t [\theta,x]) \mu(dx)}{\int_{B(\theta,\delta)} \exp(-t [\theta,x]) \mu(dx)} \to 0, \ t \to +\infty. 
\end{equation}
We see that 
\[ \int_{B(\theta,\delta)} \exp(-t [\theta,x]) \mu(dx) \ge \exp(-t(1+\delta))\mu(B(\theta,\delta)).\]
By the Lebesgue convergence theorem, 
\[ \int_{B(\theta,\delta)^c} \exp\left(-t ([\theta,x] - 1 - \delta)\right) \mu(\dx) \to 0, \ t \to +\infty.\]
Thus we have Eq.~\eqref{eq:ratio-wts-final}.
\end{proof}

Thus statistical mixture models of hyperboloid distributions are universal density estimators of continuous pdfs in hyperbolic spaces.
This result is the equivalent of  Gaussian mixture models being universal density estimators in Euclidean spaces (\cite{Titterington1985} and \cite[Chapter 3, p65]{Goodfellow2016}).

\begin{remark}
Since the Poincar\'e distributions and hyperboloid distributions are exponential families of hyperbolic geometry, we can consider learning statistical mixtures using the corresponding Bregman soft clustering~\cite[Section 5]{banerjee2005clustering}, which implements the standard Expectation-Maximization algorithm~\cite{dempster1977maximum} for mixture density estimation. 
\end{remark}

\subsection{Correspondence principle}\label{sec:correspondence}

In this subsection, we give a correspondence between the Poincar\'e distribution and the hyperboloid distribution.   

It is well-known that there is a correspondence between the 2D Lobachevskii space $\mathbb L = \mathbb{L}^2$ and the Poincar\'e upper-half plane $\mathbb H$. 

\begin{proposition}\label{prop:corresp}
For $\theta = (a,b,c) \in \Theta_{\mathbb H} := \left\{(a,b,c) : a > 0, c > 0, ac > b^2\right\}$, 
let 
\[ \theta_{\mathbb L} := (a+c, a-c, 2b) \in \Theta_{\mathbb L}.\] 
We denote the $f$-divergence on $\mathbb{L}$ and $\mathbb H$  by $D_f^{\mathbb L}[ \cdot  : \cdot ]$ and $D_f^{\mathbb H}[ \cdot : \cdot ]$ respectively. 
Then,\\
(i) For $\theta, \theta^{\prime} \in \Theta_{\mathbb H}$, 
\begin{equation}\label{eq:corres-norm} 
|\theta_{\mathbb L}|^2 = \left[\theta_{\mathbb L},\theta_{\mathbb L}\right] = 4 |\theta|, \ |\theta^{\prime}_{\mathbb L}|^2 = \left[\theta^{\prime}_{\mathbb L},\theta^{\prime}_{\mathbb L}\right] = 4 |\theta^{\prime}|, \ \left[\theta_{\mathbb L},\theta^{\prime}_{\mathbb L}\right] = 2 |\theta| \textup{tr}(\theta^{\prime} \theta^{-1}).
\end{equation}
(ii) For every $f$ and $\theta, \theta^{\prime} \in \mathbb{H}$, 
\begin{equation}\label{eq:corres}
D_f^{\mathbb L}\left[p_{\theta_{\mathbb L}}:p_{\theta^{\prime}_{\mathbb L}}\right] = D_f^{\mathbb H}\left[p_{\theta}:p_{\theta^{\prime}}\right].
\end{equation}
\end{proposition}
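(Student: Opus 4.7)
For part (i), I would just expand. Writing $\theta_{\bbL} = (a+c,\,a-c,\,2b)$, the Minkowski square is $[\theta_{\bbL},\theta_{\bbL}] = (a+c)^2 - (a-c)^2 - 4b^2 = 4(ac-b^2) = 4|\theta|$, and similarly for $\theta'_{\bbL}$. For the cross term a direct expansion gives $[\theta_{\bbL},\theta'_{\bbL}] = 2(ac'+a'c) - 4bb'$. To match this with $2|\theta|\tr(\theta'\theta^{-1})$, compute $\theta^{-1} = \frac{1}{|\theta|}\mattwotwo{c}{-b}{-b}{a}$, multiply by $\theta'$, and read off $\tr(\theta'\theta^{-1}) = (ac'+a'c-2bb')/|\theta|$. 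This proves Eq.~\eqref{eq:corres-norm}.

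For part (ii), the plan is to exhibit an explicit measurable bijection $\Phi:\bbH\to\bbL$ (the standard isometry between the Poincar\'e upper half-plane and the 2D Lorentz hyperboloid model) under which the Poincar\'e measure $P_\theta$ pushes forward to the hyperboloid measure $P_{\theta_{\bbL}}$, and then invoke invariance of the $f$-divergence under measurable bijections (i.e., the change-of-variables formula for the defining integral). I would define
\[
\Phi(x+iy) := \left(\tfrac{x^2+y^2+1}{2y},\ \tfrac{1-x^2-y^2}{2y},\ -\tfrac{x}{y}\right),
\]
and carry out four verifications: (a) $\Phi(\bbH) = \bbL$ with inverse $y = 1/(u_0+u_1)$, $x = -u_2/(u_0+u_1)$, where $u_0+u_1>0$ because $|u_1|<u_0$ on the forward sheet; (b) the exponents match, $[\theta_{\bbL},\Phi(x+iy)] = (a(x^2+y^2)+2bx+c)/y$, which is immediate from the decomposition $a(X+1) + c(X-1) = (a+c)X+(a-c)$ with $X=x^2+y^2$, matched coordinate-by-coordinate against $(a+c,\,a-c,\,2b)$; (c) the Jacobian identity $|\partial(u_1,u_2)/\partial(x,y)| = (x^2+y^2+1)/(2y^3) = u_0/y^2$, so that the hyperboloid carrier measure $\mu(du_1du_2) = du_1du_2/u_0$ pulls back to $dxdy/y^2$; and (d) the normalizing constants agree: from $K_{1/2}(z) = \sqrt{\pi/(2z)}\,e^{-z}$ one simplifies $c_2(t) = te^t/(2\pi)$, and since $|\theta_{\bbL}| = 2\sqrt{|\theta|}$ by (i), $c_2(|\theta_{\bbL}|) = \sqrt{|\theta|}\,e^{2\sqrt{|\theta|}}/\pi$, exactly the Poincar\'e normalizer. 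Combining (b), (c), (d) gives $\Phi_* P_\theta = P_{\theta_{\bbL}}$, and Eq.~\eqref{eq:corres} follows at once from change of variables applied to the defining integral $\int p_\theta f(p_{\theta'}/p_\theta)\,dxdy$.

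The main obstacle is identifying the correct $\Phi$ with the correct sign conventions so that $[\theta_{\bbL},\Phi(z)]$ reproduces the Poincar\'e exponent with the right sign (and the $-x/y$ third coordinate is forced on us once we decide to absorb $2bx$ into the Minkowski product); after that, the four verifications are routine algebra. A purely invariant-theoretic shortcut — combining Theorems~\ref{thm:fdivPoincare} and~\ref{thm:fdivhyperboloid} with part (i) to say that both sides are functions of the same canonical triple — would only show both sides depend on the same invariants, not that they are equal for every $f$, since the functional form $(|\theta|,|\theta'|,\tr(\theta'\theta^{-1})) \mapsto D_f$ depends on $f$. The explicit push-forward is what pins down the constant of proportionality uniformly in $f$.
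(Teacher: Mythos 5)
Your proof is correct and follows essentially the same route the paper takes (the paper leaves the proposition unproved in place but supplies exactly this change of variables in Remark~\ref{rmk:mde}, namely $(x,y)\mapsto\bigl(\tfrac{1-x^2-y^2}{2y},\tfrac{x}{y}\bigr)$, together with the matching of carrier measures $\mu_{\mathbb H}=\mu_{\mathbb L}$ under pullback); your verifications (a)--(d) and the sign choice $-x/y$ needed to reproduce the $+2bx$ term against $\theta_{\mathbb L}=(a+c,a-c,2b)$ all check out. Your closing observation that the invariant-theoretic shortcut via Theorems~\ref{thm:fdivPoincare} and~\ref{thm:fdivhyperboloid} would not by itself yield equality for every $f$ is also correct and worth keeping.
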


For (i), at its first glance, there is an inconsistency in notation. 
However, $|\theta|$ is the Minkowski norm for $\theta \in \theta_{\mathbb L}$, and, $|\theta|$ is the determinant for $\theta \in \Theta_v$, so the notation is consistent {\it in each setting}.  
By this assertion, it suffices to compute the $f$-divergences between the hyperboloid distributions on $\mathbb L$. 

By Theorem \ref{thm:densemixtures} and the change-of-variable in Remark \ref{rmk:mde}  below, 
we have that 
\begin{corollary}\label{cor:densemixtures-Poincare}
The set of statistical mixtures  of the Poincar\'e distributiuons 
is dense in the space of the probability distribution on $\mathbb{H}$ for the topology of the weak convergence. 
\end{corollary}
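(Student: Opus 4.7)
The plan is to transport the density statement of Theorem~\ref{thm:densemixtures} from the hyperboloid $\bbL^2$ to the Poincar\'e upper-half plane $\bbH$ along the concrete diffeomorphism $\Phi: \bbL^2 \to \bbH$ underlying the correspondence principle. Specifically, Remark~\ref{rmk:mde} provides a change-of-variable, and the parameter correspondence $\theta \mapsto \theta_{\bbL}$ of Proposition~\ref{prop:corresp} together with the equality $D_f^{\bbL}\left[P_{\theta_{\bbL}}:P_{\theta^{\prime}_{\bbL}}\right] = D_f^{\bbH}\left[p_{\theta}:p_{\theta^{\prime}}\right]$ for \emph{all} $f$ (taking e.g.\ $f(u)=|u-1|$ forces vanishing total variation between $\Phi_{*} P_{\theta_{\bbL}}$ and $p_{\theta}$) pins $\Phi$ down in the sense that $\Phi_{*} P_{\theta_{\bbL}} = p_{\theta}$ for every $\theta \in \Theta_v$.

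Granted this identification, the remainder is routine. Pushforward along a homeomorphism is continuous for the weak topology and commutes with finite convex combinations. So given any probability measure $\nu$ on $\bbH$, I would form $\mu := (\Phi^{-1})_{*} \nu$ on $\bbL^2$ and invoke Theorem~\ref{thm:densemixtures} to obtain a sequence of hyperboloid mixtures
\[
\mu_n = \sum_{i=1}^{k_n} w_i^{(n)} P_{\theta_i^{(n)}} \xrightarrow[n\to\infty]{\text{weakly}} \mu.
\]
Applying $\Phi_{*}$ yields
\[
\Phi_{*} \mu_n = \sum_{i=1}^{k_n} w_i^{(n)} p_{\widehat{\theta}_i^{(n)}} \xrightarrow[n\to\infty]{\text{weakly}} \Phi_{*} \mu = \nu,
\]
where $\widehat{\theta}_i^{(n)} \in \Theta_v$ is the Poincar\'e parameter matched to the hyperboloid parameter $\theta_i^{(n)}$ via the inverse of $\theta \mapsto \theta_{\bbL}$. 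Since this approximating sequence lies in the Poincar\'e mixture class $\mathcal{M}_{\bbH} := \left\{\sum_i w_i p_{\theta_i}\right\}$, density in the weak topology follows.

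The main obstacle lies in the first step: confirming that the change-of-variable $\Phi$ of Remark~\ref{rmk:mde} is a homeomorphism (so that $\Phi_{*}$ is weakly continuous and bijective on probability measures) and that it truly carries the hyperboloid distribution to the Poincar\'e distribution under the parameter match. This boils down to a Jacobian computation reconciling the carrier $\mu(\mathrm{d}x_1\mathrm{d}x_2)/(1+x_1^2+x_2^2)^{1/2}$ on $\bbL^2$ with $\mathrm{d}x\,\mathrm{d}y/y^2$ on $\bbH$, and checking that the exponent $[\theta_{\bbL},\widetilde{x}]$ matches $(a(x^2+y^2)+2bx+c)/y$ up to the additive constant $|\theta_{\bbL}|/2 = \sqrt{|\theta|}$ that is absorbed into the differing normalizing constants $c_d(|\theta_{\bbL}|)$ versus $\sqrt{|\theta|}e^{2\sqrt{|\theta|}}/\pi$. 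Once this identification is verified, pushforward and the density of hyperboloid mixtures deliver the corollary immediately.
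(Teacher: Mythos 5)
Your proposal matches the paper's proof: the corollary is derived from Theorem~\ref{thm:densemixtures} together with the explicit change of variable $(x,y)\mapsto\bigl(\tfrac{1-x^2-y^2}{2y},\tfrac{x}{y}\bigr)$ of Remark~\ref{rmk:mde}, which supplies exactly the pushforward identity $\Phi_{*}P_{\theta_{\bbL}}=p_{\theta}$ (since $y^{2}p_{\theta}(x,y)=\sqrt{1+X^{2}+Y^{2}}\,p_{\theta_{\bbL}}(X,Y)$ and $\mu_{\bbH}(\dx\dy)=\mu_{\bbL}(\dX\dY)$), after which weak continuity and linearity of the pushforward finish the argument as you describe. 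One caveat: your claim that equality of all $f$-divergences between corresponding pairs ``forces'' $\Phi_{*}P_{\theta_{\bbL}}=p_{\theta}$ is not a valid inference --- agreement of divergences between two families says nothing about any particular map intertwining them --- but you do not need it, since the Jacobian verification you defer is precisely what Remark~\ref{rmk:mde} records.
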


By Proposition \ref{prop:corresp}, 
we have that 
\begin{table}[htbp]
    \centering
    \scalebox{0.8}[0.8]{
    \begin{tabular}{|c|c|}
       Poincar\'e  & 2D hyperboloid \\ \hline
       Theorem \ref{thm:fdivPoincare} & Theorem \ref{thm:fdivhyperboloid}\\
       Proposition \ref{prop:kld-Poincare} & Corollary \ref{cor:exfdiv-2d} (i)\\
       Proposition \ref{prop:SHNC-Poincare}  & Corollary \ref{cor:exfdiv-2d} (ii) (iii) \\
        Theorem \ref{thm:densemixtures} & Corollary \ref{cor:densemixtures-Poincare}
    \end{tabular}
    }
    \caption{Correspondences between the results for the Poincar\'e and hyperboloid models}
    \label{table:correspond}
\end{table}

\begin{remark}[modified differential entropy]\label{rmk:mde}
This correspondence does {\it not} hold for the differential entropy.  
Specifically, Eq.~\eqref{eq:corres} does not imply an assertion for the 2D hyperboloid model corresponding Proposition \ref{prop:h-Poincare} for the Poincar\'e distribution. 
Indeed, if we let 
$$
h^{\mathbb L}[p_{\theta}] 
:= \int_{\mathbb{R}^2} - \log(p_{\theta}(x,y))  p_{\theta}(x,y) \dx\dy, \theta \in \Theta_{\mathbb L},
$$
then, it can happen that $h^{\mathbb L}[p_{\theta^{\mathbb L}}] \ne h[p_{\theta}]$. 
The reason is that $\dx\dy$ is not an invariant measure. 

Now we modify the definition of the differential entropy. 
This modification was already pointed out by Jaynes \cite{Jaynes-1968}.

For $\theta \in \Theta_{\bbL}$, let $\widetilde p_{\theta}(x,y) := \sqrt{1+x^2+y^2} p_{\theta}(x,y), (x,y) \in \mathbb{R}^2$, and $\mu_{\mathbb L}(\dx\dy) := \dfrac{\dx\dy}{\sqrt{1+x^2+y^2}}$. 
Let 
$$
\widetilde{h}^{\mathbb L}[p_{\theta}] 
:= \int_{\mathbb{R}^2} - \log(\widetilde{p_{\theta}}(x,y))  \widetilde{p_{\theta}}(x,y) \mu_{\mathbb L}(\dx\dy), \theta \in \Theta_{\mathbb L}. 
$$

In the same manner, for $\theta \in \Theta_{\bbH}$, let 
$\widetilde p_{\theta}(x,y) := y^2 p_{\theta}(x,y), (x,y) \in \mathbb{R} \times \mathbb{R}_{++}$, and $\mu_{\mathbb H}(\dx\dy) := \dfrac{\dx\dy}{y^2}$. 
Let 
$$
\widetilde{h}^{\mathbb H}[p_{\theta}] 
:= \int_{\mathbb{R} \times \mathbb{R}_{++}} - \log(\widetilde{p_{\theta}}(x,y))  \widetilde{p_{\theta}}(x,y) \mu_{\mathbb H}(\dx\dy), \theta \in \Theta_{\mathbb H}. 
$$
By the change of variable
\[ \mathbb{H} \ni (x,y) \mapsto (X,Y) = \left(\frac{1-x^2-y^2}{2y}, -\frac{x}{y}\right) \in \mathbb{R}^2,  \]
and by recalling the correspondence between the parameters in Eq.~\eqref{eq:corres-norm},  
it holds that 
\[ \widetilde{p_{\theta}}(x,y) = y^2 p_{\theta}(x,y) = \sqrt{1+X^2 + Y^2} p_{\theta^{\mathbb L}} (X,Y) = \widetilde{p_{\theta^{\mathbb L}}}(X,Y),\]
and
\[ \mu_{\mathbb H}(\dx\dy) = \mu_{\mathbb L}(\dX\dY).\]

Hence, 
$$\widetilde{h}^{\mathbb L}[p_{\theta^{\mathbb L}}] 
= \widetilde{h}^{\mathbb H}[p_{\theta}] =  -F^*(\eta) = 1+\log\left(\dfrac{\pi}{\sqrt{|\theta|}}\right) = 1+\log\left(\dfrac{2\pi}{|\theta^{\mathbb L}|}\right).$$ 

It seems more difficult to obtain an explicit formula for the differential entropy for the 2D hyperboloid model. 
\end{remark}

\begin{remark}
By this corresponding principle, Eq.\eqref{eq:1st-reduction} holds also for Poincar\'e distributions. \\
Let $\theta = \mattwotwo{1}{x}{x}{x^2+y^2} \in \Theta_{\bbH}$ for $x+ iy \in \bbH$. 
Then, $P_{t\theta}$ weakly converges to $\delta_{x+iy}$ as $t \to +\infty$. 
Hence, an analog of Theorem \ref{thm:densemixtures} would hold also for Poincar\'e distributions. 
\end{remark}

\section{Numerical computations}\label{sec:numerics}
In this section, we consider numerical computations for $f$-divergences between hyperboloid distributions for $d=2$, in particular for $f$-divergences for which we do not have explicit formulae.   
The techniques we will use below are standard, and we do not introduce any new theoretical methodology for approximations.  
So they might be applicable to all dimensions, but for simplicity we deal with the case of the 2-dimensional case only.  
By the correspondence principle in Proposition \ref{prop:corresp}, they are also applicable to the Poincar\'e distributions. 
We mainly focus on the {\it total variation distance}: 
$$
D_\TV[p:q] := \int_{\bbR^2} \frac{1}{2}|p(x,y)-q(x,y)|\dx\dy.
$$ 
This is also an $f$-divergence with $f(x) = \frac{1}{2}|x-1|$. 

In below, we find triplets of a probability space, a two-dimensional distribution $(X,Y)$ on it and a function $g_{\theta,\theta^{\prime}}$ such that 
\begin{equation}\label{eq:fdiv-transformation}
D_{f}[p_{\theta}:p_{\theta^{\prime}}] = E\left[g_{\theta,\theta^{\prime}}(X,Y)\right]. 
\end{equation}

\subsection{Sampling random variates}\label{sec:samplingvar}

In this subsection, we give a procedure for generating random number  following the hyperboloid distribution. 

As in \cite{Barndorff-Nielsen1978hyperbolic}, the hyperboloid model is expressed by the normal distribution having  random mean and variance governed by the generalized inverse Gaussian distribution. 
There are two different types of randomness. 
We can generate random samples following the 2d hyperboloid distribution by first generating a positive random variable following the generalized inverse Gaussian distribution, and then substituting it into the mean vector and the covariance matrix of the 2d normal distribution, and finally generating a 2d random vector following the 2d normal distribution. 

Specifically, the hyperboloid distribution $P_{\theta}(dx_1 dx_2), \theta = (\theta_0, \theta_1, \theta_2) \in \Theta_v$, is identical with  the two-dimensional normal distribution with its mean vector $\sigma^2 (\theta_1, \theta_2) $ and covariance matrix $\sigma^2 I_2$, 
where $\sigma^2$ is also a random variable governed by a generalized inverse Gaussian distribution and its density function\footnote{The form depends on the dimension. 
For $d \ge 3$, it involves the modified Bessel function of the second kind. 
See \cite{Barndorff-Nielsen1978hyperbolic}.} 
is given by 
\begin{equation*}
\frac{\sqrt{\theta_0^2 - \theta_1^2 - \theta_2^2}\exp(\sqrt{\theta_0^2 - \theta_1^2 - \theta_2^2})}{2\pi} \frac{1}{\sqrt{x}} \exp\left(-\frac{1}{2}\left(\frac{1}{x} + (\theta_0^2 - \theta_1^2 - \theta_2^2) x \right) \right), \ x > 0.
\end{equation*}

Hence, we can obtain the random samples once we generate a random variable following the generalized inverse Gaussian distribution. 
We use the software {\tt Python} and the library {\tt Scipy} and the function {\tt scipy.stats.geninvgauss}. 
It depends on an acceptance-rejection method considered by~\cite{HormannLeydold2014}.

Now we recall $\mathbb{R}^2$ is identified with $\mathbb{L}^2$. 
If the distribution of $(X,Y)$ is the hyperboloid distribution, and 
$g_{\theta, \theta^{\prime}}(x,y) = \displaystyle \frac{1}{2}\left|\frac{p_{\theta^{\prime}}(x,y)}{p_{\theta}(x,y)}- 1\right|$, 
then, Eq.~\eqref{eq:fdiv-transformation} holds 
for $f(x) = \frac{1}{2}|x-1|$.

\subsection{Monte-Carlo method}
Here we use Monte-Carlo methods for approximations of $f$-divergences between the hyperboloid distributions.

(MC1) The first way is the {\it importance sampling}~\cite{gentle2003random}. 
It is a variance reduction technique in Monte-Carlo method by using a distribution with positive density. 
Let $p = p(x) > 0$ be a pdf of a continuous distribution supported on $\mathbb{R}$. 
Consider the scale family $\{p_{\sigma}(x) := (1/\sigma)*p(x/\sigma)\}_{\sigma > 0}$. 
Let $X$ and $Y$ be two independent distributions whose density functions are both $p(x)$. 
Let 
\[ H_{\theta,\theta^{\prime}}(x,y) := f\left(\frac{p_{\theta^{\prime}}(x,y)}{p_{\theta}(x,y)}\right)p_{\theta}(x,y) \]
and 
\begin{equation}\label{eq:newdensity-importance-sampling} 
g_{\theta,\theta^{\prime}}(x,y) := \frac{H_{\theta,\theta^{\prime}}(x,y)}{p_{\sigma}(x)p_{\sigma}(y)}, \ x,y \in \mathbb{R}. 
\end{equation}
Then, Eq.~\eqref{eq:fdiv-transformation} holds, and furthermore, we expect that by the law of large numbers, 
for large $n$, 
\[ D_f (p_{\theta}:p_{\theta^{\prime}}) = E[g_{\theta,\theta^{\prime}}(X,Y)] \approx \frac{1}{n} \sum_{i=1}^{n} g_{\theta,\theta^{\prime}}(X_i, Y_i), \]
where $X_i, Y_i, i \ge 1,$ are independent random variables whose distributions have a density function $p_{\sigma}(x)$ and we assume that $g(X_1, Y_1)$ is integrable. 

There are many candidates for $p(x)$ and $\sigma > 0$.  
However, the integrability of $g_{\theta,\theta^{\prime}}(X,Y)$ is important.

There are many choices of $p = p(x)$. 
In below, we consider two specific choices of $p = p(x)$: 
(i) the logistic distribution, and
(ii) the t-distribution with freedom 7. 
These perform well in the case of the Kullback-Leibler divergence. 

We optimize $\sigma > 0$ by using the same idea as the cross-entropy method. 
We aim at minimizing 
the variance $\textup{Var}(g_{\theta, \theta^{\prime}}(X,Y)) = E[g_{\theta, \theta^{\prime}}(X,Y)^2] - D_{f}(p_{\theta}:p_{\theta^{\prime}})^2$. 
Then, for large $n$, 
\[ E\left[g_{\theta, \theta^{\prime}}(X,Y)^2\right] = \int \frac{H_{\theta,\theta^{\prime}}(x,y)^2}{p_{\sigma}(x)p_{\sigma}(y)p(x)p(y)} p(x)p(y) \dx\dy  \approx \frac{1}{n} \sum_{i=1}^{n} \frac{H_{\theta,\theta^{\prime}}(X_i,Y_i)^2}{p_{\sigma}(X_i)p_{\sigma}(Y_i)p(X_i)p(Y_i)}. \]
We adopt $\sigma > 0$ which minimizes the above integral. 
In numerical computations, we substitute the sample mean above for the integral.  
The variances computed with the optimal $\sigma$ are much smaller than the variances computed with $\sigma = 1$. 

(MC2) The second way is a crude Monte-Carlo method by using change-of-variables. 
The correspondence 
$$(X,Y) \mapsto (x,y) = \left(\frac{X}{\sqrt{1-X^2-Y^2}},\frac{Y}{\sqrt{1-X^2-Y^2}}\right)$$ 
gives a diffeomorphism between $\mathbb D = \{(X,Y) : X^2 + Y^2 < 1\}$ and $\mathbb{R}^2$. 
By the change-of-variable formula, 
\[ D_f (p_{\theta}:p_{\theta^{\prime}}) \]
\[= \int_{\mathbb D} f\left(\frac{p_{\theta^{\prime}}(\frac{X}{\sqrt{1-X^2-Y^2}},\frac{Y}{\sqrt{1-X^2-Y^2}})}{p_{\theta}(\frac{X}{\sqrt{1-X^2-Y^2}},\frac{Y}{\sqrt{1-X^2-Y^2}})}\right)p_{\theta}\left(\frac{X}{\sqrt{1-X^2-Y^2}},\frac{Y}{\sqrt{1-X^2-Y^2}}\right) \frac{\dX \dY}{(1-X^2-Y^2)^2} \]
\[= \int_{r=0}^{1}\int_{\zeta=0}^{2\pi} g_{\theta,\theta^{\prime}}(r,\zeta) dr \frac{\mathrm{d}\zeta}{2\pi}, \]
where we let 
\[ g_{\theta,\theta^{\prime}}(r,\zeta) := 2\pi f\left(\frac{p_{\theta^{\prime}}(\frac{r\cos\zeta}{\sqrt{1-r^2}},\frac{r\sin\zeta}{\sqrt{1-r^2}})}{p_{\theta}(\frac{r\cos\zeta}{\sqrt{1-r^2}},\frac{r\sin\zeta}{\sqrt{1-r^2}})}\right)p_{\theta}\left(\frac{r\cos\zeta}{\sqrt{1-r^2}},\frac{r\sin\zeta}{\sqrt{1-r^2}}\right) \frac{r}{(1-r^2)^2}. \]
Let $U, V$ be the uniform distributions on $[0,1]$ and $[0,2\pi]$ respectively. 
Then, Eq.~\eqref{eq:fdiv-transformation} holds, and furthermore, we expect that by the law of large numbers, 
for large $n$, 
\[ D_f [p_{\theta}:p_{\theta^{\prime}}] = E[g_{\theta,\theta^{\prime}}(U,V)] \approx \frac{1}{n} \sum_{i=1}^{n} g_{\theta,\theta^{\prime}}(U_i, V_i),\]
where $U_i, V_i, i \ge 1,$ are independent random variables and $U_i$ and $V_i$ follow the uniform distribution on $[0,1]$ and $[0,2\pi]$ respectively. 
As in the first case, the integrability of $g_{\theta,\theta^{\prime}}(U_i, V_i)$ is important. 

\subsection{Monte Carlo estimators and the total variation distance}

We do not have an explicit formula for the total variation distance between the hyperboloid distributions, and here we consider numerical computations for the total variation distances. 

In this subsection, we let $f(x) = \frac{1}{2}|x-1|$ in the definition of $g_{\theta, \theta^{\prime}}$.   
Recall that Eq.~\eqref{eq:fdiv-transformation} holds. 

\subsubsection{Theoretical backgrounds}

A naive choice for $g_{\theta, \theta^{\prime}}$ can lead to the divergence of the variance. 

\begin{lemma}\label{lem:var-infinite-rng}
If $(X,Y)$ follows the hyperboloid distribution, and 
$\displaystyle g_{\theta, \theta^{\prime}}(x,y) = \frac{1}{2}\left|\frac{p_{\theta^{\prime}}(x,y)}{p_{\theta}(x,y)}- 1\right|$,
then, for some $(\theta, \theta^{\prime})$, 
\[ \textup{Var}\left(g_{\theta, \theta^{\prime}}(X,Y)\right) = +\infty. \]
\end{lemma}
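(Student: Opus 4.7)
The plan is to recognize the variance as essentially the Neyman chi-squared divergence, and then invoke the explicit blowup example already noted in Remark~\ref{rmk:NeyChi}(i).

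First I would compute
\[
E\bigl[g_{\theta,\theta'}(X,Y)^2\bigr]
= \tfrac{1}{4}\int_{\mathbb{R}^2}\left(\frac{p_{\theta'}(x,y)}{p_{\theta}(x,y)}-1\right)^{\!2} p_{\theta}(x,y)\,\dx\dy
= \tfrac{1}{4}\,D_{\chi^2_N}[p_{\theta}:p_{\theta'}],
\]
where $D_{\chi^2_N}$ is the Neyman chi-squared divergence corresponding to $f(u)=(u-1)^2$. Since $E[g_{\theta,\theta'}(X,Y)] = D_{\TV}[p_{\theta}:p_{\theta'}] \le 1$, the variance is finite if and only if $D_{\chi^2_N}[p_{\theta}:p_{\theta'}]$ is finite. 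So the claim reduces to exhibiting a pair $(\theta,\theta')$ for which the Neyman chi-squared divergence is infinite.

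Next I would take the explicit example flagged in Remark~\ref{rmk:NeyChi}(i): $\theta=(4,0,0)$ and $\theta'=(1,0,0)$, so that $2\theta'-\theta=(-2,0,0)\notin \Theta_{\mathbb{L}^2}$. Writing out the integrand directly from the canonical form of the hyperboloid density,
\[
\frac{p_{\theta'}(x,y)^2}{p_{\theta}(x,y)}
= \frac{c_2(|\theta'|)^2}{c_2(|\theta|)}\,\frac{\exp(-[2\theta'-\theta,\widetilde{x}])}{\sqrt{1+x^2+y^2}}
= \frac{c_2(1)^2}{c_2(4)}\,\frac{\exp\bigl(2\sqrt{1+x^2+y^2}\bigr)}{\sqrt{1+x^2+y^2}},
\]
which grows exponentially in $\sqrt{1+x^2+y^2}$ and is therefore not integrable on $\mathbb{R}^2$. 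Hence $D_{\chi^2_N}[p_{\theta}:p_{\theta'}]=\int p_{\theta'}^2/p_\theta\,\dx\dy -1=+\infty$, and combining with the first step gives $\mathrm{Var}(g_{\theta,\theta'}(X,Y))=+\infty$ for this choice.

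There is no real obstacle here; the only point to be careful about is that the hypothesis $2\theta'-\theta\in\Theta_{\mathbb L^2}$ appearing in Corollary~\ref{cor:exfdiv}(iii) is precisely what fails in the counterexample, and it is exactly that failure that drives the exponential blowup of $p_{\theta'}^2/p_{\theta}$ at infinity. The proof is thus a one-line reduction to Neyman $\chi^2$ followed by the divergent-integral example.
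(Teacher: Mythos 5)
Your proof is correct and follows essentially the same route as the paper: the paper's own argument also takes $\theta=(4,0,0)$, $\theta'=(1,0,0)$ and shows that $g_{\theta,\theta'}^2\,p_{\theta}$ grows like $\exp(\sqrt{1+x^2+y^2})$ at infinity, which is exactly the non-integrability of $p_{\theta'}^2/p_{\theta}$ that you derive. Your explicit identification of $E[g^2]$ with $\tfrac14 D_{\chi^2_N}$ and the link to Remark~\ref{rmk:NeyChi}(i) is a clean repackaging of the same computation rather than a different argument.
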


\begin{proof}
Let $\theta = (4,0,0)$ and $\theta^{\prime} = (1,0,0)$. 
Then, for some large $M > 0$, 
$g_{\theta, \theta^{\prime}}(x,y)^2 p_{\theta}(x,y) \ge \exp((1+x^2+y^2)^{1/2})$ if $(x^2+y^2)^{1/2} \ge M$. 
Hence, 
\[ E\left[g_{\theta, \theta^{\prime}}(X,Y)^2\right] \ge \int_{(x^2+y^2)^{1/2} \ge M} g_{\theta, \theta^{\prime}}(x,y)^2 p_{\theta}(x,y) \dx\dy = +\infty.  \]
\end{proof}

Theoretically, the Student $t$-distribution would be a good choice for the density $p = p(x)$ in the importance sampling, because $g_{\theta, \theta^{\prime}, \sigma}(x,y)$ is bounded and hence good tail estimates hold. 
The following deals with (MC1). 

\begin{lemma}\label{lem:bdd-var-chernoff-t}
Let $\theta, \theta^{\prime} \in \Theta_{\mathbb{L}^2}$.   
Let $p(x)$ be a probability density function of a t-distribution with degree of freedom $m \ge 1$. 
Let $X_i, Y_i, i \ge 1,$ be independent random variables following the $t$-distribution. 
Let $\sigma > 0$. 
Let $g_{\theta, \theta^{\prime},\sigma}(x,y)$ be a function as in Eq.~\eqref{eq:newdensity-importance-sampling}. 
Then, \\
(i) $g_{\theta, \theta^{\prime},\sigma}(x,y)$ is bounded on $\mathbb{R}^2$, that is, $\|g_{\theta, \theta^{\prime},\sigma}\|_{\infty} < +\infty$. \\
(ii) For every $n \ge 1$, 
\begin{equation*} 
\textup{Var}\left(\frac{1}{n} \sum_{i=1}^{n} g_{\theta, \theta^{\prime},\sigma}(X_i,Y_i)\right) 
\le \frac{\|g_{\theta, \theta^{\prime},\sigma}\|_{\infty}^2}{4n}. 
\end{equation*}
(iii) For every $n \ge 1$ and every $t > 0$, 
\[ P\left(\left|\frac{1}{n} \sum_{i=1}^{n} g_{\theta, \theta^{\prime},\sigma}(X_i,Y_i) - D_{\TV} [p_{\theta}:p_{\theta^{\prime}}] \right| > t\right) \le 2\min\left\{\frac{\|g_{\theta, \theta^{\prime},\sigma}\|_{\infty}^2}{\|g_{\theta, \theta^{\prime},\sigma}\|_{\infty}^2 + 4nt^2}, \exp\left(- \frac{nt^2}{\|g_{\theta, \theta^{\prime},\sigma}\|_{\infty}^2}\right)\right\}. \] 
\end{lemma}

Contrary to the Lemma \ref{lem:var-infinite-rng} above, the variance is finite. 

\begin{proof}
(i) By applying \eqref{eq:newdensity-importance-sampling} to $f(x) = \frac{1}{2} |x-1|$,
\[ g_{\theta, \theta^{\prime}, \sigma}(x,y) = \frac{1}{p_{\sigma}(x)} \frac{1}{p_{\sigma}(y)} \frac{|\exp(-[\theta, (\sqrt{1+x^2+y^2},x,y)]) - \exp(-[\theta^{\prime}, (\sqrt{1+x^2+y^2},x,y)])|}{2\sqrt{1+\sum_{i=1}^{d} x_i^2} }  \]
\[ \le \frac{1}{p_{\sigma}(x)} \frac{1}{p_{\sigma}(y)} \left(\exp(-[\theta, (\sqrt{1+x^2+y^2},x,y)]) + \exp(-[\theta^{\prime}, (\sqrt{1+x^2+y^2},x,y)])\right). \]

Since $p(x)$ is the density function of $t$-distribution, 
$\frac{1}{p_{\sigma}(x)}$ and $\frac{1}{p_{\sigma}(y)}$ have at most polynomial growths\footnote{If $m$ is odd, then, they are polynomials. If $m$ is even, they are the square roots of polynomials.}. 
Hence, (i) follows from the comparison of the polynomial growth and the exponential growth of the functions on $\mathbb{R}^2$.  
Let $(x_1, x_2) := (x,y)$ in the following.  
Let $\theta = (\theta_0,\theta_1,\theta_2) \in \Theta_{\mathbb{L}^2}$.  
Then, for sufficiently small $\epsilon > 0$, 
$\theta_{(\epsilon)} := (\theta_0 - \epsilon,\theta_1,\theta_2) \in \Theta_{\mathbb{L}^2}$ and $\inf_{x \in \mathbb{R}^2} [\theta_{(\epsilon)}, \widetilde{x}] = |\theta_{(\epsilon)}| > 0$. 
We see that for every $n_1, n_2 \ge 0$, 
\[ |x_1|^{n_1} |x_2|^{n_2} \exp(-[\theta,\widetilde{x}]) 
= |x_1|^{n_1} |x_2|^{n_2} \exp(-[\theta_{(\epsilon)},\widetilde{x}])\exp\left(-\epsilon \sqrt{1+x_1^2+x_2^2}\right)\]
\[ \le |x_1|^{n_1} \exp(-\epsilon |x_1|/4) |x_2|^{n_2} \exp(-\epsilon |x_2|/4) \exp(-|\theta_{(\epsilon)}|).  \]

(ii) We remark that $g_{\theta, \theta^{\prime},\sigma}$ is non-negative in the case of the total variation distance.  
Hence, $0 \le g_{\theta, \theta^{\prime},\sigma}(X,Y) \le \|g_{\theta, \theta^{\prime}, \sigma}\|_{\infty}$ with probability one and this is equivalent with 
$$ \left|g_{\theta, \theta^{\prime},\sigma}(X,Y) - \frac{\|g_{\theta, \theta^{\prime},\sigma}\|_{\infty}}{2}\right| \le \frac{\|g_{\theta, \theta^{\prime}, \sigma}\|_{\infty}}{2}$$ with probability one. 
We recall the well-known variational characterization of the variance. 
Therefore, 
\[ \textup{Var}\left(g_{\theta, \theta^{\prime},\sigma}(X,Y)\right) = \min_{t \in \mathbb{R}} E\left[(g_{\theta, \theta^{\prime},\sigma}(X,Y)-t)^2 \right] \le E\left[\left(g_{\theta, \theta^{\prime},\sigma}(X,Y)- \frac{\|g_{\theta, \theta^{\prime},\sigma}\|_{\infty}}{2}\right)^2 \right] \le  \frac{\|g_{\theta, \theta^{\prime},\sigma}\|_{\infty}^2}{4}. \]
Now assertion (ii) follows from this estimate.

(iii) This follows from Eq.~\eqref{eq:fdiv-transformation} for $f(x) = \frac{1}{2}|x-1|$, assertion (ii), the Chebyshev-Cantelli inequality (\cite[Exercise 2.3]{Boucheron-2013}) and the Azuma-Hoeffding inequality (\cite[Theorem 2.8]{Boucheron-2013}).  
\end{proof}

The value of $\|g_{\theta, \theta^{\prime},\sigma}\|$ depends on $\sigma$, however we do not consider an optimal $\sigma$ here.

\subsubsection{Numerical computations}

Here we give numerical computations for $D_{f}[p_{\theta}: p_{\theta^{\prime}}]$ with $f(x) = \frac{1}{2}|x-1|$, that is, the total variation distance.  
We consider the following cases: 
\begin{align*}
    (\theta,\theta^{\prime}) \in & \left\{ ((1,0,0),(2,1,1)), ((1,0,0),(3,1,1)), ((1,0,0),(4,1,1)), 
((1,0,0),(4,3,2)),\right. \\
& \left. ((2,1,1),(3,1,1)), ((2,1,1),(4,1,1)), ((3,1,1),(4,1,1)), ((4,1,1),(4,3,2))\right\}.
\end{align*} 

We use the statistical software Python for the random number generation following the hyperboloid distribution. the statistical software R for the Monte-Carlo methods. 
 
The RNG means the direct approximation by the random number generation following the hyperboloid distribution. 
It depends on the approximation 
\[ E[g_{\theta, \theta^{\prime}}(X,Y)] \approx \frac{1}{2n} \sum_{i=1}^{n} \left|\frac{p_{\theta^{\prime}}(X_i,Y_i)}{p_{\theta}(X_i,Y_i)}- 1\right|, \]
where $(X_i, Y_i)$ are independent and identically distributed and $(X_1, Y_1)$ follows the hyperboloid distribution with parameter $\theta$. 

In (MC1), we optimize $\sigma > 0$ in 
\[ \frac{1}{n} \sum_{i=1}^{n} \frac{H_{\theta,\theta^{\prime}}(X_i,Y_i)^2}{p_{\sigma}(X_i)p_{\sigma}(Y_i)p(X_i)p(Y_i)} \]
by using the R package {\tt optimize}. 
(MC1i) deals with the case that $p(x)$ is the logistic distribution and (MC1ii) deals with the case that $p(x)$ is the standard $t$-distribution with degree of freedom $7$. 
We let $n = 10^8$ in each model.

\begin{table}[H]
\centering
\begin{tabular}{|c||c||c|c|c|}\hline
$(\theta,\theta^{\prime})$ & RNG & (MC1i)  &  (MC1ii) &  (MC2)  \\ \hline
 ((1,0,0),(2,1,1))   & 0.4667749 & 0.4684961 & 0.468601 & 0.4684339 \\
  ((1,0,0),(3,1,1))   & 0.4431547 & 0.4310651 & 0.4310781 & 0.4310919 \\
  ((1,0,0),(4,1,1))   & 0.4760025 & 0.4868136  & 0.4868225 & 0.4868233 \\
  ((1,0,0),(4,3,2))   & 0.6855790 & 0.7194658  & 0.7199457 & 0.7193469\\
  ((2,1,1),(3,1,1))   & 0.3125775 & 0.3131345  & 0.3132867 & 0.31312\\
((2,1,1),(4,1,1))     & 0.4486406 & 0.4543952 & 0.4546337 & 0.4544327\\
 ((2,1,1),(4,3,2))     & 0.3862757 & 0.3865376   & 0.3868286 & 0.3864432\\
  ((3,1,1),(4,1,1))     & 0.1636375 & 0.1635603  & 0.163615 & 0.1635727\\
   ((3,1,1),(4,3,2))    & 0.607509 & 0.6070837 & 0.6076672 & 0.607068\\
    ((4,1,1),(4,3,2))   & 0.7106694 & 0.7102112  & 0.7110066 & 0.7103308\\ \hline
\end{tabular}
\caption{numerical computations for $E\left[g_{\theta, \theta^{\prime}}(X,Y)\right]$}
\end{table}

The following are sample variances. 
They approximate $\textup{Var}\left(g_{\theta, \theta^{\prime}}(X,Y)\right)$. 
We do not deal with the case of the random number generation (RNG) by Lemma \ref{lem:var-infinite-rng}. 

\begin{table}[H]\label{table:sample_variance}
\centering
\begin{tabular}{|c||c|c|c|}\hline
$(\theta,\theta^{\prime})$ &  (MC1i)  &  (MC1ii) &  (MC2)  \\ \hline
 ((1,0,0),(2,1,1))    &   0.194733 & 0.1927373 & 1.228994 \\
  ((1,0,0),(3,1,1))   &   0.09103575 & 0.08650681 & 0.2690483 \\
  ((1,0,0),(4,1,1))   &   0.1698632  & 0.1591417 & 0.2302013 \\
  ((1,0,0),(4,3,2))   &   1.210232  & 1.163058  & 9.187897\\
  ((2,1,1),(3,1,1))   &   0.3186532  & 0.3176044 & 0.998442\\
((2,1,1),(4,1,1))     &    0.695998 & 0.6837022 & 1.395525\\
 ((2,1,1),(4,3,2))     &   0.5976724   & 0.5847235 & 5.240727\\
  ((3,1,1),(4,1,1))     &  0.0445869  & 0.04507904 & 0.08575983 \\
   ((3,1,1),(4,3,2))    &   1.917317 &  1.821023 & 8.963698\\
    ((4,1,1),(4,3,2))   &   3.101666  & 2.875278 & 9.606043\\ \hline
\end{tabular}
\caption{numerical computations for $\textup{Var}\left(g_{\theta, \theta^{\prime}}(X,Y)\right)$}
\end{table}

See Sections \ref{sec:R} and  \ref{sec:RNG} for the detailed source codes and other information such as the repetition numbers of random generations.  

\section{Conclusion}\label{sec:concl}

In this paper, we have considered two exponential families of hyperbolic distributions and studied various information-theoretic measures and their underlying information geometry.
The first hyperbolic exponential family is defined on the sample space of the Poincar\'e upper plane model and be either parameterized by 3D vector of equivalent $2\times 2$-symmetric positive-definite matrix (\S\ref{sec:PoincareVM}).  
We proved that all $f$-divergences between Poincar\'e distributions can be expressed using three canonical terms (\S\ref{sec:fdiv}), and reported  their Kullback-Leibler divergence (\S\ref{sec:kld}), their differential entropy (\S\ref{sec:h}). 
The second hyperbolic exponential family is defined in on the sample space of the hyperboloid model in arbitrary dimension.
We prove that statistical mixtures of hyperboloid distributions are universal density approximators of smooth densities (\S\ref{sec:mixuda}), and exhibited a correspondence between hyperboloid and Poincar\'e distributions when the sample space is 2D (\S\ref{sec:correspondence}).
Finally, we described two Monte Carlo methods to estimate numerically $f$-divergences between hyperboloid distributions in 
\S\ref{sec:numerics}.
We expect that these hyperbolic distributions and their mixtures will prove important in machine learning increasingly dealing with hierarchical structured datasets  embedded in hyperbolic spaces~\cite{cho2022rotated,song2022preliminary}.

\bibliographystyle{plain}
\bibliography{PoincareHyperboloidBIB.bib}

\appendix
\section{Calculations with the computer algebra system {\tt Maxima}}\label{app:Maxima} 

The Fisher information metric is expressed using the natural coordinate system as the Hessian of the log-normalizer.
Using {\tt Maxima} (\url{https://maxima.sourceforge.io/}), we can calculate symbolically the Hessian using  the following  snippet code:

\begin{lstlisting}[caption={Calculate symbolically Fisher metric and the components of the cubic tensor for the Poincar\'e distributions},captionpos=t,frame=single,xleftmargin=4pt,xrightmargin=3.5pt,linewidth=1\linewidth,
basicstyle=\footnotesize, 
commentstyle=\footnotesize,
keywordstyle=\color{blue},
breaklines=true]
F(a ,b, c):=log ( %pi/((sqrt (a*c-b*b) *exp(2* sqrt (a*c-b*b) ) )) ) ;
hessian(F(a,b,c),[a,b,c]);
tex(ratsimp(%));
/* Calculate T_{123} */
derivative(F(a,b,c),a,1);derivative(%,b,1);derivative(%,c,1);
tex(ratsimp(%));
\end{lstlisting}

\section{Calculations with the statistical software {\tt R}}\label{sec:R} 

\begin{lstlisting}[caption={Calculate in R the KLD and TVD for between hyperboloid distributions},captionpos=t,frame=single,xleftmargin=4pt,xrightmargin=3.5pt,linewidth=1\linewidth,
basicstyle=\footnotesize, 
commentstyle=\footnotesize,
keywordstyle=\color{blue},
breaklines=true]
basedensity <- function(x,z,w)
(sqrt(x[1]^2-x[2]^2-x[3]^2)*exp(sqrt(x[1]^2-x[2]^2-x[3]^2))/(2*pi))
*(exp(-x[1]*sqrt(1+z^2+w^2) +x[2]*z + x[3]*w)/sqrt(1+z^2+w^2))

H <- function(a,b,z,w) abs(basedensity(a,z,w)-basedensity(b,z,w))/2

ELOGIS <- function(s,a,b,z,w) 
H(a,b,z,w)^2*(1+exp(z))^2*exp(-z)*(1+exp(w))^2*exp(-w)
*(1+exp(z/s))^2*exp(-z/s)*(1+exp(w/s))^2*exp(-w/s)*s^2 #(1i)

plogiss <- function(s,x,z,w) 
basedensity(x,z,w)
*(1+exp(z/s))^2*exp(-z/s)*(1+exp(w/s))^2*exp(-w/s)*s^2 #(1i)

ET7 <- function(s,a,b,z,w) 
H(a,b,z,w)^2*(s^2*25*pi^2*((z/s)^2+7)^4*((w/s)^2+7)^4/(7*5488^2))
*(25*pi^2*((z)^2+7)^4*((w)^2+7)^4/(7*5488^2)) #(1ii)

pt7s <- function(s,x,z,w) 
basedensity(x,z,w)*(s^2*25*pi^2*((z/s)^2+7)^4*((w/s)^2+7)^4/(7*5488^2))  #(1ii)

bdpolar <- function(x,r,s) 
basedensity(x,r*cos(s)/sqrt(1-r^2),r*sin(s)/sqrt(1-r^2))*2*pi*r/(1-r^2)^2 #(2) 

TVD <- function(x,y) mean(abs(x-y)/2)  #total variation
VARTVD <- function(x,y) var(abs(x-y)/2) #sample variance of TVD

#The repetition number is $10^8$. 

#(MC1i)
z <- rlogis(10^8);w <- rlogis(10^8) #random number generation

ELOGIS01 <- function(s) mean(ELOGIS(s,c(1,0,0),c(2,1,1),z,w))

optimize(ELOGIS01,c(-10^4,10^4))$minimum
[1] 1.346247

x0 <- plogiss(1.346247,c(1,0,0),1.346247*z,1.346247*w)
x1 <- plogiss(1.346247,c(2,1,1),1.346247*z,1.346247*w)

TVD(x0,x1)
[1] 0.4684961
VARTVD(x0,x1)
[1] 0.194733
     
#(MC1ii) 
z <- rt(10^8,df=7);w <- rt(10^8,df=7) #random number generation

ET701 <- function(s) mean(ET7(s,c(1,0,0),c(2,1,1),z,w))

optimize(ET701,c(-10^4,10^4))$minimum
[1] 2.127577

y0 <- pt7s(2.127577,c(1,0,0),2.127577*z,2.127577*w)
y1 <- pt7s(2.127577,c(2,1,1),2.127577*z,2.127577*w)

TVD(y0,y1)
[1] 0.468601
VARTVD(y0,y1)
[1] 0.1927373

#(MC2) 
r <- runif(10^8,0,1-10^{-4});s <- runif(10^8,0,2*pi) #random number generation
z0 <- bdpolar(c(1,0,0),r,s);z1 <- bdpolar(c(2,1,1),r,s)
TVD(z0,z1)
[1] 0.4684339
VARTVD(z0,z1) 
[1] 1.228994

\end{lstlisting}

The computation for (2) does not work well if we let 
\texttt{r <- runif(10\^\ 7,0,1)}.
The reason will be the fact that the function $r \mapsto r/(1-r^2)^2$ has bad integrability around $r = 1$. 

\section{Random number generation of hyperboloid variates in {\tt Python}}\label{sec:RNG}

\begin{lstlisting}[caption={Random variate generation of a hyperboloid distribution},captionpos=t,frame=single,xleftmargin=4pt,xrightmargin=3.5pt,linewidth=1\linewidth,
basicstyle=\footnotesize, 
commentstyle=\footnotesize,
keywordstyle=\color{blue},
breaklines=true]
import random
import statistics
import math
import numpy
from scipy.stats import geninvgauss

n = 100000000
def mink(a,b):
    return a[0]*b[0]-a[1]*b[1]-a[2]*b[2]
def hyperboloid2Ddensity(a,x1,x2):
    return (1/(2*math.pi))*numpy.sqrt(mink(a,a))*numpy.exp(numpy.sqrt(mink(a,a))) 
    *numpy.exp(-(a[0]*numpy.sqrt(1+x1*x1+x2*x2)-a[1]*x1-a[2]*x2))/numpy.sqrt(1+x1*x1+x2*x2)
a = [2,1,1] 
b = [1,0,0]
q, c = 0.5, math.sqrt(mink(a,a))
s = geninvgauss.rvs(q,c, size=n)/c
ss = numpy.sqrt(s)
x1 = numpy.random.normal(loc=s*a[1], scale=ss,size=n)
x2 = numpy.random.normal(loc=s*a[2], scale=ss,size=n)
d1 = numpy.array(hyperboloid2Ddensity(a,x1,x2))
d2 = numpy.array(hyperboloid2Ddensity(b,x1,x2))
d3 = abs(d2/d1 - 1)/2

TVD = statistics.mean(d3)
print(TVD)
\end{lstlisting}

\section{Conversions between main models of hyperbolic geometry}\label{sec:models}

A probability density function $p_{M_1}(x,y)$ defined in one model $M_1$ of hyperbolic geometry can be transferred into another probability density function $p_{M_2}(x',y')$ of another model $M_2$ of hyperbolic geometry.
Figure~\ref{fig:convmodels} displays the conversions between the six main models of hyperbolic geometry.

\begin{figure}[H]
\centering
\includegraphics[width=0.7\columnwidth]{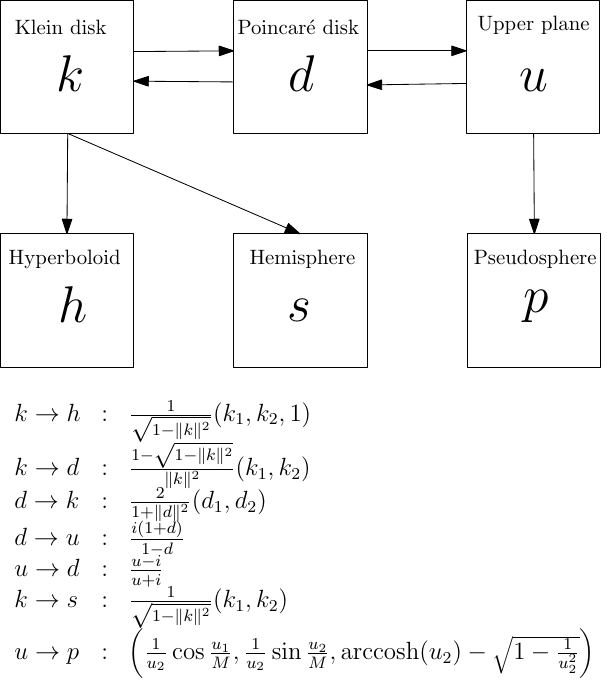}%
\caption{Conversion procedures between six main models of hyperbolic geometry.}%
\label{fig:convmodels}%
\end{figure}
\def\Jac{\mathrm{Jac}}

Let $(x',y')=m_{12}(x,y)$ denote the smooth differentiable mapping to convert $(x,y)$ of $M_1$ to $(x',y')$ of $M_2$ 
(with $(x,y)=m_{21}(x,y)=m_{12}^{-1}(x,y)$). 
We have
$$
p_{M_2}(x',y')=\Jac_{m_{12}^{-1}}(x',y') p_{M_1}(x,y),
$$
with 
$$
\Jac_{m_{12}^{-1}}(x',y')=\left[\frac{\partial m_{12}^{-1}(x,y)}{\partial(x,y)}\right].
$$

Furthermore, since the $f$-divergences between two pds are invariant by smooth diffeomorphisms~\cite{qiao2010study} of their  sample space, we have for any two densities $p_{M_1}$ and $p_{M_1}'$ with corresponding densities $p_{M_2}$ and $p_{M_2}'$ on the model $M_2$ :
$$
D_f[p_{M_1}:p_{M_1}']=D_f[p_{M_2}:p_{M_2}'].
$$


\section{Fisher-Rao geometry of the hyperboloid distribution}\label{app:FIM}

The Fisher Information Matrix (FIM) is covariant under reparameterization.
In \cite[p.126]{barndorff1989decomposition} , the FIM for a different local coordinate is reported. 
We have the following diffeomorphism: 
\[ \Theta \simeq (0,\infty) \times \mathbb{R}^2 \]
\[ \theta = (\theta_0, \theta_1, \theta_2) \mapsto (|\theta|, \theta_1/|\theta|, \theta_2/|\theta|)= (y_1,y_2,y_3)  \]

For $y = (y_1,y_2,y_3) \in (0,\infty) \times \mathbb{R}^2$, 
\[ 
(g_{i j}(y))_{i,j} 
= 
\begin{pmatrix}
y_1^{-2} & 0  & 0 \\
0 & \frac{(1+y_1)(1+y_3^2)}{1 + y_2^2 + y_3^2} & - \frac{(1+y_1)y_2 y_3}{1 + y_2^2 + y_3^2}  \\
0 & - \frac{(1+y_1)y_2 y_3}{1 + y_2^2 + y_3^2}  & \frac{(1+y_1)(1+y_2^2)}{1 + y_2^2 + y_3^2}
\end{pmatrix}.
\]
This expression is different from \eqref{eq:fim-hyp-2D}. 

The inverse matrix is given by 
\[ 
(g^{i j}(y))_{i,j} 
= 
\begin{pmatrix}
y_1^{2} & 0  & 0 \\
0 & \frac{1+y_2^2}{1 + y_1} & \frac{y_2 y_3}{1 + y_1}  \\
0 &  \frac{y_2 y_3}{1 + y_1}  & \frac{1+y_3^2}{1 + y_1}
\end{pmatrix}
\]

We use Wolfram Mathematica\textregistered{} for computations. 
We use functions given by \cite{Fairchild2020}. 
\begin{verbatim}
g = {{1/y1^2, 0, 0}, 
     {0, (1 + y1)*(1 + y3^2)/(1 + y2^2 + y3^2), -(1 + y1)*y2*y3/(1 + y2^2 + y3^2)}, 
     {0, -(1 + y1)*y2*y3/(1 + y2^2 + y3^2), (1 + y1)*(1 + y2^2)/(1 + y2^2 + y3^2)}};
xx = {y1, y2, y3};
\end{verbatim}

\subsection{Christoffel symbol of the second kind}

We use the following function: 
\begin{verbatim}
Christoffel[coordinates_,MetricTensor_]:=Module[ {n = Length[coordinates]},
Simplify[ Inverse[MetricTensor].(1/2Table[ D[MetricTensor[[s, j]], coordinates[[k]]] +
D[MetricTensor[[s, k]], coordinates[[j]]] -
D[MetricTensor[[j, k]], coordinates[[s]]], {s, n}, {j, n}, {k, n}]) ] ]
\end{verbatim}

The input is 
\begin{verbatim}
Christoffel[xx,g]
\end{verbatim}

The output is 
\begin{verbatim}
{{{-(1/y1), 0, 0}, 
{0, -((y1^2 (1 + y3^2))/(2 (1 + y2^2 + y3^2))), (y1^2 y2 y3)/(2 (1 + y2^2 + y3^2))}, 
{0, (y1^2 y2 y3)/(2 (1 + y2^2 + y3^2)), -((y1^2 (1 + y2^2))/(2 (1 + y2^2 + y3^2)))}}, 
{{0, 1/(2 + 2 y1), 0}, 
{1/(2 + 2 y1), -((y2 (1 + y3^2))/(1 + y2^2 + y3^2)), (y2^2 y3)/(1 + y2^2 + y3^2)}, 
{0, (y2^2 y3)/(1 + y2^2 + y3^2), -((y2 + y2^3)/(1 + y2^2 + y3^2))}}, 
{{0, 0, 1/(2 + 2 y1)}, 
{0, -((y3 + y3^3)/(1 + y2^2 + y3^2)), (y2 y3^2)/(1 + y2^2 + y3^2)}, 
{1/(2 + 2 y1), (y2 y3^2)/(1 + y2^2 + y3^2), -(((1 + y2^2) y3)/(1 + y2^2 + y3^2))}}}
\end{verbatim}

\subsection{Riemannian curvature tensor}

\begin{verbatim}
RiemannContravariant[coordinates_, MetricTensor_] := 
Module[{n, c}, n = Length[coordinates]; c = Christoffel[coordinates, MetricTensor]; 
Simplify@Table[ D[c[[i, j, l]], coordinates[[k]]] - D[c[[i, j, k]], coordinates[[l]]] 
+ (c . c)[[i, k, l, j]] - (c . c)[[i, l, k, j]], {i, n}, {j, n}, {k, n}, {l, n}]]
\end{verbatim}

The input is 
\begin{verbatim}
Contravariant[xx,g]
\end{verbatim}

The output is 
{\scriptsize
\begin{verbatim}
{{{{0, 0, 0}, {0, 0, 0}, {0, 0, 0}}, 
{{0, -((y1 (2 + y1) (1 + y3^2))/(4 (1 + y1) (1 + y2^2 + y3^2))), (y1 (2 + y1) y2 y3)/(4 (1 + y1) (1 + y2^2 + y3^2))}, 
{(y1 (2 + y1) (1 + y3^2))/(4 (1 + y1) (1 + y2^2 + y3^2)), 0, 0}, 
{-((y1 (2 + y1) y2 y3)/(4 (1 + y1) (1 + y2^2 + y3^2))), 0, 0}}, 
{{0, (y1 (2 + y1) y2 y3)/(4 (1 + y1) (1 + y2^2 + y3^2)), -((y1 (2 + y1) (1 + y2^2))/(4 (1 + y1) (1 + y2^2 + y3^2)))}, 
{-((y1 (2 + y1) y2 y3)/(4 (1 + y1) (1 + y2^2 + y3^2))), 0, 0}, 
{(y1 (2 + y1) (1 + y2^2))/(4 (1 + y1) (1 + y2^2 + y3^2)), 0, 0}}}, 
{{{0, (2 + y1)/(4 y1 (1 + y1)^2), 0}, {-((2 + y1)/(4 y1 (1 + y1)^2)), 0, 0}, {0, 0, 0}}, 
{{0, 0, 0}, {0, 0, ((2 + y1)^2 y2 y3)/(4 (1 + y1) (1 + y2^2 + y3^2))}, 
{0, -(((2 + y1)^2 y2 y3)/(4 (1 + y1) (1 + y2^2 + y3^2))), 0}}, 
{{0, 0, 0}, {0, 0, -(((2 + y1)^2 (1 + y2^2))/(4 (1 + y1) (1 + y2^2 + y3^2)))}, 
{0, ((2 + y1)^2 (1 + y2^2))/(4 (1 + y1) (1 + y2^2 + y3^2)), 0}}}, 
{{{0, 0, (2 + y1)/(4 y1 (1 + y1)^2)}, {0, 0, 0}, {-((2 + y1)/(4 y1 (1 + y1)^2)), 0, 0}}, 
{{0, 0, 0}, {0, 0, ((2 + y1)^2 (1 + y3^2))/(4 (1 + y1) (1 + y2^2 + y3^2))}, 
{0, -(((2 + y1)^2 (1 + y3^2))/(4 (1 + y1) (1 + y2^2 + y3^2))), 0}}, 
{{0, 0, 0}, {0, 0, -(((2 + y1)^2 y2 y3)/(4 (1 + y1) (1 + y2^2 + y3^2)))}, 
{0, ((2 + y1)^2 y2 y3)/(4 (1 + y1) (1 + y2^2 + y3^2)), 0}}}}
\end{verbatim}
}

\subsection{Sectional curvature}

\begin{verbatim}
RiemannCovariant[coordinates_, MetricTensor_] := 
 MetricTensor . RiemannContravariant[coordinates, MetricTensor]
\end{verbatim}

The input is 
\begin{verbatim}
Simplify[RiemannCovariant[xx, g]]
\end{verbatim}

{\scriptsize
\begin{verbatim}
{{{{0, 0, 0}, {0, 0, 0}, {0, 0, 0}}, 
{{0, -(((2 + y1) (1 + y3^2))/(4 y1 (1 + y1) (1 + y2^2 + y3^2))), ((2 + y1) y2 y3)/(4 y1 (1 + y1) (1 + y2^2 + y3^2))}, 
{((2 + y1) (1 + y3^2))/(4 y1 (1 + y1) (1 + y2^2 + y3^2)), 0, 0}, 
{-(((2 + y1) y2 y3)/(4 y1 (1 + y1) (1 + y2^2 + y3^2))), 0, 0}}, 
{{0, ((2 + y1) y2 y3)/(4 y1 (1 + y1) (1 + y2^2 + y3^2)), -(((2 + y1) (1 + y2^2))/(4 y1 (1 + y1) (1 + y2^2 + y3^2)))}, 
{-(((2 + y1) y2 y3)/(4 y1 (1 + y1) (1 + y2^2 + y3^2))), 0, 0}, 
{((2 + y1) (1 + y2^2))/(4 y1 (1 + y1) (1 + y2^2 + y3^2)), 0, 0}}}, 
{{{0, ((2 + y1) (1 + y3^2))/(4 y1 (1 + y1) (1 + y2^2 + y3^2)), -(((2 + y1) y2 y3)/(4 y1 (1 + y1) (1 + y2^2 + y3^2)))}, 
{-(((2 + y1) (1 + y3^2))/(4 y1 (1 + y1) (1 + y2^2 + y3^2))), 0, 0}, 
{((2 + y1) y2 y3)/(4 y1 (1 + y1) (1 + y2^2 + y3^2)), 0, 0}}, 
{{0, 0, 0}, {0, 0, 0}, {0, 0, 0}}, 
{{0, 0, 0}, {0, 0, -((2 + y1)^2/(4 (1 + y2^2 + y3^2)))}, {0, (2 + y1)^2/(4 (1 + y2^2 + y3^2)), 0}}}, 
{{{0, -(((2 + y1) y2 y3)/(4 y1 (1 + y1) (1 + y2^2 + y3^2))), ((2 + y1) (1 + y2^2))/(4 y1 (1 + y1) (1 + y2^2 + y3^2))}, 
{((2 + y1) y2 y3)/(4 y1 (1 + y1) (1 + y2^2 + y3^2)), 0, 0}, 
{-(((2 + y1) (1 + y2^2))/(4 y1 (1 + y1) (1 + y2^2 + y3^2))), 0, 0}}, 
{{0, 0, 0}, {0, 0, (2 + y1)^2/(4 (1 + y2^2 + y3^2))}, {0, -((2 + y1)^2/(4 (1 + y2^2 + y3^2))), 0}}, 
{{0, 0, 0}, {0, 0, 0}, {0, 0, 0}}}}
\end{verbatim}
}

The input and output are 
\begin{verbatim}
RiemannCovariant[xx, g][[1, 3, 1, 3]]/(g[[3, 3]]*g[[1, 1]] - g[[1, 3]]^2)

-((y1 (2 + y1))/(4 (1 + y1)^2))

RiemannCovariant[xx, g][[1, 2, 1, 2]]/(g[[2, 2]]*g[[1, 1]] - g[[1, 2]]^2)

-((y1 (2 + y1))/(4 (1 + y1)^2))

RiemannCovariant[xx, g][[2, 3, 2, 3]]/(g[[3, 3]]*g[[2, 2]] - g[[2, 3]]^2)

-((2 + y1)^2/(4 (1 + y1)^2))
\end{verbatim}

\subsection{Ricci curvature}

\begin{verbatim}
RicciCurvature[coordinates_, MetricTensor_] := 
 Simplify@TensorContract[RiemannContravariant[coordinates, MetricTensor], {{1, 3}}]
\end{verbatim}

The input is 
\begin{verbatim}
RicciCurvature[xx, g]
\end{verbatim}

The output is 
{\small
\begin{verbatim}
{{-((2 + y1)/(2 y1 (1 + y1)^2)), 0, 0}, 
{0, -(((2 + y1) (1 + y3^2))/(2 (1 + y2^2 + y3^2))), ((2 + y1) y2 y3)/(2 (1 + y2^2 + y3^2))}, 
{0, ((2 + y1) y2 y3)/(2 (1 + y2^2 + y3^2)), -(((2 + y1) (1 + y2^2))/(2 (1 + y2^2 + y3^2)))}}
\end{verbatim}

}

It is easy to see the manifold is complete and simply-connected. 
Thus we see that 
\begin{proposition}\label{prop:Hadamard-not-Einstein}
The statistical manifold of the
$2$-dimensional hyperboloid distribution is Hadamard but not Einstein. 
\end{proposition}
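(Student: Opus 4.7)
The proof splits into two independent claims (Hadamard and non-Einstein), and for both I would lean on the tensors explicitly displayed in the preceding subsection.

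For the Hadamard property, three ingredients must be assembled: simple connectedness, geodesic completeness, and non-positive sectional curvature. In the coordinates $y = (y_1, y_2, y_3) \in (0, \infty) \times \mathbb{R}^2$ the parameter space is diffeomorphic to $\mathbb{R}^3$ via $(y_1, y_2, y_3) \mapsto (\log y_1, y_2, y_3)$, which settles simple connectedness. For completeness, I would set $u = \log y_1$ and observe that the Fisher metric acquires the warped-product form
\[
g = du^2 + (1+e^u)\, \widetilde g(y_2, y_3),
\]
where $\widetilde g$ denotes the $2\times 2$ $(y_2, y_3)$-block of the FIM with the common factor $(1+y_1)$ stripped off. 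A direct substitution shows that $\widetilde g$ is the pullback of the Minkowski form along the hyperboloid chart $(y_2, y_3) \mapsto (\sqrt{1+y_2^2+y_3^2}, y_2, y_3)$, so $(\mathbb{R}^2, \widetilde g)$ is isometric to the hyperbolic plane and is therefore complete. Since the base $\mathbb{R}$ is complete and the warping function $f(u) = \sqrt{1+e^u}$ is smooth and bounded below by $1$, the standard completeness theorem for warped products gives geodesic completeness of the full manifold.

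For non-positive sectional curvature, I would quote the explicit curvatures already computed: for $y_1 > 0$ the coordinate-plane sectional curvatures
\[
K_{12} = K_{13} = -\frac{y_1(2+y_1)}{4(1+y_1)^2}, \qquad K_{23} = -\frac{(2+y_1)^2}{4(1+y_1)^2}
\]
are both strictly negative. In the warped-product picture these are precisely the mixed and fiber sectional curvatures; by the standard O'Neill identity expressing the sectional curvature of an arbitrary 2-plane in a warped product as a non-negative linear combination of $-f''/f$ and $(K_{\widetilde g} - (f')^2)/f^2$, every sectional curvature is non-positive. Combined with the previous paragraph, this confirms the Hadamard property.

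For the non-Einstein claim, it is enough to produce two diagonal Ricci-to-metric ratios that disagree. From the Ricci tensor and metric displayed above,
\[
\frac{\mathrm{Ric}_{11}}{g_{11}} = -\frac{y_1(2+y_1)}{2(1+y_1)^2}, \qquad \frac{\mathrm{Ric}_{22}}{g_{22}} = -\frac{2+y_1}{2(1+y_1)},
\]
which evaluate at $y_1 = 1$ to $-3/8$ and $-3/4$ respectively, so they disagree as functions of $y_1$. Hence $\mathrm{Ric}$ is not a scalar multiple of $g$ and the manifold is not Einstein. The only mildly delicate step in the plan is the passage from the three coordinate-plane sectional curvatures to arbitrary 2-planes, which I would handle via the warped-product formalism rather than by a brute Riemann-tensor expansion; the remaining pieces are one-line checks against the explicit tensors.
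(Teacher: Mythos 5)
Your proof is correct, and while it rests on exactly the same explicit tensors that the paper computes symbolically in Appendix~F (the FIM in the coordinates $(|\theta|,\theta_1/|\theta|,\theta_2/|\theta|)$, the three coordinate-plane sectional curvatures, and the Ricci tensor), it is organized around a structural observation the paper does not make: the metric is a warped product $du^2+(1+e^u)\,\widetilde g$ over a line with fiber the hyperbolic plane. This buys you two things the paper leaves implicit. First, the paper disposes of completeness and simple connectedness with ``it is easy to see''; your warped-product completeness argument (complete base, complete fiber, warping function bounded below) is an actual proof. Second, and more substantively, the paper only exhibits non-positivity of the sectional curvature on the three \emph{coordinate} $2$-planes, which does not by itself bound the curvature of an arbitrary $2$-plane; your reduction — every $2$-plane in a $3$-manifold meets the vertical distribution, so after choosing a vertical unit vector in the plane the cross term $\langle R(e_0,e_2)e_2,e_1\rangle$ vanishes by the warped-product curvature identities and $K$ is a convex combination of $-f''/f$ and $(K_{\widetilde g}-(f')^2)/f^2$ — is precisely the missing step, and your values $-y_1(2+y_1)/\left(4(1+y_1)^2\right)$ and $-(2+y_1)^2/\left(4(1+y_1)^2\right)$ match the paper's $K_{12}=K_{13}$ and $K_{23}$, confirming the identification. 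The non-Einstein half (two unequal diagonal Ricci-to-metric ratios) coincides with what the paper's computation yields. In short: same computational core, but your route closes a genuine gap in the published argument rather than merely reproducing it.
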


Fisher-Rao manifolds which are Hadamard are useful to prove convergence of optimization algorithms \cite{bacak2014convex}.

\end{document}